\definecolor{red}{rgb}{1,0,0}
\definecolor{blue}{rgb}{0,0,1}
\definecolor{green}{rgb}{0,0.5,0}
\definecolor{magenta}{rgb}{1,0,1}
\newcommand{\ceil}[1]{\left\lceil#1\right\rceil}
\newtheorem{thm}{Theorem}
\newtheorem{defn}[thm]{Definition}
\newtheorem{remark}[thm]{Remark}
\newcommand{\be}{\begin{equation}}
\newcommand{\ee}{\end{equation}}
\newcommand{\bea}{\begin{eqnarray}}
\newcommand{\eea}{\end{eqnarray}}
\newcommand{\nn}{\nonumber}
\newcommand{\N}{{\cal N}}
\renewcommand{\vec}[1]{\boldsymbol{#1}}
\newcommand{\R}{\mathcal{R}}
\newcommand{\F}{\mathcal{F}}
\newcommand{\floor}[1]{\left\lfloor #1 \right \rfloor}
\newcommand{\cfdom}{\mathcal{D}}
\newcommand{\cfdomlim}{\tilde{\mathcal{D}}}
\newcommand{\DualGraph}[1]{G^{\rm dual}(#1)}
\newcommand{\LoopyLapl}{\tilde{L}^{\rm loopy}}
\newcommand{\MetaGraph}[1]{\tilde{#1}}
\newtheorem{lemma}{Lemma}
\newtheorem{corr}{Corollary}
\begin{document}
\bibliographystyle{apsrev}
\title{Cycle flows and multistabilty in oscillatory networks: an overview}

\author{Debsankha Manik}
\affiliation{Network Dynamics, Max Planck Institute for Dynamics and Self-Organization (MPIDS), 37077 G\"ottingen, Germany}

\author{Marc Timme}
\affiliation{Network Dynamics, Max Planck Institute for Dynamics and Self-Organization (MPIDS), 37077 G\"ottingen, Germany}
\affiliation{Faculty of Physics, Georg August University G\"ottingen, 37073 G\"ottingen,
  Germany}
\affiliation{Technical University Darmstadt, 64289 Darmstadt, Germany}

  \author{Dirk Witthaut}
\affiliation{Forschungszentrum J\"ulich, Institute for Energy and Climate Research -
	Systems Analysis and Technology Evaluation (IEK-STE),  52428 J\"ulich, Germany}
\affiliation{Institute for Theoretical Physics, University of Cologne, 
		50937 K\"oln, Germany}

\date{\today }

\begin{abstract}
The functions of many networked systems in physics, biology or engineering rely
on a coordinated or synchronized dynamics of its constituents. In power grids for
example, all generators must synchronize and run at the \emph{same frequency} and their phases need to
appoximately lock to guarantee a steady power flow. Here, we analyze the
existence and multitude of such phase-locked states. Focusing on edge and cycle
flows instead of the nodal phases we derive rigorous results on the existence
and number of such states. Generally, multiple phase-locked states coexist
in networks with strong edges, long elementary cycles and a homogeneous distribution
of natural frequencies or power injections, respectively. We offer an algorithm to
systematically compute multiple phase-locked states and
demonstrate some surprising dynamical consequences of multistability.

\end{abstract}

\maketitle


\section{From Kuramoto oscillators to power grids}

Coupled oscillator models are ubiquitous in science and technology,
describing the collective dynamics of various systems in
micro to  macro scale. Research on coupled
oscillators dates back to Christian Huygens, who noticed that
two clocks synchronize when they are coupled \cite{Huyg93}.
One of the most important mathematical models was introduced
by Kuramoto \cite{Kura75,Kura84} and successfully applied
to describe the collective dynamics 
of coupled Josephson junctions \cite{Wies96},
neuronal networks \cite{Vare01},
chemical oscillators \cite{Kiss02},
and a variety of other synchronization phenomena
\cite{Stro00,Aceb05,Aren08,dirk2014kuramoto}.

That model  \cite{Kura84} describes the dynamics of $N$ coupled limit cycle 
oscillators. The equations of motions for the 
phases $\theta_j, \; j \in \left\{1,\ldots,N\right\}$ are given by
\be
   \frac{d}{dt} \theta_j = \omega_j  + \sum_{\ell = 1}^N
      K _{j,\ell} \sin(\theta_\ell - \theta_j).
   \label{eqn:kuramoto}
\ee 
The coupling matrix is assumed to be symmetric, $K _{j,\ell} = K _{\ell,j}$ and the
$\omega_j$ are the natural frequencies of the oscillators. Throughout this 
article we consider systems where all
$K _{j,\ell} \geq 0$.

A similar model of second-order oscillators describes the collective phenomena of 
animal flocks \cite{Erme91,Ha10} or human crowds \cite{Stro05} as well as the 
coarse-scale dynamics of power grids \cite{Berg81,Fila08,12powergrid,Mott13,Dorf13,13powerlong,schafer2015decentral}.
For power grids, for instance, the units $j$ describe synchronous machines, generators or motors, 
whose state is completely described by their phase $\theta_j$ and the phase 
velocity $\dot \theta_j$ relative to the reference frequency of the grid, 
typically  rotating at 50 Hz or 60 Hz.  
The acceleration (deceleration) of the machines is proportional to the sum of 
the mechanical 
power $P_j$ generated (consumed) by the machine including damping and
the electric power exchanged with the grid. The detailed equations of motion 
are given by
\be
   M_j \frac{d^2}{dt^2} \theta_j + D_j \frac{d}{dt} \theta_j 
          = P_j   + \sum_{\ell = 1} K _{j,\ell} \sin(\theta_\ell - \theta_j),
  \label{eqn:power}
\ee
where $M_j$ is an inertia term and $D_j$ a damping constant.
The coupling constants $K_{j,\ell} = U^2 B_{j,\ell}$ are determined
by the voltage $U$ of the grid, which is assumed to be constant, and the admittance $B_{j,\ell}$
of the electrical transmission line joining node $j$ and node $\ell$.  
The flow of electric real power from node $\ell$ to node $j$ is
\be
   \label{eq:def-flow} 
    F_{j,\ell}= K _{j,\ell} \sin(\theta_\ell - \theta_j)  =  K _{j,\ell}  S_{j,\ell}.
\ee
It is useful to describe the interaction topology of the system as a weighted graph $G(V,E)$, whose vertex set $V$ 
is identical with the set of oscillators, and edge set $E$ is given by the set of all inter-oscillator 
coupling pairs, i.e., all pairs with $K _{\ell,j} > 0$. We use the term network \cite{Newm10} (rather than 
the term graph)  for the entire system with given natural frequencies $\omega_j$ or the powers $P_j$.

Here we distinguish two types of synchronization in oscillator networks.
Traditionally, the emergence of partial synchrony has received the most
interest in the physics community \cite{Kura75,Kura84,Stro00,Aceb05}.
In his seminal work, Kuramoto investigated a set of oscillators with global coupling,
$K_{j,\ell} = K/N$ and natural frequencies drawn at random from a unimodal symmetric 
distribution $g(\omega)$. If the coupling constant $K$ exceeds a critical value $K_c$, 
a fraction of the oscillators start to synchronize in the sense that they rotate at the 
same angular velocity, although their natural frequencies differ. 
In this state, called ``frequency locking'', the phases of parts of the oscillators are ordered, but they are not
strictly phase-locked, such that the phase difference of two synchronized
oscillators $(\theta_j - \theta_\ell)$ is generally small, but not constant.

In this article, we analyze the properties of \emph{globally phase-locked states},
where all oscillators synchronize and the phase differences $(\theta_j - \theta_\ell)$ 
are constant for all pairs $(j,\ell)$.  These states are especially important for
power grids, as they describe the regular synchronous operation of the grid
\cite{Berg81,Fila08,12powergrid,Mott13,Dorf13}.
If this state is lost due to local outages or accidents, the grid will fragment into
asynchronous islands which can no longer exchange electric energy \cite{witthaut2016critical}. 
For instance, the European power grid fragmented into three asynchronous 
areas on November 4th 2006 after the shutdown of one transmission line
in Northern Germany. As a result, south-western Europe suffered an under-supply 
on the order of 10 GW and approximately 10 million households were disconnected
\cite{UCTE07}.

Without loss of generality, we take $\sum_j \omega_j = 0$ or 
$\sum_j P_j = 0$ respectively,  by invoking a transformation
to a co-rotating frame of reference.  The globally phase-locked states are then 
the \emph{fixed points} of the system. Both for the Kuramoto model and 
the power grid model, these states are given by the solutions of the algebraic equations
\be
       P_j  + \sum_{\ell = 1}^N
      K _{j,\ell} \sin(\theta_\ell - \theta_j) =0 \qquad
     \text{for all } j \in \left\{1,\ldots, N\right\},
    \label{eqn:def-steady-state}
\ee
replacing $P_j$ by $\omega_j$ for the Kuramoto model.
In the following we analyze the influence of the network
topology given by the coupling matrix $K_{j,\ell}$ on the
existence of a fixed point. All results below hold
for both models, nevertheless our intuition heavily relies
on the interpretation of $F_{j,\ell}= K _{j,\ell} \sin(\theta_\ell - \theta_j)$
as a flow which is inspired from the power grid model. 
The results generalize  to  arbitrary coupling
functions $f$ instead of the sine (see, e.g., \cite{Wata93,Bick11}). 
In the following we mostly restrict ourselves to the common 
sine coupling for the sake of clarity.

\section{The nature and bifurcations of fixed points}
\label{sec:bifurcation}

Both the Kuramoto system and the oscillator model of power grids share the same 
set of fixed points \eqref{eqn:def-steady-state}. It has 
been shown that the similarity between these two systems runs deeper, namely, 
the linear 
stability properties of those fixed points are identical \cite{epjst14}.
In this section we briefly review some basic results on the stability
of the fixed points.

We analyze the dynamical stability of a certain fixed point 
$\vec{\theta^*}=(\theta_1^*,\ldots,\theta_N^*)$ 
by defining the potential function
\begin{align}
\label{eq-potential}
V(\theta_1, \theta_2,\ldots, \theta_N)=- \sum_j P_j \theta_j - \frac{1}{2} \sum_{i,j} K_{ij} \cos(\theta_i - \theta_j).
\end{align}
The fixed points correspond to the local extrema of this potential, where 
$\frac{\partial V}{\partial \theta_j} = 0 \text{ for all } j$.   A fixed
point $\vec{\theta}^*$ is asymptotically stable if the Hesse matrix $M$ of the potential 
function 
\begin{align}
\label{eq-M}
   M(\vec{\theta^*})&=\begin{pmatrix}
      \sum_\ell K_{1,\ell}^{\rm red} & 
              - K_{1,2}^{\rm red} & \cdots \\
     - K_{2,1}^{\rm red} & 
               \sum_l K_{2, \ell}^{\rm red} & \cdots \\
       \vdots & \vdots  & \ddots 
  \end{pmatrix}
\end{align}
with the \emph{residual capacity}
\be
   \label{eq:kred}
   K_{j,\ell}^{\rm red} = K_{j,\ell} \cos{(\theta_j^*-\theta_\ell^*)} 
\ee
has positive eigenvalues only. It is worth noting that $M$ has one eigenvector 
$\vec{v}_1=(1,1,\cdots,1)$ with eigenvalue $\mu_1 = 0$, because any fixed point
$\vec{\theta^*}$ is arbitrary up to an 
additive constant $c$. As such a global phase shift does not affect the locking of 
the phases we can discard it in the following and concentrate on the stability 
transversely to the solution space 
$\{ \vec{\theta^*} + c (1,1,\cdots,1) | c \in \mathbb{R} \}$.

\begin{lemma}
\label{thm:stability-M}
Let the eigenvalues of $M$ be ordered such that $\mu_1=0$ and 
$\mu_2\leq\cdots\leq \mu_N$.  
If for a given network topology and a given fixed point, 
\be
    \label{def:eigval_mu}
    \mu_k > 0, \quad \text{ for all } k\in\left\{2,3,\cdots,N\right \}, 
\ee
then this fixed point is transversally \emph{asymptotically stable} for both 
Kuramoto system and the power grid model system. 
If one of the $\mu_k < 0$, then the dynamical system 
is linearly unstable.
\end{lemma}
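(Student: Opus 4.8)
The plan is to treat the two models in turn, with the symmetric Hesse matrix $M$ of \eqref{eq-M} as the common object controlling both. First I would observe that the Kuramoto dynamics \eqref{eqn:kuramoto} is exactly the gradient flow $\dot\theta_j=-\partial V/\partial\theta_j$ of the potential \eqref{eq-potential}: a direct differentiation of $V$, using $K_{j,\ell}=K_{\ell,j}$, reproduces the left-hand side of \eqref{eqn:def-steady-state}, and a second differentiation shows that the matrix in \eqref{eq-M} is precisely the Hessian of $V$ at $\vec{\theta}^*$. Linearizing the gradient flow about the fixed point gives $\dot{\vec\delta}=-M\vec\delta$ for $\vec\delta=\vec\theta-\vec\theta^*$. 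Since $M$ is real symmetric its eigenvalues $\mu_k$ are real and the linearized spectrum is $\{-\mu_k\}$. The neutral eigenvalue $-\mu_1=0$ corresponds to the global shift $\vec v_1=(1,\dots,1)$, which we discard; transverse to it every mode decays iff $\mu_k>0$ for all $k\ge 2$, while a single $\mu_k<0$ produces a growing mode. This settles the Kuramoto case completely.

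For the power-grid model \eqref{eqn:power} I would linearize the second-order equation about $(\vec\theta^*,\vec 0)$, obtaining
\begin{align}
\label{eq-lin-secondorder}
\mathrm{diag}(M_j)\,\ddot{\vec\delta}+\mathrm{diag}(D_j)\,\dot{\vec\delta}+M\vec\delta=0,
\end{align}
where the inertia and damping matrices are diagonal and positive definite because $M_j,D_j>0$, and the stiffness matrix is again the Hessian $M$. Inserting a normal mode $\vec\delta=\vec v\,e^{\lambda t}$ yields the quadratic eigenvalue problem $(\lambda^2\,\mathrm{diag}(M_j)+\lambda\,\mathrm{diag}(D_j)+M)\vec v=0$. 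Contracting with $\vec v^{*}$ reduces this to the scalar relation $a\lambda^2+b\lambda+c=0$ with $a=\vec v^{*}\mathrm{diag}(M_j)\vec v>0$, $b=\vec v^{*}\mathrm{diag}(D_j)\vec v>0$ and $c=\vec v^{*}M\vec v\in\mathbb{R}$.

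The stability half of the claim then follows cleanly. Decomposing $\vec v=\alpha\vec v_1+\vec v_\perp$ with $\vec v_\perp\perp\vec v_1$ and using $M\vec v_1=0$ gives $c=\vec v_\perp^{*}M\vec v_\perp$, so that $c>0$ whenever all $\mu_k>0$ ($k\ge2$) and $\vec v_\perp\ne0$, while $c=0$ only for $\vec v\parallel\vec v_1$. A real quadratic with $a,b>0$ and $c>0$ has both roots in the open left half-plane, and for $c=0$ its roots are $0$ and $-b/a<0$. Hence every normal mode decays except the neutral shift along $\vec v_1$, which is exactly transverse asymptotic stability. The same contraction shows that an unstable eigenvalue can only arise from an eigenvector with $c=\vec v^{*}M\vec v<0$.

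The hard part will be the converse, namely exhibiting a genuinely unstable eigenvalue when some $\mu_k<0$, because the eigenvectors of the pencil are not free and the mere existence of a direction $\vec w$ with $\vec w^{*}M\vec w<0$ does not by itself solve \eqref{eq-lin-secondorder}. I expect two routes. A self-contained one treats $p(\lambda)=\det(\lambda^2\,\mathrm{diag}(M_j)+\lambda\,\mathrm{diag}(D_j)+M)=\lambda\,q(\lambda)$: using $\sum_jD_j>0$ one finds $\mathrm{sign}\,q(0)=\mathrm{sign}\prod_{k\ge2}\mu_k$, while the leading coefficient of $q$ is $\det\mathrm{diag}(M_j)>0$, so the intermediate value theorem forces a real root $\lambda>0$ whenever an odd number of the $\mu_k$ are negative. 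The general case (an even number of negative eigenvalues) I would close either by the inertia theorem for quadratic matrix pencils---stating that the number of roots with positive real part equals the number of negative eigenvalues of $M$, which is precisely the equivalence of linear stability established in Ref.~\cite{epjst14}---or, equivalently, by applying Chetaev's instability theorem to the energy $H=\tfrac{1}{2}\sum_jM_j\dot\theta_j^2+V$, which satisfies $\dot H=-\sum_jD_j\dot\theta_j^2\le0$ and admits no local minimum at $\vec\theta^*$ once $M$ is indefinite. This dissipative instability step is the main obstacle; everything else is linear algebra.
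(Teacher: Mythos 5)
Your proposal is correct in substance, but it follows a genuinely different (and more self-contained) route than the paper, because the paper does not actually prove Lemma~\ref{thm:stability-M} at all: it sets up the potential \eqref{eq-potential} and its Hessian \eqref{eq-M}, notes the zero mode $\vec{v}_1$, and then simply quotes the equivalence of the linear stability properties of the first- and second-order systems from Ref.~\cite{epjst14}, presenting the lemma as a review statement. What your argument buys over that citation route is twofold: the gradient-flow observation settles the Kuramoto case in one line, and the Rayleigh-quotient contraction of the quadratic pencil $(\lambda^2\,\mathrm{diag}(M_j)+\lambda\,\mathrm{diag}(D_j)+M)\vec{v}=0$ handles \emph{heterogeneous} inertia and damping, for which the pencil does not decouple in the eigenbasis of $M$ (the modal decoupling argument behind the cited result needs uniform $M_j$, $D_j$, or at least a uniform ratio). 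Two points in your sketch deserve tightening. First, to conclude transverse asymptotic stability of the second-order system you also need $\lambda=0$ to carry no Jordan block; this is quickly closed, since a generalized eigenvector $\vec{w}$ would have to satisfy $M\vec{w}=-\mathrm{diag}(D_j)\vec{v}_1$, whose solvability condition $\vec{v}_1^{\,T}(D_1,\ldots,D_N)^T=\sum_j D_j=0$ fails. Second, you are right that the instability half is the real obstacle: your intermediate-value argument only covers an odd number of negative $\mu_k$, and of your two proposed closers the pencil inertia theorem (Kelvin--Tait--Chetaev in the form that, with positive definite damping, the number of right-half-plane eigenvalues equals the number of negative eigenvalues of the stiffness matrix $M$) is the one that actually delivers the lemma's claim of \emph{linear} instability; the Chetaev/energy route proves nonlinear instability of the equilibrium, which is a logically different (and here weaker) statement than an unstable linearization. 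With those two patches your argument is a complete proof, and arguably a stronger result than the one the paper imports.
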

Using some results from bifurcation theory, it has been shown in
\cite{epjst14} that a stable fixed point can only be lost by an inverse 
saddle-node bifurcation when one of the eigenvalues becomes zero,
$\mu_2 = 0$. At this point linear stability analysis is not sufficient to
predict stability of the fixed point but it is expected that the fixed point 
is unstable \cite{khazin1991stability}.

More insights can be gained about the loss of a fixed point when 
the phase differences across all edges in the network are sufficiently 
small:
\begin{corr}
\label{corr:stab-phasediff}
Consider a simply connected network. A fixed point $\vec{\theta^*}$ is transversally asymptotically stable if 
\begin{align}
  \label{def:normalop}
  \cos{\left(\theta_i^*-\theta_j^*\right) > 0}
\end{align}
holds for all edges $(i,j)$ in the network. Then the network is said to be in ``normal operation''.
\end{corr}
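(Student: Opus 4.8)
The plan is to recognize the Hesse matrix $M$ of \eqref{eq-M} as a weighted graph Laplacian whose edge weights are precisely the residual capacities $K_{j,\ell}^{\rm red}$ of \eqref{eq:kred}, and then to invoke the standard spectral properties of such Laplacians. First I would record the defining structure of $M$: its off-diagonal entries are $-K_{j,\ell}^{\rm red}$ and its diagonal entries are $\sum_\ell K_{j,\ell}^{\rm red}$, so that for every vector $\vec{x}=(x_1,\ldots,x_N)$ one has the quadratic-form identity
\begin{equation}
\vec{x}^{\top} M \vec{x} = \frac{1}{2}\sum_{i,j} K_{i,j}^{\rm red}\,(x_i - x_j)^2 .
\end{equation}
(The terms with $(i,j)\notin E$ vanish automatically since $K_{i,j}=0$ there.) This rewriting is the central step, as it reduces the spectral question to a sign analysis of the weights.

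Next I would use the hypothesis \eqref{def:normalop}. Because $K_{i,j}>0$ on every edge of the network and $\cos(\theta_i^*-\theta_j^*)>0$ by assumption, every residual capacity $K_{i,j}^{\rm red}=K_{i,j}\cos(\theta_i^*-\theta_j^*)$ is strictly positive. Hence each term in the sum above is nonnegative, so $M$ is positive semidefinite and all its eigenvalues satisfy $\mu_k\ge 0$. It then remains only to exclude a second zero eigenvalue beyond the trivial one belonging to $\vec{v}_1=(1,1,\ldots,1)$.

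For this I would characterize when the quadratic form vanishes: if $\vec{x}^{\top} M \vec{x}=0$, then since every $K_{i,j}^{\rm red}>0$ we must have $x_i=x_j$ for each edge $(i,j)$. Because the network is connected, equality propagates along paths and forces $x_i=x_j$ for all pairs of vertices, i.e. $\vec{x}$ is a scalar multiple of $\vec{v}_1$. Thus $\ker M$ is exactly one-dimensional, $\mu_1=0$ is simple, and $\mu_k>0$ for all $k\in\{2,\ldots,N\}$. Condition \eqref{def:eigval_mu} of Lemma~\ref{thm:stability-M} is therefore satisfied, and the fixed point is transversally asymptotically stable.

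The main obstacle is exactly this connectivity argument: verifying that positivity of all edge weights together with connectedness pins the kernel down to the single direction $\vec{v}_1$. This is where the assumption that the network is (simply) connected is indispensable, since a disconnected graph would contribute one additional zero eigenvalue per component and the conclusion would fail.
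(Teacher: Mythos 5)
Your proof is correct and follows essentially the same route as the paper: under the normal-operation hypothesis the Hesse matrix $M$ is recognized as the Laplacian of a weighted graph with strictly positive edge weights $K_{i,j}^{\rm red}$ (the paper's ``meta graph''), and the spectral properties of Laplacians of connected graphs then give transversal asymptotic stability via Lemma~\ref{thm:stability-M}. If anything, your version is slightly more complete than the paper's, which only cites non-negativity of the Laplacian spectrum; your quadratic-form and kernel argument explicitly establishes that connectivity makes the zero eigenvalue simple, i.e.\ $\mu_2>0$, which is the condition Lemma~\ref{thm:stability-M} actually requires.
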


\begin{proof}
To this end, we first define a meta graph as follows.  
\begin{defn}[Meta graph]
    \label{def:mata-graph}
    Given a graph $G(V, E)$, and a set of flows $F_{uv}$ across each edge $e(u,v)$,
    its meta graph $\MetaGraph{G}$ is an undirected graph with vertex set $V$ 
    and edge set $E'$ defined as follows.  For all edges $e(u,v)\in E$, with 
    weight $K_{uv}, \exists$ an edge $e(u,v)\in E'$ with weight 
    $K_{uv}^{\rm red} = \sqrt{K_{uv}^2-F_{uv}^2}$, as per \eqref{eq:kred}.
\end{defn}
Then the matrix $M$ as defined in \eqref{eq-M} is seen to be the Laplacian matrix 
of the meta graph $\tilde{G}$. The eigenvalues of a Laplacian of a connected 
undirected graph
with positive edge weights are always non-negative \cite{Newm10} such that
we obtain the result.
\end{proof}

During normal operation an eigenvalue of $M$ can become $0$ only when $\tilde{G}$ 
disconnects into two (or more) components. Such a split-up happens only when
$K_{j,\ell}^{\rm red} = 0$ for all the transmission lines connecting two 
certain parts (denoted by $G_1, G_2$) of the network, meaning that these lines
are completely saturated
\be
  \sin{\left(\theta_j^*-\theta_\ell^*\right)} = \pm 1  \quad \Rightarrow \quad |F_{j,\ell}| = K_{j,\ell}
    \qquad \text{ for all }  (j,\ell)\in E, j\in G_1,\ell \in G_2. 
\ee
Another scenario for the loss of stability is that one or more transmission lines
leave normal operation. Then the edge weights become effectively negative, such
that a simple graph-theoretic interpretation of the bifurcation is no longer
possible \cite{epjst14}.

\section{Cycle flows and Geometric frustration}
\label{sec:frustration}

\subsection{Flow conservation and the dynamics condition}

It is instructive to divide the defining equation 
(\ref{eqn:def-steady-state}) of a 
fixed point into two parts.
First, every fixed point has to satisfy a dynamic condition 
which is nothing but the conservation of the flow at every node
of the network
\begin{subequations}
\label{eqn:dc1}
\begin{align}
   & P_j + \sum_{\ell=1}^N K _{j,\ell} S_{j,\ell} = 0 \qquad 
    \text{ for all } j \in \{1,\ldots,N \}\\
  &  |S_{j,\ell}|   \le 1 \quad \qquad \qquad \qquad  
              \text{ for all edges }  (j,\ell).
\end{align} 
\end{subequations}
Here, $\sum_\ell K_{j,\ell}S_{j,\ell}$ is the sum of all 
flows from the neighboring nodes to the node $j$, while 
$P_j$ is a source or sink term, respectively.
The second part of this condition reflects the fact that the
transmission capacity of each link is bounded, such that the
magnitude of the flow $|F_{j,\ell}|$ cannot exceed the capacity
$K_{j,\ell}$.  
The dynamic condition (\ref{eqn:dc1}) holds for
all flow networks including also DC networks (i.e. Kirchhoff's
rules) and biological network models \cite{Kati10, ronellenfitsch2015dual}. 

To obtain a better understanding of the possible solutions,
we slightly rephrase the dynamic condition  (\ref{eqn:dc1}).
In particular, we label all the $L$ edges in the network with
$e \in \left\{ 1,\ldots,L\right\}$.
As the flows are directed, we have to keep track of the ordering
of the vertices connected by the edge $e$. That is, each $e$ 
corresponds to a directed link $(j,\ell)$ in the following. The ordering
is arbitrary but must be kept fixed.
Then we write $S_e = S_{j,\ell}$ and $F_e = F_{j,\ell}$
for the flow over a link $e  \,  \widehat= \, (j,\ell)$. 
Furthermore, we define the unweighted edge incidence matrix
$I \in \mathbb{R}^{N\times L}$ \cite{Newm10} via
\be
   I_{j,e} = \left\{
   \begin{array}{r l }
      +1 & \; \mbox{if node $j$ is the head of edge $e  \,  \widehat= \, (j,\ell)$},  \\
      -1 & \; \mbox{if node $j$ is the tail of edge $e  \,  \widehat= \, (j,\ell)$},  \\
      0     & \; \mbox{otherwise}.
  \end{array} \right.
\ee
and the weighted edge incidence matrix $\widetilde K \in \mathbb{R}^{N\times L}$ 
with the components $\widetilde K_{je} = K_e I_{je}$.

The dynamic condition (\ref{eqn:dc1}) then reads
\begin{subequations}
\label{eqn:dc3}
\begin{align}
    P_j + \sum_{e=1}^L  I _{j,e} F_e  = 0&  
    \qquad\text{ for all } j=1,\ldots,N 
     \label{eqn:dc3a}  \\
     |F_{e}|    \le K_e & 
     \qquad\text{ for all } e = 1,\ldots,L.
  \label{eqn:dc3b}
\end{align} 
\end{subequations}
in terms of the flows or 
\begin{subequations}
\label{eqn:dc2}
\begin{align}
   & P_j + \sum_{e=1}^L \widetilde K _{j,e} S_e = 0 \qquad 
              \text{ for all } j=1,\ldots,N 
     \label{eqn:dc2a}  \\
  &  |S_{e}|   \le 1 \quad \qquad \qquad \qquad  
              \text{ for all } e = 1,\ldots,L.
  \label{eqn:dc2b}
\end{align} 
\end{subequations}
in terms of the sine factors.
Here, $\vec F = (F_1,\ldots,F_L)^T$ and $\vec S = (S_1,\ldots,S_L)^T$ are vectors 
in $\mathbb{R}^L$.
The matrix $\widetilde K$ has $N$ rows, but its rank is only $(N-1)$.
This is due to the fact that the sum of all rows is zero as 
$\sum  \nolimits_j \widetilde K_{j,e} = 0$, since each edge 
has exactly one head and one tail. Hence, the solutions 
of the linear set of equations (\ref{eqn:dc2a}) span an affine 
subspace of $\mathbb{R}^L$ whose dimension  is $(L-N+1)$.  This statement will 
later be rigorously proved in Lemma \ref{lem:cf-fp-corr}.
In many important applications $L$ is much larger than the number of 
nodes $N$, such that we have a high dimentional submanifold $\mathbb{B}$ of $\mathbb{R}^L$
with every $\vec{S}\in \mathbb{B}$ a solution of \eqref{eqn:dc2}, and hence, a fixed 
point of \eqref{eqn:kuramoto} and \eqref{eqn:power}. 
However, the set of solutions of the dynamical equations
can also be empty if the capacities $K_{j,\ell}$ are too small.
In fact, the condition (\ref{eqn:dc2b}) defines a bounded convex 
polytope in $\mathbb{R}^L$. The solution of the full dynamical
conditions (\ref{eqn:dc2}) are given by the intersection of this
polytope and the $(L-N+1)$ dimensional affine subspace.

We can further characterize the solution of the dynamic conditions, by 
establishing that the homogeneous solutions of the system (\ref{eqn:dc2a}) are 
just the \emph{cycle flows} which do not affect flow conservation. As the number
of fundamental cycles in a network is  $(L-N+1)$, the dimension
of the solution space is also given by $(L-N+1)$.  Derivation of these results 
follows.  

\begin{defn}[Simple cycle]
    \label{def:simplecycle}
    Given an undirected  graph $G(V,E)$, a closed path 
    $c = (v_1, v_2, \cdots,v_l,v_1)$ where no vertex apart from $v_1$ occurs 
    twice is called a \emph{simple cycle} \cite[p~21]{Dies10}. 
\end{defn}

\begin{defn}[Cycle basis]
	\label{def:cyclebasis}
Given a connected graph $G(V, E)$ with $L$ edges and $N$ vertices, the set
of all simple cycles $\mathfrak{C}$
forms a vector space over the two element field 
$GF(2) = \left\{0, 1\right\}$, 
with set symmetric difference being the addition operator. 
This vector space has dimension $L-N+1$.  A basis $B_C$ of this vector space is called a \textbf{cycle 
basis} of the graph $G$.  
\end{defn}

\begin{defn}[Signed characteristic vector of a cycle]
    \label{def:signed_char_vec}
    An arbitrary assignment of a direction to each edge of an undirected graph 
    $G$, which results in a directed graph, is called an orientation 
    $G^{\sigma}$ \cite{godsil2013}. Given a graph $G$ with $L$ edges and $N$ vertices, and one 
    such orientation, there exists an 
    injective mapping from the set $\mathfrak{C}$ of all simple cycles of
    $G$ to $\mathbb{R}^{L}$ as follows:
\begin{align*}
    \mathfrak{C} & \to \mathbb{R}^{L}\\
    c & \mapsto \vec{z^c}\\
    z_e^c & = 
    \begin{cases}
        0, & \text{ if } e \text{ is not in } c\\
        1, & \text{ if } e=(v_i,v_{i+1}) \text{ and } u_{i+1} \text{ is the 
    head of } e\\
    -1, & \text{ if } e=(v_i,v_{i+1}) \text{ and } u_{i+1} \text{ is the tail of 
}e.  
    \end{cases}
\end{align*}
$\vec{z^c}$ is called the \emph{signed characteristic vector} of each cycle.   
\end{defn}

Now we show that any fixed point of the system can be uniquely specified by a
\emph{cycle flow} along each cycle belonging to a cycle basis of the 
underlying graph, alongwith an arbitrary solution of \eqref{eqn:dc3}.  

\begin{defn}[Cycle flow]
    Given a simple cycle $c=(v_1, v_2,\cdots,v_l, v_1)$ belonging to an undirected graph $G(V, E)$, a 
    flow $\vec{F}$ is called a \emph{cycle flow} if 
    \be
    F_{j,k}  = 
            \begin{cases}
                c & \text{ if } (j,k) \in\{(v_1,v_2), (v_2,v_3),\cdots,(v_{l-1},v_l),(v_l,v_1)\}\\
                0 & \text{ otherwise},
            \end{cases}
    \ee
    i.e.  it is a contant nonzero flow along the cycle.  
\end{defn}

\begin{lemma}
\label{lem:cf-fp-corr}
Let $\mathbb{S}_G$ be the set of all fixed points of a network $G$ 
satisfying the normal operation criteria \eqref{def:normalop}.
Then there exists a one-to-one function 
$\vec{f_c}:\mathbb{S}_G \mapsto \mathbf{R}^{L-N+1}$ 
that maps each fixed point to a \emph{cycle flow vector}.  
\end{lemma}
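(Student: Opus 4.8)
The plan is to prove Lemma~\ref{lem:cf-fp-corr} by constructing the map $\vec{f_c}$ explicitly and showing it is a bijection via the standard linear-algebraic decomposition of the flow space. First I would fix a reference fixed point. By the discussion preceding the lemma, the affine space of solutions to the flow-conservation equation \eqref{eqn:dc3a} is nonempty (we are given $\mathbb{S}_G$ as the set of fixed points), so pick any particular flow vector $\vec{F}^{(0)}$ satisfying $P_j + \sum_e I_{j,e}F^{(0)}_e = 0$ for all $j$. The key observation is that if $\vec{F}$ and $\vec{F}^{(0)}$ are two flow vectors both satisfying flow conservation, then their difference $\vec{F} - \vec{F}^{(0)}$ lies in the kernel of the incidence matrix $I$, i.e.\ $\sum_e I_{j,e}(F_e - F^{(0)}_e) = 0$ for all $j$. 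This kernel is precisely the \emph{cycle space} of the graph, spanned by the signed characteristic vectors $\vec{z^c}$ of a cycle basis (Definition~\ref{def:signed_char_vec}, Definition~\ref{def:cyclebasis}).

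Next I would invoke the fundamental fact from algebraic graph theory that the kernel of $I$ has dimension exactly $L-N+1$ for a connected graph, and that a cycle basis $B_C = \{c_1,\ldots,c_{L-N+1}\}$ yields signed characteristic vectors $\{\vec{z^{c_1}},\ldots,\vec{z^{c_{L-N+1}}}\}$ that form a basis of this kernel over $\mathbb{R}$. Consequently, for every fixed point with flow vector $\vec{F}$ there is a \emph{unique} coefficient vector $\vec{a} = (a_1,\ldots,a_{L-N+1})\in\mathbb{R}^{L-N+1}$ with
\be
   \vec{F} - \vec{F}^{(0)} = \sum_{i=1}^{L-N+1} a_i \, \vec{z^{c_i}}.
\ee
I would then \emph{define} $\vec{f_c}(\vec{\theta^*}) := \vec{a}$, the tuple of cycle-flow amplitudes along the basis cycles. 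Uniqueness of $\vec{a}$ (from linear independence of the $\vec{z^{c_i}}$) gives that $\vec{f_c}$ is well-defined, and injectivity follows immediately: if two fixed points produce the same $\vec{a}$, they share the same flow vector $\vec{F}$, and under the normal-operation assumption \eqref{def:normalop} each $S_e = F_e/K_e \in (-1,1)$ determines the phase difference $\theta_j^* - \theta_\ell^*$ uniquely as $\arcsin(S_e)$ on the principal branch; together with the connectedness of $G$ and the gauge-fixing of the global phase, this recovers $\vec{\theta^*}$ up to the irrelevant additive constant, so the two fixed points coincide as elements of $\mathbb{S}_G$.

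The main obstacle I anticipate is the injectivity direction, specifically the step from equal flow vectors $\vec{F}$ back to equal fixed points $\vec{\theta^*}$. The map $\theta \mapsto \sin\theta$ is not globally invertible, so knowing $S_e = \sin(\theta_\ell^* - \theta_j^*)$ does not by itself pin down the phase difference; this is exactly where the normal-operation criterion \eqref{def:normalop} earns its keep, since $\cos(\theta_i^*-\theta_j^*)>0$ confines each phase difference to $(-\pi/2,\pi/2)$ where $\arcsin$ is a genuine inverse. I would emphasize that without this restriction the correspondence fails, which is why $\mathbb{S}_G$ is defined to contain only normal-operation fixed points. A secondary technical point is that the construction of $\vec{f_c}$ depends on the chosen reference $\vec{F}^{(0)}$ and cycle basis $B_C$, but since the lemma only asserts existence of \emph{some} one-to-one map, fixing these choices once and for all is harmless. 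I would close by noting that surjectivity onto $\mathbb{R}^{L-N+1}$ is not actually claimed (the image is the set of $\vec{a}$ whose resulting $\vec{F}$ still satisfies the capacity bounds \eqref{eqn:dc3b} under normal operation), so the word ``one-to-one'' should be read as injective, and the counting dimension $L-N+1$ matches the dimension of the solution manifold asserted earlier.
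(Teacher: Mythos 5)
Your proposal is correct and follows essentially the same route as the paper's own proof: fix a reference fixed point, observe that the difference of any two conserved flow vectors lies in the cycle space spanned by the signed characteristic vectors of a cycle basis, and use uniqueness of the basis coefficients to define the map $\vec{f_c}$. Your write-up is in fact slightly more complete, since you make explicit the injectivity step (equal coefficient vectors imply equal flows, which under the normal-operation condition determine the phase differences via the principal branch of $\arcsin$ and hence, by connectedness, the phases up to a global constant) --- precisely the point where the paper's proof invokes the normal-operation hypothesis only implicitly.
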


\begin{proof}
    Let $\vec{\theta^{(0)}}$ be one (arbitrarily chosen) fixed point. Let $\vec{\theta}$ be 
    another. Then we construct the mapping $f_c$ by proving that the flows for 
    these two fixed point differ only by \emph{cycle flows} along each cycle.   

    Let $\vec{F^{(0)}}=(F^{(0)}_{e_1},F^{(0)}_{e_2},\cdots,F^{(0)}_{e_L})$ and 
    $\vec{F'}=(F'_{e_1},F'_{e_2},\cdots,F'_{e_L})$ be the flows for the fixed 
    points $\vec{\theta^{(0)}}$ and $\vec{\theta}$, respectively. Then 
    \begin{align}
    \label{eq:flowdiff_cycle_zero}
    \vec{F} - \vec{F^{(0)}} &= \sum_{c \in B_C} f_c \vec{z^c},
    \end{align}
    due to the result from graph theory that the flow space of an 
    oriented graph $G^{\sigma}$ is spanned by the signed characteristic 
    vectors (Definition \ref{def:signed_char_vec}) of its cycles \cite[p~311]{godsil2013}. Since by definition the cycles in $B_C$ forms a basis of the cycle space, 
    the coefficients $f_c$ are guaranteed to be unique. This concludes the 
    proof. 
\end{proof}

\subsection{The winding number and the geometric condition}  

In addition to the dynamic condition, there is a geometric condition for the existence of a fixed point:
a fixed point exists if the flows 
$F_{j,\ell} = K_{j,\ell} S_{j,\ell}$
satisfy the dynamic condition (\ref{eqn:dc2}) 
\emph{and if}  
\bea
    \text{ for all } \, \mbox{edges} \, (\ell,j): \quad \,  \exists \, (\theta_1, 
    \ldots, \theta_N) \mbox{ such that}  \;
    S_{j,\ell} = \sin(\theta_\ell - \theta_j).
  \label{eqn:gc1}
\eea

We now rephrase this condition in a more instructive way.
To this end we assume that we have already obtained a solution
of the dynamic condition (\ref{eqn:dc2}). Then we can try to
successively assign a phase $\theta_j$ to every node $j$ in the
network. Starting at a node $j_0$ with an arbitrary phase
$\theta_{j_0}$, we assign the phases of all neighboring nodes
$j_1$ such that $\sin(\theta_{j_1}   -   \theta_{j_0}) = S_{j_0,j_1}$.
We then proceed in this way through the complete network
to assign the phase of an arbitrary node $j_n$,
\be
   \theta_{j_n} = \theta_{j_0} + \sum_{s=0}^{n-1}
      \Delta_{j_s,j_{s+1}},
   \label{eqn:phase-path}
\ee
where $(j_0,j_1,\ldots,j_n)$ is an arbitrary \emph{path} form $j_0$ to $j_n$
and we have used a solution of the equation 
\be
   S_{j,\ell} = \sin(\Delta_{j,\ell})
   \label{eq:D-Delta-0}
\ee
for every edge $(j,\ell)$.

In general, a given node $j_n$ can be reached
from $j_0$ via a multitude of different paths. To define a unique
set of phases that satisfies the geometric condition (\ref{eqn:gc1}),
we must assure that Eq.~(\ref{eqn:phase-path}) yields a unique 
phase regardless of which path is taken from $j_0$ to $j_n$.
This is equivalent to the condition, that the phase 
differences over every \emph{simple cycle} (as defined in Definition 
\ref{def:simplecycle})  in the network 
must add up to an integer multiple of $2\pi$.  
\bea
       \label{eqn:gc2}
       && \sum_{(j,\ell) \in \text{cycle} \, c} \Delta_{j,\ell} = 2 m 
       \pi,\quad \text{ for some } m\in\mathbb{Z},
\eea
where $\Delta_{j,\ell}$ is a solution of equation (\ref{eq:D-Delta-0}).
Furthermore, it is sufficient if \eqref{eqn:gc2} is satisfied by the cycles in 
the cycle basis of the network  defined in 
Definition \ref{def:cyclebasis}: it will then automatically be satisfied for all
simple cycles of the network, since the simple cyles form a vector space.

However, there are two distinct solutions
\begin{subequations}
    \label{eqn:deltaS}
    \begin{align}
        \Delta_{j,\ell}^+ &= \arcsin(S_{j,\ell}) \label{eqn:deltaS1}\\
        \Delta_{j,\ell}^- &= \pi -  \arcsin(S_{j,\ell}) \label{eqn:deltaS2}
    \end{align}
\end{subequations}
of equation (\ref{eq:D-Delta-0}) that satisfy $\Delta_{j,l}^{\pm}\in [-\pi,\pi)$.  
To consider both, we define a partition of the 
edge set 
\begin{align}
\label{eq:edge_partdef}
E & = E_+ \cup E_- \\
E_+ & =\{(j,\ell)\in E |\Delta_{j,\ell} = \Delta_{j,\ell}^+\}\\
E_- & =\{(j,\ell)\in E |\Delta_{j,\ell} = \Delta_{j,\ell}^-\}.  
\end{align}
Alternatively, we can define the two sets in terms of the 
nodal phases as
\begin{align}
   E_+ &= \{ (i,j) \in E | \cos(\theta_i - \theta_j) > 0 \} \\
   E_- &= \{ (i,j) \in E | \cos(\theta_i - \theta_j) \le 0 \}.
\end{align}
We note that a fixed point where the plus sign is realized for all
edges ($E_- = \emptyset$) is guaranteed to be linearly stable according
to corollary \ref{corr:stab-phasediff}. We refer to it as \emph{normal operation}.

To operationalize the geometric condition we now define the winding number 
\eqref{eq:def-winding}, following the notation used by Ochab and Gora \cite{Ocha09}.
\begin{defn}[Winding vector]
\label{def:windnum}
Consider a connected network with flows $\vec F$. For every fundamental cycle $c$, the winding number with respect to a partition $E = E_+ + E_-$ is defined as
\be
   \label{eq:def-winding}
   \varpi_c = \sum_{e \in E} z^c_e \Delta_e (F_e) 
\ee 
with  
\be
\label{eq:delta}
    \Delta_e (F_e)  = \left\{ \begin{array}{l l l}
        \arcsin(F_e/K_e) & \; {\rm for} \;  & e \in E_+ \\
        \pi - \arcsin(F_e/K_e) & \;  & e \in E_- .
      \end{array} \right.  
\ee   
The winding vector is defined as 
\be
\label{def:wvec}
\vec \varpi = (\varpi_1,\ldots,\varpi_{L-N+1})^T.
\ee
\end{defn}

Using the winding number we can reformulate the conditions for the
existence of a fixed point and establish a correspondence between the 
description of a fixed points in terms of nodal phases of edge flows.
\begin{thm}
\label{th:fp-dyn-geo}
Consider a connected network with power injections $\vec P \in \mathbb{R}^{N}$ and coupling matrix $K \in \mathbb{R}^{N\times N}$. Then the following two statements are equivalent:
\begin{enumerate}
\item $\vec \theta^*$ is a fixed point, i.e., a real solution of equation \eqref{eqn:def-steady-state}.
\item $\vec F \in \mathbb{R}^L$ satisfies the dynamic condition (\ref{eqn:dc3})
   and $\vec \varpi \in \mathbb{Z}^{L-N+1}$ for some partition $E = E_+ + E_-$.
\end{enumerate}
\end{thm}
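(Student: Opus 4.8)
The plan is to prove the two implications separately, threading everything through the flow--phase dictionary $F_e = K_e \sin(\theta_\ell - \theta_j)$ for the edge $e$ identified with the ordered pair $(j,\ell)$, together with the pointwise identity that for \emph{both} branches in \eqref{eqn:deltaS} one has $\sin(\Delta_e) = S_e = F_e/K_e$. This identity is what allows the edge-flow data to be reconstructed into nodal phases, and it is the single fact used in both directions.

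For $(1)\Rightarrow(2)$ I would start from a fixed point $\vec{\theta^*}$ and set $F_e = K_e\sin(\theta_\ell^*-\theta_j^*)$. Then flow conservation \eqref{eqn:dc3a} is just \eqref{eqn:def-steady-state} rewritten through the incidence matrix, and the capacity bound \eqref{eqn:dc3b} is immediate from $|\sin|\le 1$. The real work is the winding condition. I would take the partition dictated by the fixed point, namely $E_+=\{(i,j): \cos(\theta_i^*-\theta_j^*)>0\}$ and its complement $E_-$, and then establish the congruence $\theta_\ell^*-\theta_j^* \equiv \Delta_e(F_e)\pmod{2\pi}$ on every edge: on $E_+$ the phase difference lies modulo $2\pi$ in $(-\pi/2,\pi/2)$, so it reduces to the principal branch $\arcsin(S_e)$, while on $E_-$ it lies in the complementary arc and reduces to $\pi-\arcsin(S_e)$. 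Since the genuine phase differences telescope to zero around any closed cycle, contracting this congruence with the signed characteristic vector $\vec{z^c}$ of Definition \ref{def:signed_char_vec} gives $\varpi_c = \sum_e z_e^c\Delta_e \in 2\pi\mathbb{Z}$, which is exactly the quantization \eqref{eqn:gc2}.

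For the converse $(2)\Rightarrow(1)$ I would be handed $\vec F$ satisfying \eqref{eqn:dc3} and a partition whose fundamental winding numbers are all quantized. Defining $\Delta_e = \Delta_e(F_e)$ by \eqref{eq:delta}, I would reconstruct the phases by the path sum \eqref{eqn:phase-path}. The crux is well-definedness: any two paths between two nodes differ by a closed cycle, and by the $GF(2)$ cycle-space structure (Definition \ref{def:cyclebasis}) every simple cycle is a combination of the basis cycles, so quantization on the basis forces $\sum_e z_e^c\Delta_e\in 2\pi\mathbb{Z}$ for \emph{all} cycles; this makes \eqref{eqn:phase-path} path-independent and fixes $\vec\theta$ up to the global shift. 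The pointwise identity $\sin(\Delta_e)=S_e$ then yields $K_e\sin(\theta_\ell-\theta_j)=F_e$, so flow conservation \eqref{eqn:dc3a} becomes \eqref{eqn:def-steady-state} and $\vec\theta$ is a fixed point.

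The hard part will be the bookkeeping around the two-valuedness of the inverse sine: everything hinges on checking that the partition $E=E_+\cup E_-$ selects precisely the branch for which $\sin(\Delta_e)=S_e$ while simultaneously reproducing the true phase difference modulo $2\pi$, and getting the interval inequalities and the reductions right in both the $E_+$ and $E_-$ cases is the delicate step. A secondary point I would treat carefully is the reduction from all simple cycles to a fundamental basis, justified by the vector-space structure of the cycle space and the flow-space decomposition already used in Lemma \ref{lem:cf-fp-corr}. Finally I would note the normalization: \eqref{eqn:gc2} gives $\varpi_c\in 2\pi\mathbb{Z}$, which matches the stated integrality $\vec\varpi\in\mathbb{Z}^{L-N+1}$ up to the overall factor $2\pi$.
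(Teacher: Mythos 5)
Your proposal is correct and follows essentially the same route as the paper's proof: forward direction via the branch selection $E_\pm$ and the congruence $\Delta_e \equiv \theta_\ell^*-\theta_j^* \pmod{2\pi}$ with telescoping around cycles, converse via phase reconstruction along paths with path-independence guaranteed by quantization on a cycle basis. If anything, you are more explicit than the paper on the converse (which the paper dispatches in one sentence) and on the $2\pi$ normalization mismatch between \eqref{eq:def-winding} and \eqref{eqn:gc2}, which the paper indeed glosses over.
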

\begin{proof} We prove the theorem in two parts.\\
    \begin{description}
        \item [(1) $\implies$ (2)]  If $\vec{\theta^*}$ is a fixed point, then the flows $\vec F$ satisfying the 
dynamical condition \eqref{eqn:dc3} as given by 
\eqref{eq:def-flow} are
\be
    F_{j,k} = K_{j,k}\sin{(\theta_k - \theta_j)}.
\ee
Let's partition the edge set into $E^+$ and $E^-$ by
\begin{align}
    \label{eq:E_partition}
    e=(j,k) \in 
    \begin{cases}
        E^+ & \text{ if } \cos{(\theta_k - \theta_j)}  > 0\\
        E^- & \text{ if } \cos{(\theta_k - \theta_j)}  \leq 0.
    \end{cases}
\end{align}

We note the identity that
\begin{align}
    \label{eq:arcsin_sin}
    \arcsin{\left(\sin{(x)}\right)} = 
            \begin{cases}
                -x + (2m_x+1)\pi & \text{ if } \cos{(x)} \leq 0\\
                x + 2m_x\pi & \text{ if } \cos{(x)} > 0,\text{ for some } m_x\in \mathbb{Z}.
            \end{cases}
\end{align}
Combining this identity with the definition of $\Delta_e$ in \eqref{eq:delta} 
and our chosen set partition \eqref{eq:E_partition}, results in
\begin{align}
    \label{}
    \text{ for all } (j,k)\in E^+,\quad  \Delta_{j,k} & = 
    \arcsin{\left(F_{jk}/K_{jk}\right)} \nn\\
                                   & = \arcsin{\left(\sin{(\theta_k - 
                                   \theta_j)}\right)} \nn\\
                                   & = 2m_{jk}\pi + (\theta_k - \theta_j) 
                                   \label{eq:Delta_part1} \\
    \text{ for all } (j,k)\in E^-,\quad  \Delta_{j,k} & = \pi - \arcsin{\left(F_{jk}/K_{jk}\right)}\nn\\
                                   & = \pi - \arcsin{\left(\sin{(\theta_k - \theta_j)}\right)}\nn\\
                                   & = -2m_{jk}\pi + (\theta_k - \theta_j)\label{eq:Delta_part2}.
\end{align}
Combining \eqref{eq:Delta_part1} and \eqref{eq:Delta_part2}, we obtain 
$\Delta_{jk} = 2m_{jk}\pi + (\theta_k - \theta_j), m_{jk}\in \mathbb{Z}$ 
(choosing the $+$ sign for $2m_{jk}\pi$ without loss of generality).

Then for any simple cycle $c= (v_1,v_2,\cdots,v_l,v_1)$ in the cycle basis $B_C$, the winding number is
\begin{align}
    \label{}
    \varpi_c & = \sum_{e \in E} z^c_e \Delta_e (F_e) \\
             & = 2\pi (m_{12}+m_{23},\cdots,m_{l1})  \in \mathbb{Z},
\end{align}
thus completing the first part of the proof.   

        \item [(2) $\implies$ (1)]
          Given, a set of flows satisfying  the dynamic condition 
          \ref{eqn:dc3} and having integral winding numbers, the fixed point 
          $\vec \theta^*$ can be contructed follwoing Eqs \eqref{eqn:gc1} and  
          \eqref{eqn:phase-path}. 
\end{description}
This concludes the proof.  
\end{proof}

\subsection{Geometric frustration} 

The previous reasoning shows that we can face the following
situation: Given an oscillator network characterized by the
frequencies $P_j$ and the capacity matrix $K_{j,\ell}$,
we can find a solution of the dynamical conditions, such that
the flow is conserved at all nodes of the network. Nevertheless,
no fixed point exists as these solutions are incompatible 
to the geometric conditions. In this case we say that phase
locking is inhibited due to \emph{geometric frustration}. We 
summarize this in a formal definition before giving some
examples for the importance of this phenomenon.

\begin{defn}
An oscillator network is said to be \emph{geometrically 
frustrated} if a solution of the dynamic conditions 
(\ref{eqn:dc1}) exits, but all solutions are incompatible
to the geometric conditions (\ref{eqn:gc2}) such 
that no fixed point exists.
\end{defn}

This definition seems unfamiliar at first glance, but is completely compatible 
to the common concept of geometric frustration in condensed matter theory 
\cite{Wann50,Toul80,Moes06}. In that context, a system with multiple state 
variables $(x_1,x_2,\cdots,x_n)$ is called geometrically frustrated \cite{Wolf03}
if there must exist certain pairwise correlations between those variables, and no 
steady state can exist because all these correlations cannot be satisfied 
simultaneously.

To further clarify the relation to condensed matter systems, we consider
a spin lattice system with anti-ferromagnetic interactions
\cite{Wann50} where  the state variables are
the orientation (up or down) of spins.
To minimize the total energy of the system, adjacent spins must be aligned anti-parallel 
introducing perfect pair correlations. Whether this is possible depends
on the geometry or topology of the lattice. It is impossible for triangular
lattices, since two adjacent spins being antiparallel means the third one has 
to be parallel to one of those. Such lattices are thus called frustrated and do not posses a unique 
minimum energy state \cite{Wann50,Toul80}. 
In our case, the correlations are given by equation (\ref{eqn:deltaS}): The 
phases of two adjacent oscillators $j$ and $\ell$ differ by a fixed value defined 
by $\Delta^{\pm}_{j,\ell}$ as given by \eqref{eqn:deltaS}. A fixed point $(\theta_1,\ldots,\theta_N)$ must
satisfy all the correlations, see Eq.~(\ref{eqn:gc1}), otherwise the network is 
said to be frustrated. As before geometric frustration depends crucially on the 
topology of the network as we will show in detail in the following section. 

In general, the problem of geometric frustration can be traced back to the 
fundamental cycles of a lattice or network. In condensed matter physics, 
frustration is classified by the use of plaquette variables, which reveal 
whether a cycle of the lattice contains incompatible correlations \cite{Toul80}. 
In oscillator networks an analog function can be defined for every simple cycle
$c$ starting from the left-hand side of equation (\ref{eqn:gc2}) 
\be
   \Phi_c = \cos \left(\sum_{(j,\ell) \in c} \Delta_{j,\ell} \right) - 1 \, .
   \label{eq:phifun}
\ee
The geometric condition is satisfied for cycle $c$ if $\Phi_c = 0$, whereas 
the cycle is frustrated for $\Phi_c < 0$.

\section{Examples and applications}

In this section we discuss the importance of geometric aspects 
for the fixed points of an oscillator network with different
topologies. In particular, we analyze the number of fixed points
and show that geometric frustration can inhibit phase 
locking, which may lead to counter-intuitive phenomena.

\subsection{Trees do not suffer from frustration.}

By definition, a tree does not contain any cycle such that
the geometric condition (\ref{eqn:gc2}) does not apply. 
Therefore, the calculation of an fixed point  of the 
power grid oscillator model and the Kuramoto model 
as defined by Eq.~(\ref{eqn:def-steady-state})
on a tree reduces to the solution of the dynamic condition 
(\ref{eqn:gc2}), which is a linear set of equations. 
Moreover, we can find a strong result on the 
the number of stable and unstable fixed points
-- see corollary \ref{eqn:corr-tree}.

\subsection{Multiple solutions in cycle}

We now consider the simplest nontrivial topology of a cyclic network with only 
three nodes and three links with equal strength $K$. The 
dynamical condition for the existence of a fixed point then
reads
\be
   K \begin{pmatrix} 0 & 1 & -1 \\ -1 & 0 & 1 \\ 1 & -1 & 0   
   \end{pmatrix}
   \begin{pmatrix} S_{1,2} \\ S_{2,3} \\ S_{3,1} \end{pmatrix}
  = \begin{pmatrix} P_3 \\ P_1 \\ P_2 \end{pmatrix}
    \label{eqn:3cycle-dc} 
\ee
 and $|S_{j,\ell}| \le 1$. In particular for $P_j = 0$ any solution is
 a cycle flow $(S_{1,2}, S_{2,3}, S_{3,1}) = S \times (1,1,1)$.

\begin{figure}[tb]
\centering
\includegraphics[width=\textwidth, angle=0]{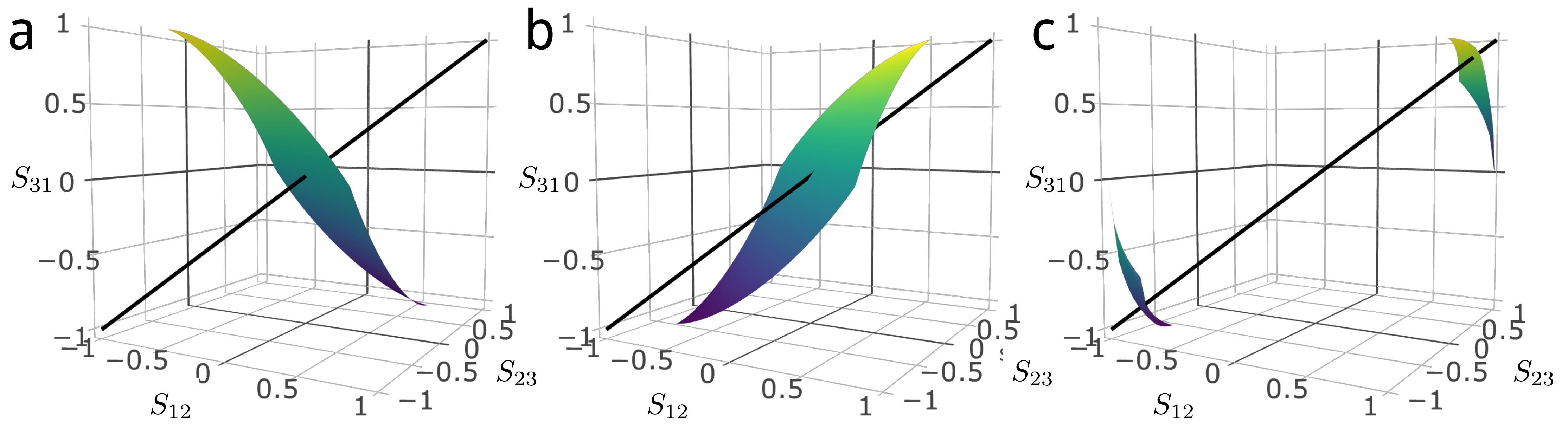}
\caption{\label{fig:cycle3}
Illustration of geometric frustration and multistability in the simplest cyclic network with
3 nodes with $P_j=0$ and three links with equal
strength $K$. Subplots show different branches of \eqref{eqn:3cycle-gc} obtained by 
choosing $+$ or $-$ sign for $\Delta_{12}, \Delta_{23}$ and $\Delta_{31}$. The black
lines denote  the solution space of the dynamical condition 
(\ref{eqn:3cycle-dc}), $S_{1,2} = S_{2,3} = S_{3,1} = S$.   
(a) Branch $(+++)$ with $m=0$.
(b) Branch $(--+)$ with $m=1$. The branches $(+--)$ and $(-+-)$ yield solutions at
$S =(0,0,0)$ in an analogous way.  
(c) Branch $(---)$ with $m=1$ (upper part) and $m=2$ (lower part).
The branches $(++-)$, $(+-+)$ and $(-++)$ do not yield a solution.   
}
\end{figure}

Taking into account that there are two possible solutions for the
phase difference along each edge as per \eqref{eqn:deltaS}), and since in 
order to satisfy the geometric condition \eqref{eqn:gc2}, the sum of
phase differences along the cycle must equal $2m\pi$ for some integer 
$m\in\mathbb{Z}$, we see that all fixed points must satisfy
\begin{align}
    \label{eqn:3cycle-gc}
    \Delta_{12}^{\pm} + \Delta_{23}^{\pm}+ \Delta_{31}^{\pm} & = 2m\pi.  
\end{align}
Taking all combinations of either $\Delta^+$ or $\Delta^-$ and corresponding possible values of 
$m$, we see that there are three intersections corresponding to 
three fixed points. These fixed points are illustrated in Figure 
\ref{fig:cycle3}. This shows that stationary
states are generally not unique, not even for the simplest cycle network.
In the present case only one of the solutions is dynamically stable,
but this is generally not true in larger cycles as we will
show in the following.

\subsection{Frustration induces discreteness.} 
\label{eqn:sec-cycle}

We now extend the above example to a single cycle with an arbitrary number 
of nodes with the same power $P_j \equiv 0$. All links have an equal strength 
$K$ as above.
For the sake of notational convenience we label the nodes as $1,2,\ldots,N$ along the cycle 
and identify the node $1$ with $N+1$ and $0$ with $N$. In order to have a 
non-trivial closed cycle we need $N \ge 3$. 
The dynamic conditions for a fixed points are then given by
\bea
  && F_{j+1,j} = F_{j,j-1}  \equiv F \qquad \text{ for all } j = 1,\ldots,N  \\ 
  && |F| \le K.
\eea
We stress that the dynamic conditions have a continuum 
of solutions, i.e. all values $F$ in the interval $[-K,K]$ are allowed. 

The phase difference along the edges $(j+1,j)$ is
given by equation \eqref{eqn:deltaS}, leaving two possible solutions
$+$ and $-$. Choosing the minus sign for at least one edge $(\ell+1,\ell)$ 
yields $\widetilde K_{\ell+1,\ell}^{\rm red} = - \sqrt{K^2 - F^2} < 0$.
In this case one can show that the Hesse matrix $M$ is not positive
semi-definite such that the fixed point must be unstable. Restricting
ourselves to the dynamically stable states, we find that the phase 
differences are all equal and given by
\be
     \theta_{j+1} - \theta_j = \mbox{arcsin}(F/K).
\ee
The geometric condition now yields
\be
    N \, \mbox{arcsin}(F/K) = 0 \quad (\mbox{mod } 2 \pi),
\ee
which can be satisfied only for certain \emph{discrete} values of $F$. The
geometric condition thus induces a `quantization' of the phase differences
\be
   \theta_{j+1} - \theta_j = \frac{n}{N} 2 \pi, \qquad \mbox{with} \; 
   n \in \left\{ - \left\lfloor \frac{N-1}{4} \right\rfloor, - \left\lfloor \frac{N-1}{4} \right\rfloor+1, 
   \ldots, + \left\lfloor \frac{N-1}{4} \right\rfloor\right\},
\ee
where $\lfloor \cdot \rfloor$ denotes the floor function.   We note that
solutions with $(\theta_{j+1} - \theta_j) = \pm \pi/2$ have jacobian 
eigenvalues $\mu_k = 0$
for all $k \in \left\{1,\ldots,N\right\}$. In this case linear stability 
analysis fails to determine dynamical stability properties (see Khazin und 
Shnol \cite{khazin1991stability} for details). 
For two coupled oscillators it is rather easy to 
see that the fixed point is nonlinearly unstable. In total, we thus find $2 \times \lfloor (N-1)/4 \rfloor +1$ different stable stationary 
states.

This example is very simple but illustrates three important general results.
First, there can be \emph{multiple} stable fixed points in cyclic
networks as previously noticed by \cite{Ocha09,Tayl12,Ioni13,Meht14}. 
This fact is not fully recognized in power engineering, probably 
because most authors in this community concentrate of fully connected 
networks which arise after a Kron reduction \cite{Tayl12,Mott13}.
Second, the oscillator model (\ref{eqn:power}) allows for 
stable fixed points with a persistent current around a cycle.
Interestingly, theses states are phase locked but \emph{not phase
ordered} in the sense that the phase order parameter
\cite{Stro00}
\be
   r e^{i \psi} := \frac{1}{N} \sum_j e^{i \theta_j} 
   \label{eqn:deforder}
\ee
vanishes exactly for $K > 0$.
Third, the geometric condition induces the 
\emph{discreteness} of the phase differences
although the dynamic condition allows for continuous
values of cycle flows.

\subsection{Braess' paradox}
\label{sec:braess}

Here we introduce a special example which illustrates the
paradoxical effects of geometric frustration most clearly.
We consider the oscillator network depicted in Fig.~\ref{fig:braess} (a)
consisting of $N=4$ nodes placed on a cyclic network, where nodes $1$ and $3$ 
have power injection $-P$ and the nodes $2$ and $4$ have power injections $P$.  
In particular, we analyze what happens if the capacity
of the upper edge $(1,2)$ is increased from $K$ to 
$K' = K + \kappa$.

\begin{figure}[tb]
\centering
\includegraphics[width=0.45\columnwidth, angle=0]{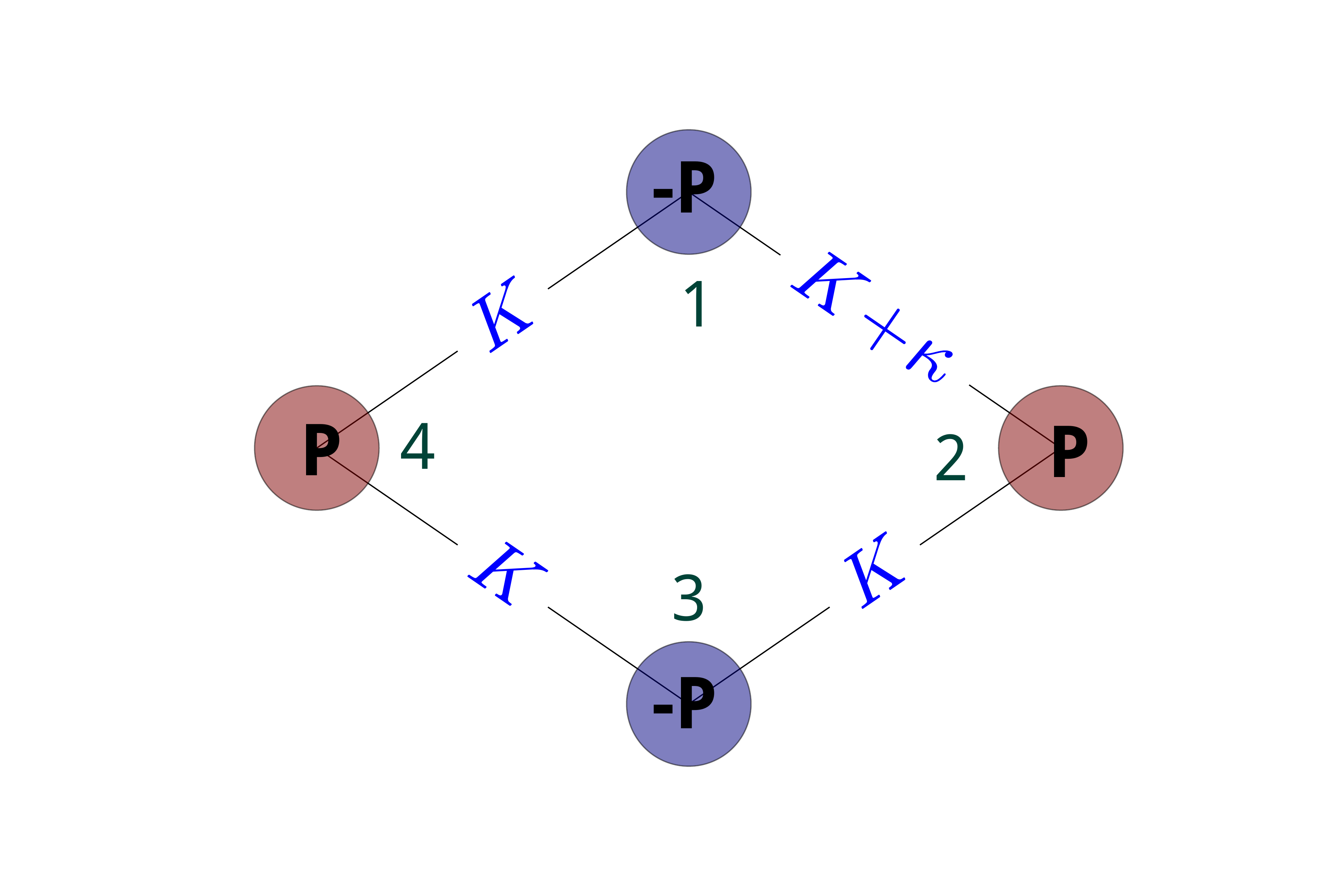}
\includegraphics[width=0.4\columnwidth, angle=0]{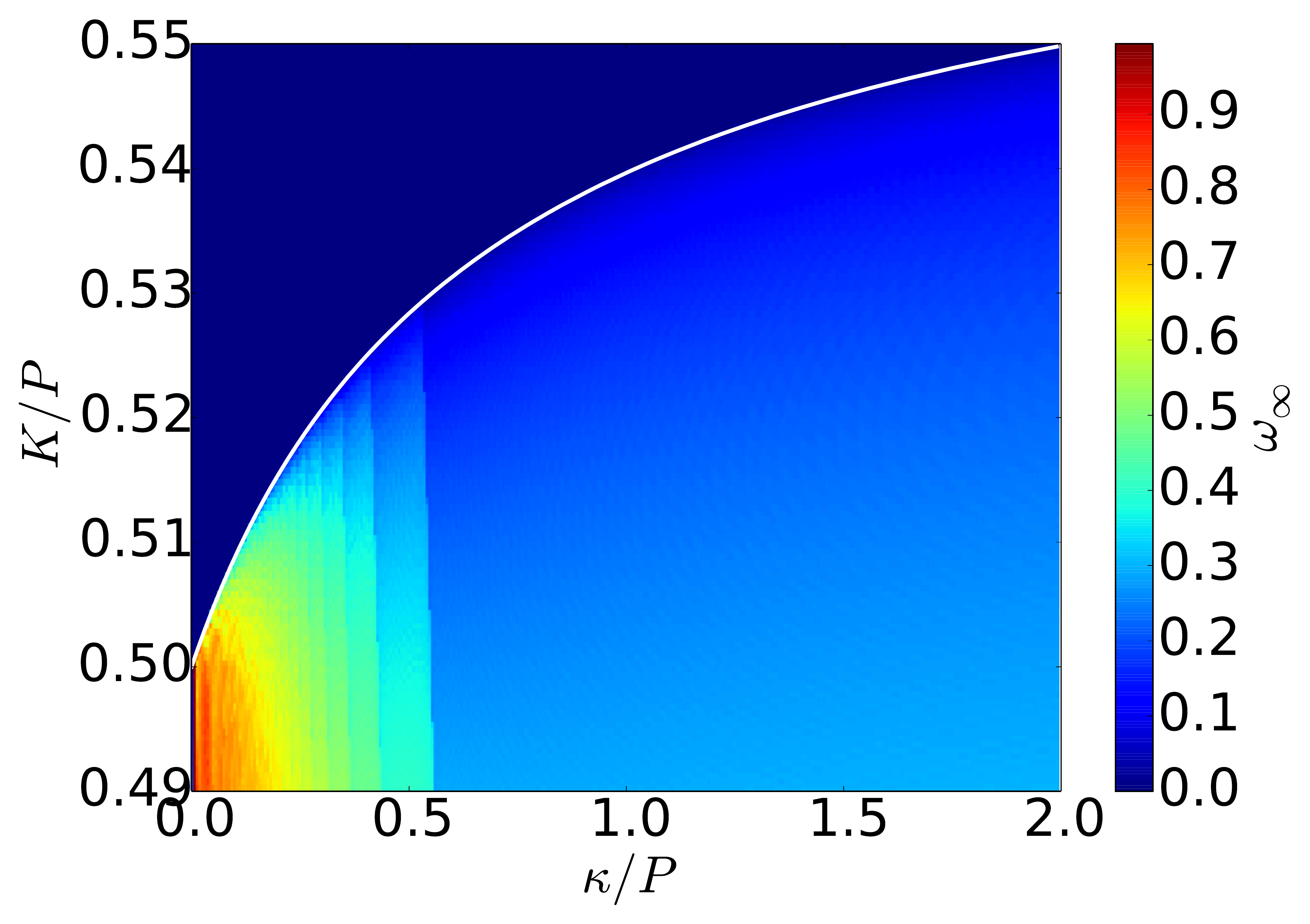}

\caption{\label{fig:braess}
Geometric frustration induces Braess' paradox.
(a) Topology of the network under consideration.
(b) Average phase velocities $\omega_{\infty}$ defined in \eqref{def:omega_inf} for
different values of $K$ and $\kappa$. For fixed points, $\omega_{\infty}=0$.
The white line shows the critical coupling $K_c$. The fixed point can be lost when the
local transmission capacity $\kappa$ 
\emph{increases}.
}
\end{figure}

The dynamic condition for this network reads
\be
   0 = P_j + ( K_{j+1,j} S_{j+1,j}  - K_{j,j-1} S_{j,j-1} ),
   \label{eqn:dc-braess}
\ee
and $|S_{j+1,j}| \le 1$, identifying node $j=5$ with $j=1$.
 For notational convenience,
we define the vector 
\be
   \vec S = (S_{4,1} ,S_{1,2}, S_{2,3}, S_{3,4} ).
\ee
The solutions of the linear system of equations (\ref{eqn:dc-braess}) 
span a one-dimensional affine space parametrized by a real 
number $\epsilon$,
\be
   \vec S = \frac{P}{K} 
      \left( \vec S_a - \epsilon \, \vec S_b \right).
   \label{eqn:SaSb2}
\ee
The vector $\vec S_a = (1,0,K/K',0)$ is the 
inhomogeneous solution of the linear system (\ref{eqn:dc-braess}), 
and the vector $\vec S_b = (1,1,K/K',1)$ is a homogeneous 
solution corresponding to a cycle flow.
Evaluating the condition $|S_{j+1,j}| \le 1$ yields
a necessary condition for the existence of a fixed point
\be
     K \ge P.
\ee

For $\kappa = 0$ this condition is also sufficient for
the existence of a stable fixed point. If the capacity of the
upper link increases, $\kappa > 0$, geometric frustration
inhibits phase locking. A solution of the dynamical conditions
always  exists for   $K \ge P$, but this can become 
incompatible with the geometric condition.   We illustrate this  
in the stability diagram  in Figure \ref{fig:braess} (b). 
A stable fixed point exists only in the parameter region
above the white line. As we see in Figure \ref{fig:braess} (b), the minimum $K$ 
required to maintain steady operation, the \emph{critical coupling} $K_c$, 
increases when $\kappa$ is increased.  

To further characterize the long-time behavior of the oscillator network 
we define $\omega_{\infty}$ as the average phase velocities of all the nodes 
in the limit of large time
\begin{align}
\label{def:omega_inf}
\omega_{\infty}&=\lim_{T\to\infty}\frac{1}{\tau}\int_{T}^{T+\tau} \frac{1}{N}\sum_{j=1}^N \left|\omega_j(t)\right|.
\end{align}
Therefore $\omega_{\infty}$ must be zero for steady operation to take place.
As expected we find $\omega_{\infty}=0$ in the stable parameter region
above the white line $K>K_c$ and $\omega_{\infty}>0$ in the unstable parameter region
below the white line $K<K_c$. Remarkably, $\omega_{\infty}$ is largest for small values of 
$\kappa$ and, of course, $K < K_c(\kappa)$.

This leads to the paradoxical effect that an increase of local 
transmission capacity reduces the ability of the network to support a phase 
locked fixed point.
This behavior can also be seen as an example of 
Braess' paradox \cite{12braess,13nonlocal} which has been first predicted for traffic 
networks \cite{Brae68}.


\section{Multistability and the number of fixed points}

The previous examples show that there can be a large number of stable 
fixed points in a cyclic network. In the following we derive conditions for 
the existence and bounds for the number of stable fixed points depending on the 
network structure.
We start with a deeper analysis of the dynamic condition for arbitrary networks, 
which is a necessary prerequisite for the existence of a stable fixed point.
Then we turn to the geometric condition and derive bounds for
the number for fixed points. The arguments depend heavily on the
network structure such that we will start with trees and simple cycles
before we turn to more complex topologies.

\subsection{The dynamic condition}

We first analyze whether the dynamic condition (\ref{eqn:dc3}) admits a solution.
The problem reduces to the Multi-source multi-sink maximum flow problem, which can
be solved by a variety of different algorithms \cite{nussbaum2010,ford2015flows}.

So let $G=(V,E)$ be a connected graph with $N$ nodes and $L$ edges. Each edge is
assigned a capacity given by $K_1,\ldots,K_L$ and each node has an in- or outflux
given by $P_1,\ldots,P_N$. We define an extended graph $G' = (V',E')$ by adding
two vertices $s$ and $t$ to the vertex set,
\be
  V' = V \cup \{s,t\},
\ee
and adding directed links connecting $s$ ($t$) to all nodes with positive (negative) power injection:
\begin{eqnarray}
    EÕ = E 
    \cup \{(s \rightarrow j) | j \in V, P_j \ge 0 \}
     \cup  \{(j \rightarrow t) | j \in V, P_j  < 0 \}.
\end{eqnarray}
The capacity of the newly added links is infinite. Then one finds the theorem:

\begin{thm}
\label{th:multisource}
A solution of the dynamic condition (\ref{eqn:dc3}) exists if and only if the
value of the maximum $s$-$t$-flow $\mathcal{F}_{st}$ in the network $G'$ is
larger or equal to the cummulated input power 
\be
    \mathcal{F}_{st}  \ge \sum_{j \in V, P_j \ge 0} P_j.
\ee
\end{thm}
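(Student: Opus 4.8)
The plan is to recognize Theorem~\ref{th:multisource} as an instance of the max-flow/feasible-flow correspondence and reduce it to the classical max-flow--min-cut machinery. The dynamic condition \eqref{eqn:dc3} asks for edge flows $F_e$ with $|F_e| \le K_e$ satisfying nodal conservation $P_j + \sum_e I_{j,e} F_e = 0$. This is precisely a \emph{feasible-flow} (transshipment) problem: the injections $P_j$ act as supplies and demands, and we must route them through the capacitated network $G$. The standard device, which the paper has already set up, is to attach a super-source $s$ feeding every node with $P_j \ge 0$ and a super-sink $t$ draining every node with $P_j < 0$, using infinite-capacity links. The claim then becomes: a feasible transshipment in $G$ exists iff the maximum $s$--$t$ flow in $G'$ saturates all the source links, i.e. iff $\F_{st} \ge \sum_{j: P_j \ge 0} P_j$.

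**First I would** establish the ``only if'' direction, which is the easy one. Suppose a solution $\vec F$ of \eqref{eqn:dc3} exists. Orient each edge of $G$ by the sign of $F_e$ to get a nonnegative flow of magnitude $|F_e| \le K_e$. Now push $P_j$ units along each source link $(s \to j)$ for $P_j \ge 0$ and draw $|P_j|$ units along each sink link $(j \to t)$ for $P_j < 0$. Nodal conservation in $G$ guarantees that at every $j \in V$ the inflow matches the outflow once these terminal links are included, so this is a genuine $s$--$t$ flow in $G'$ of value exactly $\sum_{j: P_j \ge 0} P_j$. Hence the maximum flow is at least this value. (One should note $\sum_{j: P_j \ge 0} P_j = -\sum_{j: P_j < 0} P_j$ because $\sum_j P_j = 0$ by the co-rotating-frame normalization, so the source and sink capacities balance and the bound is the right one.)

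**The ``if'' direction** is where the real content lies, and I expect it to be the main obstacle. Here I would invoke the max-flow--min-cut theorem: if $\F_{st} \ge \sum_{j: P_j \ge 0} P_j$, then since no $s$--$t$ flow can exceed the total capacity $\sum_{j: P_j \ge 0} P_j$ leaving $s$, equality holds and there is a flow saturating every source link. This saturated flow carries exactly $P_j$ into each positive node and exactly $|P_j|$ out of each negative node. Restricting this flow to the edges of $G$ (discarding the terminal links) and re-signing it according to the fixed edge orientation of the incidence matrix $I$ yields edge values $F_e$ with $|F_e| \le K_e$. Conservation at each internal node, after accounting for the saturated terminal links, reduces precisely to $P_j + \sum_e I_{j,e} F_e = 0$, which is \eqref{eqn:dc3a}. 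The delicate point to handle carefully is the bookkeeping of signs and orientations: the flow on $G'$ is nonnegative and directed, whereas $F_e$ in \eqref{eqn:dc3} is signed relative to an arbitrary but fixed edge orientation, so one must verify the translation between a directed $s$--$t$ flow and a signed conserved flow is a clean bijection, and that the infinite capacities on the terminal links never bind (so the only active constraint is saturation of the source side).

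**To close the argument** I would remark that the maximum $s$--$t$ flow value and a maximizing flow are computable by any standard augmenting-path or push--relabel algorithm \cite{ford2015flows}, so the feasibility of the dynamic condition is decidable in polynomial time; this is the practical payoff of the reduction and connects back to the multi-source multi-sink maximum-flow formulation mentioned before the statement.
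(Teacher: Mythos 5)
The paper itself offers no proof of Theorem~\ref{th:multisource} --- it is stated as a direct appeal to standard multi-source/multi-sink max-flow theory --- so your proposal has to stand on its own, and it is indeed the standard reduction. Your ``only if'' direction is correct under any capacity convention: a feasible solution of \eqref{eqn:dc3}, together with terminal links carrying $P_j$ (resp.\ $|P_j|$), is an $s$--$t$ flow in $G'$ of value $\sum_{j:P_j\ge 0} P_j$, so $\mathcal{F}_{st}$ is at least that value.

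The gap is in your ``if'' direction, and it is a real one given the construction the paper actually prescribes. You argue that no $s$--$t$ flow can exceed $\sum_{j:P_j\ge 0}P_j$ because this is ``the total capacity leaving $s$''; but the paper explicitly assigns \emph{infinite} capacity to all newly added links, so the capacity leaving $s$ is unbounded and this step is unavailable. Worse, under the paper's stated construction the claimed equivalence itself fails: take $V=\{1,2,3\}$ with $P_1=P_2=1$, $P_3=-2$, and edges $(1,3)$, $(2,3)$ with capacities $K_{13}=10$, $K_{23}=1/2$. No solution of \eqref{eqn:dc3} exists, since node $2$ must export one unit across an edge of capacity $1/2$; yet the minimum $s$--$t$ cut is $\{(1,3),(2,3)\}$, so $\mathcal{F}_{st}=10+1/2\ge 2$. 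Your proof is implicitly (and correctly) using the standard feasibility reduction in which the link $s\to j$ gets capacity $P_j$ and $j\to t$ gets capacity $|P_j|$; with those capacities $\mathcal{F}_{st}\le\sum_{j:P_j\ge 0}P_j$ holds automatically, the hypothesis forces saturation of every terminal link, and restricting the saturating flow to $G$ (with the usual cancellation of opposite flows on an undirected edge, plus your sign bookkeeping) yields \eqref{eqn:dc3}. So the argument is sound, but only after the paper's infinite capacities are replaced by these finite ones --- a correction you should state explicitly, since with the capacities as written the key inequality you invoke is false and the theorem needs this repair.
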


Alternatively, a sufficient condition for the existence of a solution can be found from dividing the graph into parts: Let $(V_1,V_2)$ an arbitrary partition of $V$ and $E(V_1,V_2)$ the cutset induced by this partition. Then we define
\be
   \bar P_1 = \sum_{v_j \in V_1} P_{v_j}, \qquad
   \bar P_2 = \sum_{v_j \in V_2} P_{v_j}, \qquad
   \bar K_{12} = \sum_{e \in E(V_1,V_2)} K_e.
\ee
We note that have assumed that $\sum_j P_j = 0$, w.l.o.g, such that we 
always hat $\bar P_1+ \bar P_2 = 0$.

\begin{thm}
\label{th:partition}
If for all partitions $(V_1,V_2)$ we have
\be
   |\bar P_1| = |\bar P_2| \le \bar K_{12}
\ee
then there exists a solution of the dynamic condition (\ref{eqn:dc3}).
\end{thm}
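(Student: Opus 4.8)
The plan is to build directly on Theorem~\ref{th:multisource} together with the max-flow min-cut theorem, after recognizing that the hypothesis is precisely the Gale--Hoffman feasibility condition for a flow with prescribed supplies and demands. The dynamic condition \eqref{eqn:dc3} asks for a flow $\vec F$ that injects/drains exactly $P_j$ at every node while respecting the symmetric capacity bound $|F_e|\le K_e$; this is a feasible $b$-flow problem, so the whole statement should follow from a single application of the cut characterization already established.

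First I would recall that, by the max-flow min-cut theorem applied to the extended graph $G'$, the value $\mathcal{F}_{st}$ equals the minimum capacity of an $s$-$t$ cut. The key step is to classify the finite cuts. Because every arc $s\to j$ (for $P_j\ge 0$) and every arc $j\to t$ (for $P_j<0$) carries infinite capacity, any cut of finite value must keep all source nodes on the $s$-side and all sink nodes on the $t$-side. Hence the unique finite $s$-$t$ cut is $S=\{s\}\cup\{j:P_j\ge 0\}$, $T=\{t\}\cup\{j:P_j<0\}$, and its value is the total weight of the original edges crossing it, i.e.\ $\bar K_{12}$ for the partition $V_1=\{j:P_j\ge 0\}$, $V_2=\{j:P_j<0\}$. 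Consequently $\mathcal{F}_{st}=\bar K_{12}$ for this particular partition.

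I would then specialize the hypothesis to exactly this partition. Since $\sum_j P_j=0$, the cumulated input power is $\sum_{j:P_j\ge 0}P_j=\bar P_1=|\bar P_1|$, and the assumption $|\bar P_1|\le \bar K_{12}$ gives $\sum_{j:P_j\ge 0}P_j\le \mathcal{F}_{st}$. Theorem~\ref{th:multisource} then guarantees a solution of \eqref{eqn:dc3}, which is what we want. It is worth remarking that only the single source--sink partition is actually needed for sufficiency; the condition stated for all partitions is the symmetric Gale--Hoffman criterion and is in fact also necessary, so the bound is tight.

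The main obstacle --- really the only nontrivial point --- is the correct handling of the infinite-capacity arcs when classifying the finite cuts of $G'$, together with the bookkeeping that matches the cut value to $\bar K_{12}$ and the throughput $\sum_{j:P_j\ge 0}P_j$ to $|\bar P_1|$. Once the unique finite cut is identified, no estimation remains and the result is immediate from the two ingredients above. If one preferred a self-contained argument for the general ``all partitions'' form, I would instead invoke the Gale--Hoffman theorem directly: a feasible flow with supplies $P_j$ and symmetric capacities $K_e$ exists iff $\sum_{j\in W}P_j\le \bar K(W,V\setminus W)$ for every $W\subseteq V$, and taking $W$ together with its complement turns this into exactly $|\bar P_1|\le \bar K_{12}$.
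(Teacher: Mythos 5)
Your main argument has a genuine gap, and it sits exactly at the step you flagged as ``the only nontrivial point.'' You take Theorem~\ref{th:multisource} at face value with the infinite-capacity arcs, and you deduce --- correctly, under that reading --- that the only finite $s$-$t$ cut of $G'$ is the source/sink cut, so that $\mathcal{F}_{st}=\bar K_{12}$ for the partition $V_1=\{j:P_j\ge 0\}$, $V_2=\{j:P_j<0\}$. But this very deduction shows that the stated criterion collapses to a single cut condition, and a single cut condition is \emph{not} sufficient for feasibility; hence Theorem~\ref{th:multisource} with infinite capacities cannot be correct as literally written, and your proof inherits its failure. Concretely: take a path on three nodes $a,b,c$ with edges $(a,b)$, $(b,c)$, powers $P_a=1$, $P_b=1$, $P_c=-2$, and capacities $K_{ab}=1/2$, $K_{bc}=2$. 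The source/sink partition satisfies $|\bar P_1|=2\le \bar K_{12}=2$, and indeed $\mathcal{F}_{st}=2$ in your $G'$, so your argument (and your closing remark that only the source--sink partition is needed) would assert that a solution exists; yet flow conservation at the leaf node $a$ forces $|F_{ab}|=|P_a|=1>K_{ab}$, so none exists. Theorem~\ref{th:partition} itself is untouched, because the partition $\{a\}$ versus $\{b,c\}$ violates its hypothesis --- which is precisely why the ``for all partitions'' quantifier cannot be discarded. The standard feasibility reduction puts capacity $|P_j|$ (not infinity) on the auxiliary arcs; with that correction the min cut no longer corresponds to one distinguished partition, a cut $(\{s\}\cup V_1,\{t\}\cup V_2)$ has capacity $\sum_{j\in V_2,P_j\ge 0}P_j+\sum_{j\in V_1,P_j<0}|P_j|+\bar K_{12}$, and the hypothesis for \emph{all} partitions is exactly what guarantees every such cut is at least $\sum_{P_j\ge 0}P_j$. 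That repaired argument works, but it is structurally different from the one you wrote.

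For comparison, the paper does not route the proof through Theorem~\ref{th:multisource} at all: it argues by contradiction, starting from any flow satisfying conservation (\ref{eqn:dc3a}), selecting an overloaded edge, rerouting as much flow as possible around it via a max-flow computation in a residual network, and using max-flow min-cut to exhibit, after finitely many such steps, a partition with $|\bar P_1|>\bar K_{12}$. Your fallback of invoking the Gale--Hoffman theorem directly is mathematically correct, but it amounts to citing the result being proven --- the paper's rerouting argument is essentially a self-contained proof of the sufficiency half of Gale--Hoffman --- so it cannot stand as the main line unless that theorem is explicitly admitted as a black box.
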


\begin{proof}
The idea is to prove that:\\
$\nexists$ a solution of the dynamic condition (\ref{eqn:dc3a}) and (\ref{eqn:dc3b}).\\
$\Leftrightarrow$ All solutions of (\ref{eqn:dc3a}) violate (\ref{eqn:dc3b}). \\
$\Rightarrow$ $\exists$ a partition $(V_1,V_2)$ with $|\bar P_1| \ge \bar K_{12}$.

Reversing arguments then yields the theorem. It remains to show that the statement 
``$\Rightarrow$'' is true. 

Let $\vec F$ be a solution of (\ref{eqn:dc3a}). According to our assumption the set of 
overloaded edges
\be
   E_{\rm ov} = \{ e \in E | |F_e| > K_e \}
\ee 
is not empty. Now consider one overloaded edge $e_0 = (u,v) \in E_{\rm ov}$. We assume
w.l.o.g that the flow is from $u$ to $v$, i.e.  that $F_{u \rightarrow v} > K_{uv} > 0$. We define
the weighted directed network $\widetilde G(V,\widetilde E)$ with $\widetilde E = E\backslash e_0$ 
and coupling constants
\be
   W_{i \rightarrow j} =  \max\{ 0 , K_{ij} - F_{i \rightarrow j} \}.  
\ee
We determine the maximum flow pattern $\Delta F_e, e \in \widetilde E$ with the
value $\Delta F_{\rm max}$ from $u$ to $v$ in the network $\widetilde G$. According to 
the max-flow min-cut theorem there is a partition $(V_1,V_2)$ with $u \in V_1$ and
$v \in V_2$ and the associated cutset $\widetilde E(V_1,V_2)$ such that
\be
   \Delta F_e = W_e \quad \text{ for all } e \in \widetilde E(V_1,V_2).
\ee
Now consider the flow pattern $\vec F'$ defined by
\begin{align}
   F'_e &= F_e + \Delta F_e \qquad e \in \widetilde E,\\
   F'_{e_0} &=  F_{e_0} - \Delta F_{\max}
\end{align}
This is a new solution of the condition (\ref{eqn:dc3a}). Basically we have rerouted the
maximum possible flow from the edge $e_0 = (u,v)$ to alternative paths from $u$ to
$v$. Furthermore we define the edge set $E(V_1,V_2) = e_0 \cup \widetilde E(V_1,V_2)$, 
which is a cut of the original graph $G$.

We now have to distinguish two cases:

Case 1: The maximum flow value $\Delta F_{\max} < F_{e_0} - K_{e_0}$. Then
the edge $e_0$ is still overloaded, i.e. we have $F'_{e_0} > K_{e_0}$. Summing up 
equations (\ref{eqn:dc3a}) over the nodes in $V_1$ and $V_2$, respectively, yields 
\be
    \bar P_1 = - \bar P_2 = \sum_{e \in E(V_1,V_2)} F'_{e}
\ee
However, we know that $F'_{e_0} > K_{e_0}$ and $F'_{e} = K_{e}$ for
all other $e \in E(V_1,V_2)$ such that
\be
   \sum_{e \in E(V_1,V_2)} F'_{e} > \bar K_{12}
\ee
and the statement ``$\Rightarrow$'' follows.

Case 2: The maximum flow value $\Delta F_{\max} \ge F_{e_0} - K_{e_0}$.
The $e_0$ is no longer overloaded with respect to the flow pattern $\vec F'$. The set
of edges which is still overloaded
\be
   E'_{\rm ov} = \{ e \in E | |F'_e| > K_e \}
\ee
does no longer contain $e_0$, i.e. $E'_{\rm ov} \in E_{\rm ov}\setminus e_0$. However, this
set cannot be empty as we have assumed that there is no solution of (\ref{eqn:dc3a})
satisfying (\ref{eqn:dc3b}). Then we can just restart the procedure, selecting an edge
$e_1 \in E'_{\rm ov}$ and finding a max. flow between its adjacent vectors. Finally
we must arrive at the case 1 for which the statement ``$\Rightarrow$'' follows.

\end{proof}

\subsection{Tree network}

In the previous section we have argued that multistability
arises due to the possibility of cycle flows. In a tree there
are no cycles and thus no multistability and we obtain the
following result.

\begin{corr}
\label{eqn:corr-tree}
In a tree network, there is either no fixed point
or there are $2^{N-1}$ fixed points of which
one is stable and $2^{N-1}-1$ are unstable.
\end{corr}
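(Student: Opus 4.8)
The plan is to exploit the defining feature of a tree---that it has exactly $N-1$ edges and contains no cycles---to decouple the dynamic and geometric conditions completely. First I would observe that for a tree $L = N-1$, so the cycle space has dimension $L - N + 1 = 0$. By Lemma \ref{lem:cf-fp-corr} this means there are no nontrivial cycle flows, so the linear system (\ref{eqn:dc3a}) for the flows $\vec F$ has a unique solution whenever one exists. Concretely, the incidence matrix $I$ of a tree is $N \times (N-1)$ of full column rank $N-1$, and since we have assumed $\sum_j P_j = 0$ the vector $-\vec P$ lies in its column space; hence $\vec F$ is determined uniquely. The dichotomy in the statement then follows at once: either this unique $\vec F$ satisfies the capacity bound $|F_e| \le K_e$ on every edge, or it does not and no fixed point exists.

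Next I would count the fixed points in the non-degenerate case where $|F_e| < K_e$ strictly on every edge. Because a tree contains no cycles, the geometric condition (\ref{eqn:gc2}) is vacuous and the winding vector lives in $\mathbb{Z}^{0}$, so by Theorem \ref{th:fp-dyn-geo} every partition $E = E_+ \cup E_-$ yields a genuine fixed point. Each of the $N-1$ edges independently contributes a factor of two through the two branches $\Delta_e^{\pm}$ of (\ref{eqn:deltaS}), and propagating phases outward from a root through the acyclic, connected tree shows these give $2^{N-1}$ distinct phase configurations modulo the global shift. Since the branch choices change the shift-invariant phase differences $\theta_i-\theta_j$, distinct sign patterns give distinct fixed points, and this accounts for all of them, yielding the total $2^{N-1}$.

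The stability count is the heart of the argument. Writing the Hesse matrix as $M = I\,W\,I^T$ with $W = \mathrm{diag}(K_e^{\rm red})$ and $K_e^{\rm red} = K_e \cos(\theta_i^*-\theta_j^*)$, I would invoke Sylvester's law of inertia: since $I$ has full column rank $N-1$ for a tree, the inertia of $M$ equals the inertia of the diagonal matrix $W$ together with exactly one extra zero eigenvalue (the trivial mode along $(1,\ldots,1)$). Edges in $E_+$ carry weight $K_e^{\rm red} > 0$ and edges in $E_-$ carry weight $K_e^{\rm red} < 0$. Hence the unique all-$E_+$ fixed point gives $W$ positive definite, so $M$ has $N-1$ positive eigenvalues and is stable by Lemma \ref{thm:stability-M} (this is the normal-operation state of Corollary \ref{corr:stab-phasediff}); any fixed point with at least one $E_-$ edge makes $W$ indefinite, producing at least one negative eigenvalue of $M$ and thus an unstable fixed point. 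Therefore exactly one of the $2^{N-1}$ fixed points is stable and $2^{N-1}-1$ are unstable.

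The main obstacle I anticipate is making the inertia step fully rigorous---specifically, justifying that the full-column-rank factorization $M = I W I^T$ lets one read off the signature of $M$ directly from the signs of the edge weights, and pinning down the single zero eigenvalue. This is exactly where the tree hypothesis does the work, since a cycle would lower the rank of $I$ and break the clean correspondence between edge branches and eigenvalue signs. A secondary subtlety is the non-generic boundary case of a saturated edge, $|F_e| = K_e$, where $\Delta_e^+ = \Delta_e^-$ and the two branches coincide; there the count drops below $2^{N-1}$ and $M$ acquires an additional zero eigenvalue, so the clean $2^{N-1}$ statement should be understood for the generic unsaturated regime.
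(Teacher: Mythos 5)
Your proof is correct, and its first two steps (the flow is unique because a tree's cycle space has dimension $L-N+1=0$, and then each of the $N-1$ edges contributes the two branches $\Delta_e^{\pm}$ of \eqref{eqn:deltaS}, giving $2^{N-1}$ fixed points) are essentially identical to the paper's. Where you genuinely diverge is the stability count. The paper argues locally and edge-by-edge: given a fixed point with some edge in $E_-$, deleting that edge splits the tree into two components, and the indicator vector $\vec{v}$ of one component satisfies $\vec{v}^T M \vec{v} = K^{\rm red}_{\ell,\ell+1} < 0$, exhibiting a single negative direction, so the point is unstable by Lemma \ref{thm:stability-M}. You argue globally through the factorization $M = I W I^T$ with $W = \mathrm{diag}(K^{\rm red}_e)$ and an inertia argument, using that a tree's incidence matrix has full column rank $N-1$. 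Both are sound, but your inertia step does require the extension of Sylvester's law to congruence by a rectangular full-column-rank matrix; this is standard (the positive and negative indices of $I W I^T$ are bounded below by those of $W$ by pulling maximal definite subspaces back through the surjection $I^T$, and bounded above because $\mathrm{rank}(I W I^T)=\mathrm{rank}(W)$), yet it is precisely the fact you would have to prove, and precisely what the paper's explicit test vector sidesteps --- the paper's $\vec{v}$ is in effect a hand-built witness of one negative direction of the same quadratic form. In exchange, your route yields strictly more information: the number of negative eigenvalues of $M$ equals $|E_-|$ (the Morse index of each fixed point), and the unique zero mode is pinned to the global phase shift, whereas the paper only certifies instability. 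You also explicitly flag the degenerate saturated case $|F_e| = K_e$, where $\Delta_e^+$ and $\Delta_e^-$ coincide and the count drops below $2^{N-1}$; the paper's proof is silent on this, so strictly speaking both arguments establish the corollary in the generic unsaturated regime, and your treatment is the more careful of the two on this point.
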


Whether the fixed points exist or not can then be decided solely on the basis
of the dynamical codition (\ref{eqn:dc1}) resp. using theorem \ref{th:partition}.

\begin{proof}
By definition, a tree has $L=N-1$ edges such that the
space of solutions of the linear system (\ref{eqn:dc2a}) 
has dimension $L-N+1 = 0$. That is, there is either zero
or exactly one unique solution for the flows $F_{j,\ell}$.
In the first case no fixed point exists. In the latter case
there are $2$ possible values for the phase difference 
for each of edge given by equation (\ref{eqn:deltaS}).
Hence there are $2^L = 2^{N-1}$ fixed points.
Choosing the $+$-sign in equation (\ref{eqn:deltaS}) 
yields one stable fixed point as shown in corollary 
\ref{corr:stab-phasediff}.

It remains to show that all other fixed points are unstable.
So consider a fixed point with one edge where the cosine of the phased difference
is smaller than zero. The network is a tree such that it is decomposed
into two parts which are only connected by this edge. We label the
nodes by $1,\ldots,\ell$ in one part and by $\ell+1,\ldots,N$.
in the other part. Then the Hesse matrix $M$ 
(see section \ref{sec:bifurcation})
has the form
\be
    M  = \left( \begin{array}{c|c}
    M_1 & 0     \\   \hline
    0 & M_2 \\ 
   \end{array} \right) +
    \begin{pmatrix}
       \ddots & & & & \\
       & 0 &0 & 0 & 0 &  \\
        & 0 & K^{\rm red}_{\ell,\ell +1} & -K^{\rm red}_{\ell,\ell +1} & 0 & \\
        & 0 & -K^{\rm red}_{\ell,\ell +1} & K^{\rm red}_{\ell,\ell +1} & 0 & \\
       & 0 & 0 & 0 & 0 & \\
      &  &  & & & \ddots \\
  \end{pmatrix}, 
\ee
where $K^{\rm red}_{\ell,\ell +1} < 0$
and $M_1$ and $M_2$ are defined as in equation
(\ref{eq-M}) for the two parts of the network.
Now define the vector
\be
   \vec v = ( \underbrace{1,\ldots,1}_{\ell \, {\rm times}},
                     \underbrace{0,\ldots,0}_{(N-\ell) \, {\rm times}} )^T.
\ee
Due to the structure of the matrix $M_1$ we have 
$M_1 (1,\ldots,1)^T = \vec 0$ such that
\be
   \vec v^T M \vec v =  K^{\rm red}_{\ell,\ell +1} < 0.
\ee
Thus, the Hesse matrix $A$ is not positive semi-definite, 
i.e.~it has at least one negative eigenvalue and the 
fixed point is unstable (cf.~lemma \ref{thm:stability-M}).
\end{proof}

\subsection{Cycle flows and winding vector}

In the following we want to operationalize the theorem (\ref{th:fp-dyn-geo}),
which characterizes fixed points in terms of the flows and winding numbers,
to derive strict bounds for the number of fixed points in a network.
Restricting ourselves to normal operation ($E_- = \emptyset$) and using the
decomposition (\ref{eq:flowdiff_cycle_zero}), the definition of the winding
numbers (\ref{eq:def-winding}) reads
\begin{align}
   \varpi_c &= \frac{1}{2\pi} \sum_{e=1}^{L} z^c_e \arcsin\left( {\frac{F_e}{K_e}} \right) \nn\\
   &=\frac{1}{2\pi} \sum_{e=1}^{L}  z^c_e\arcsin
   \left( \frac{F_e^{(0)}+  \sum_{c' \in B_C} f_{c'} z^{c'}_e }{K_e} \right),
   \label{eq:def-winding-normal}
\end{align}
using equation \eqref{eq:flowdiff_cycle_zero}. The concept of winding numbers is particularly useful when they are unique. If we can find upper and lower bounds for the $\varpi_c$, then we can simply count the number of solutions $\vec \varpi \in \mathbb{Z}^{L-N+1}$ to obtain the number of fixed points. Uniqueness is rigorously shown for planar graphs in the following lemma. 

A graph is called planar if it can be drawn in the plane without any edge
crossings. Such a drawing is called a plane graph or a planar embedding of the
graph and any cycle that surrounds a region without any edges is called a face
of the graph \cite{Dies10}. For the sake of simplicity we adopt the convention
that for plane graphs the cycle basis $B_C$ is built up from the faces in the
following. Notably, many power grids and other supply networks are actually
planar. Crossing of power lines are not forbidden a priori, but are rare.

\begin{lemma}
\label{lem:wind-fp-corr-planar}
For a planar network, let $\vec{\theta}$ and $\vec{\theta'}$ be two fixed points satisfying the ``normal operation'' criterion \eqref{def:normalop}. If $\vec{\varpi}(\vec{\theta})=\vec{\varpi}(\vec{\theta'})$ then both fixed points are the same, i.e. the phases differ only by an additive constant:
\begin{align}
   \vec{\theta}&=\vec{\theta'} + c (1,1,\cdots,1)^T.
\end{align}
In other words, no two different fixed points in planar networks can have identical winding vector.  
\end{lemma}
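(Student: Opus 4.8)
The plan is to show that the two fixed points carry identical edge flows; once the flows coincide, normal operation fixes every phase difference $\Delta_e=\arcsin(F_e/K_e)$ uniquely in $(-\pi/2,\pi/2)$, and propagating phases along a spanning tree via \eqref{eqn:phase-path} forces $\vec\theta=\vec\theta'+c(1,\dots,1)^T$. So I would write $\delta F_e=F'_e-F_e$ for the flow difference and $g_e=\Delta'_e-\Delta_e$ for the difference of (reduced) phase differences. The whole proof reduces to showing $\delta F\equiv 0$, and I would extract this from two complementary structural facts about $\delta F$ and $g$ together with the monotonicity of the sine.

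First structural fact: $\delta F$ is a \emph{circulation}. Both fixed points obey flow conservation \eqref{eqn:dc3a} with the \emph{same} injections $P_j$, so $\sum_e I_{j,e}\delta F_e=\sum_e I_{j,e}F'_e-\sum_e I_{j,e}F_e=(-P_j)-(-P_j)=0$ at every node $j$; equivalently, by Lemma~\ref{lem:cf-fp-corr} and \eqref{eq:flowdiff_cycle_zero}, $\delta F=\sum_{c\in B_C}f_c\,\vec{z^c}$ lies in the cycle space. Second structural fact: $g$ is a \emph{gradient}. In normal operation the winding number \eqref{eq:def-winding} reads $\varpi_c=\frac{1}{2\pi}\sum_e z^c_e\arcsin(F_e/K_e)$, so the hypothesis $\vec\varpi(\vec\theta)=\vec\varpi(\vec\theta')$ gives $\sum_e z^c_e g_e=2\pi\,(\varpi_c(\vec\theta')-\varpi_c(\vec\theta))=0$ for every basis cycle $c$, hence for the whole (real) cycle space. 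Since the cut space and the cycle space are orthogonal complements in $\mathbb{R}^L$, $g$ lies in the cut space, i.e. $g_e=\sum_j I_{j,e}\phi_j$ for some nodal potential $\phi$.

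Now I would pair the two. Using $g=I^T\phi$ and the vanishing divergence of $\delta F$,
\begin{equation}
\sum_{e}\delta F_e\,g_e=\sum_e\delta F_e\sum_j I_{j,e}\phi_j=\sum_j\phi_j\Big(\sum_e I_{j,e}\delta F_e\Big)=0 .
\end{equation}
On the other hand, because we are in normal operation every edge satisfies $\Delta_e,\Delta'_e\in(-\pi/2,\pi/2)$, where $\sin$ is strictly increasing, so each summand $\delta F_e\,g_e=K_e\big(\sin\Delta'_e-\sin\Delta_e\big)\big(\Delta'_e-\Delta_e\big)\ge 0$, with equality if and only if $\Delta'_e=\Delta_e$. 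A sum of nonnegative terms that vanishes must vanish termwise, so $\Delta'_e=\Delta_e$ on every edge, hence $\delta F_e=K_e(\sin\Delta'_e-\sin\Delta_e)=0$ and $g\equiv 0$. Connectedness then makes $\phi$ constant and yields the claimed equality of phases up to an additive constant.

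The delicate points, which I would treat carefully rather than wave through, are: (i) the identification of $g_e=\Delta'_e-\Delta_e$ as an honest curl-free object, which is exactly where the \emph{normal-operation} form of the winding number is used, since outside normal operation $E_-\neq\emptyset$ and \eqref{eq:delta} would contribute the $\pi-\arcsin$ branch, destroying both the monotonicity and the clean cycle sum; and (ii) the strictness of the monotonicity, which needs $|F_e|<K_e$ on every edge, guaranteed because \eqref{def:normalop} forces $\cos(\theta_i^*-\theta_j^*)>0$ and hence $\Delta_e\in(-\pi/2,\pi/2)$ with no saturated edge. I would also remark that planarity is not essential to this argument: it enters only through the convenient convention of taking the faces as the cycle basis $B_C$, whereas the actual engine is the $\mathbb{R}^L$-orthogonality of cycle and cut spaces combined with the monotonicity of the coupling function. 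The dual-graph picture that planarity affords, in which $\delta F$ appears as a potential difference across adjacent faces, is the natural place to seek a maximum-principle proof, but the pairing argument above sidesteps it entirely.
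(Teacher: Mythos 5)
Your proof is correct, but it takes a genuinely different route from the paper's. The paper argues by contradiction with a maximum-principle argument that is intrinsically planar: taking the faces of a plane embedding as the cycle basis, it either treats a uniform shift $f'_c-f_c\equiv\Delta f$ by looking at a boundary face, or else picks the face $k$ maximizing $f'_c-f_c$ together with an adjacent face $n$ where the inequality \eqref{eq:planar-cf-greater} is strict, and then uses Mac Lane's planarity criterion (each edge lies in at most two basis cycles, with $z^k_e z^n_e=-1$ on a shared edge) plus monotonicity of $\arcsin$ to conclude $\varpi_k(\vec f')>\varpi_k(\vec f)$, a contradiction. You instead run an orthogonality (energy) argument: $\delta F=F'-F$ lies in the cycle space because both flows satisfy \eqref{eqn:dc3a} with the same injections, $g=\Delta'-\Delta$ has vanishing sums over all basis cycles by the winding-vector hypothesis and \eqref{eq:def-winding-normal} and hence lies in the cut space $g=I^T\phi$, so $\sum_e \delta F_e\,g_e=0$; strict monotonicity of $\sin$ on $(-\pi/2,\pi/2)$ (guaranteed by \eqref{def:normalop}, which also rules out saturated edges) makes each summand $\delta F_e\,g_e\ge 0$ with equality iff $\Delta_e=\Delta'_e$, forcing $\Delta\equiv\Delta'$ and then $\vec\theta=\vec\theta'+c(1,\ldots,1)^T$ via \eqref{eqn:phase-path}. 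Your closing remark is the key difference and it is right: planarity is never used in your argument, so you actually prove a stronger statement, namely uniqueness of the winding vector among normal-operation fixed points on \emph{any} connected network, for any cycle basis, and indeed for any coupling function that is odd and strictly increasing on the normal-operation range. What the paper's proof buys is the face/dual-graph picture (extremal face, boundary cycles) that it reuses elsewhere for intuition and for counting winding numbers face by face; what yours buys is generality, basis independence, and a shorter argument that would, for instance, let the uniqueness guarantee behind the fixed-point-enumeration algorithm be stated beyond planar topologies. (A cosmetic point: the final appeal to connectedness making $\phi$ constant is unnecessary, since the conclusion already follows from $\Delta\equiv\Delta'$ and phase propagation along a spanning tree.)
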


\begin{proof}
Choose as the cycle basis $B_C$ the faces of the plane embedding. The two fixed points
can differ only via cycle flows such that the flows can be written as
\begin{align}
   \mbox{fixed point} \, \vec  \theta: \quad & F_e = F_e^{(0)} + \sum_{c \in B_C} f_c z^c_e \\
   \mbox{fixed point} \, \vec \theta': \quad & F'_e = F_e^{(0)} + \sum_{c \in B_C} f_c' z^c_e
\end{align}
defining two cycle flow vectors $\vec{f}$ and $\vec{f'}$. We write 
$\vec \varpi(\vec f')$ and $\vec \varpi(\vec f)$ in short-hand notation for 
the corresponding winding vectors. We show that 
$\vec \varpi(\vec f') = \vec \varpi(\vec f)$ implies $\vec f' = \vec f$ and 
thus $\vec F' = \vec F$.   As we are assuming normal operation, we can 
reconstruct the phases via \eqref{eqn:phase-path} and thus find 
$\vec{\theta}=\vec{\theta'} + c (1,1,\cdots,1)^T$ as we need to show.  

So assume that $\vec \varpi(\vec f') = \vec \varpi(\vec f)$  and $ f'_c \neq f_c$
for at least one cycle $c$. We show that this leads to a contradiction such that
the lemma follows. First consider the case that $f_c' - f_c$ is the same for all
cycles, $f'_c - f_c = \Delta f \neq 0$ for all cycles $c \in B_C$. Then choose a
cycle $k$ at the boundary. If $\Delta f > 0$ we find $\varpi_k(\vec f') > \varpi_k(\vec f)$
and if  $\Delta f < 0$ we find $\varpi_k(\vec f') < \varpi_k(\vec f)$. This
contradict the assumption and the lemma follows.

Otherwise, choose a cycle for which the quantity $f_c' - f_c$ is the largest. We can find a cycle $k$ such that
\begin{align}
   f'_k - f_k & \geq f'_\ell - f_\ell  \quad \text{ for all } \; \ell \neq k,  \\
   f'_k - f_k & >     f'_n - f_n \quad \mbox{for at least one cycle} \, n \, \mbox{adjacent to} \, k. 
    \label{eq:planar-cf-greater}
\end{align}
or, equivalently,
\begin{align}
   f'_k-f'_\ell & \geq f_k - f_\ell  \quad \text{ for all } \; \ell \neq k,  \\
   f'_k-f'_n & > f_k - f_n \quad \mbox{for at least one cycle} \, n \, \mbox{adjacent to} \, k. 
\end{align}

We now exploit that any edge belongs to at most two cycles, according to 
Mac Lane's planarity criterion \cite{maclane}. Choosing an edge $e$ which is part 
of both the cycles $k$ and $n$, we have $z^{k}_e z^{k}_e  = 1$ and 
$z^{k}_e z^{n}_e  = -1$.  For all other cycles $\ell\neq k,n$, we have 
$z^{\ell}_e = 0$.  Thus we find  
\begin{align}
    z^{k}_e F_e^{(0)} + \underbrace{z^{k}_e z^{k}_e}_{=+1} f'_k + \underbrace{z^{k}_e z^{n}_e}_{=-1} f'_n +   \sum_{\ell \neq k,n} \underbrace{z^{k}_e z_e^{\ell}}_{=0} f'_\ell  
          &> 
 z^{k}_e F_e^{(0)} + \underbrace{z^{k}_e z^{k}_e}_{=+1} f_k + \underbrace{z^{k}_e z^{n}_e}_{=-1} f_n +   \sum_{\ell \neq k,n} \underbrace{z^{k}_e z_{e}^{\ell}}_{=0} f_\ell       \\
   z^{k}_e F_e^{(0)} + \sum_c z^{k}_e z^{c}_e f'_c & > z^{k}_e F_e^{(0)} + \sum_c z^{k}_e z^{c}_e f_c.  
\end{align}
For every other edge $e'$ in the cycle $k$ we find by the same procedure 
(using \eqref{eq:planar-cf-greater}) that 
\begin{align}
    z^{k}_{e'} F_e'^{(0)} + \sum_c z^{k}_{e'}z^{c}_{e'} f'_c & \ge z^{k}_{e'}F_{e'}^{(0)} + \sum_c z^{k}_{e'}z^{c}_{e'} f_c.  
\end{align}
Substituting these two inequalities in the definition (\ref{eq:def-winding-normal}) and using that
the arcsin is monotonically increasing and point-symmetric about the origin 
such that $\arcsin(z_e^k x) = z_e^k \arcsin (x)$, we find 
\begin{align}        
   \varpi_k(\vec f')    &>      \varpi_k(\vec f).
\end{align}
This contradicts our contrary assumption, which concludes the proof.  
\end{proof}

\subsection{Simple cycles}
\label{sec:cycles}

For networks containing a single cycle, tight upper and lower bounds
can be obtained for the number of fixed points satisfying
$\cos(\theta_i^*  - \theta_j^*) > 0$ for all edges $(i,j)$. These states 
correspond to the normal operation of a power grid and are guaranteed 
to be stable by corollary \ref{corr:stab-phasediff}. Other stable steady 
states can exist in particular at the border of the stable parameter 
region \cite{epjst14}. 
We label the nodes as $1,2,\ldots,N$ along the cycle, fixing the
direction of counting in the counter-clockwise direction and identify the 
node $1$ with $N+1$ and $0$ with $N$. Likewise we fix the orientation 
of the edges $e \in \left\{1,\ldots,L\right\}$ such that $F_e>0$ describes a 
counter-clockwise flow and $F_e<0$ a clockwise flow.

We will first calculate the exact number of fixed points counting the number 
of different allowed winding numbers. 
However, this result depends on one particular solution of the dynamic conditions
(\ref{eqn:dc1}), thereby limiting its applicability. 
We therefore also derive lower and upper bounds for the number of 
fixed points in terms of a few simple characteristics of the grid, in 
particular the maximum partial net power.   These bounds do not depend on any 
particular solution of the dynamical condition.

\begin{remark}
\label{rem:ring-windnum}
For any ring network $\R_N$ with $N$ nodes, the cycle flow vector defined in
\eqref{lem:cf-fp-corr} and the winding vector 
defined in \eqref{def:wvec} naturally reduce into single numbers. We refer to 
them as \emph{cycle flow} 
$f_c$ and \emph{winding number} $\varpi_c$, following \cite{Ocha09}. These two 
quantities will be crucial in establishing the results in the rest of this 
section.  
\end{remark}

\begin{thm}
\label{lemma-cycle}
For a ring network $\R_N$, the number of normal operation fixed point  (denoted by $\N$) 
is given by 
\begin{align}   
   \N =  \ceil{\frac{1}{2\pi} \sum_j\arcsin\left(\frac{F_{j+1,j}^{(0)}+f^{\rm max}_c}{K_{j+1,j}}\right) }
    - \floor{\frac{1}{2\pi}\sum_j\arcsin\left(\frac{F_{j+1,j}^{(0)}+f^{\rm min}_c}{K_{j+1,j}}\right) }
     -1
\end{align}
where $\floor{\cdot}$ denotes the floor function and $\ceil{\cdot}$ denotes
the ceiling function. 
$F_{ij}^{(0)}$ is one particular solution to the dynamic condition (\ref{eqn:dc1})
and
\begin{align}
   \label{eqn:def-Fmaxmin}
   f^{\rm max}_c &= \min_j \left( K _{j+1,j} - F_{j+1,j}^{(0)} \right), \nn \\
   f^{\rm min}_c &= \max_j  \left( - K _{j+1,j} - F_{j+1,j}^{(0)} \right).
\end{align}
\end{thm}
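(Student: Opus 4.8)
The plan is to collapse the count to a one-dimensional lattice-point problem in the single cycle flow. Fix one particular solution $\vec{F^{(0)}}$ of the flow-conservation equation (\ref{eqn:dc3a}). The homogeneous solutions of (\ref{eqn:dc3a}) are exactly the cycle flows, which for $\R_N$ form a one-dimensional space (Remark \ref{rem:ring-windnum}), so every solution can be written as $F_{j+1,j} = F_{j+1,j}^{(0)} + f_c$ with $f_c\in\mathbb{R}$ the scalar cycle flow. With the orientation fixed so that $F_e>0$ is counter-clockwise, the characteristic vector of the unique fundamental cycle is $z^c_e=+1$ on every edge, so Theorem \ref{th:fp-dyn-geo} restricted to normal operation says that the winding number reduces to the single function
\be
   g(f_c) := \frac{1}{2\pi}\sum_j \arcsin\left(\frac{F_{j+1,j}^{(0)}+f_c}{K_{j+1,j}}\right).
\ee
A normal-operation fixed point then corresponds to a value of $f_c$ that both satisfies the capacity constraint $|F_{j+1,j}^{(0)}+f_c|\le K_{j+1,j}$ for all $j$ and yields an integer winding number $g(f_c)\in\mathbb{Z}$.

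Next I would unpack the two constraints. Imposing the capacity bound simultaneously over all edges, i.e. $-K_{j+1,j}-F_{j+1,j}^{(0)}\le f_c\le K_{j+1,j}-F_{j+1,j}^{(0)}$ and taking the tightest bound over $j$, is exactly $f^{\rm min}_c\le f_c\le f^{\rm max}_c$ with $f^{\rm max}_c,f^{\rm min}_c$ as in (\ref{eqn:def-Fmaxmin}). The crucial structural fact is that $g$ is continuous and strictly increasing on $[f^{\rm min}_c,f^{\rm max}_c]$, since each summand is a strictly increasing function of $f_c$ wherever its argument lies in $(-1,1)$. Hence $g$ is a bijection of $[f^{\rm min}_c,f^{\rm max}_c]$ onto $[g(f^{\rm min}_c),g(f^{\rm max}_c)]$, so every admissible integer value of the winding number is attained at exactly one $f_c$, and by Lemma \ref{lem:cf-fp-corr} at exactly one fixed point; uniqueness is also guaranteed abstractly by Lemma \ref{lem:wind-fp-corr-planar}, as $\R_N$ is planar.

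The delicate point is the boundary. Normal operation demands the \emph{strict} inequality $\cos(\theta_{j+1}-\theta_j)>0$ on every edge, i.e. no edge saturated, which forces $f_c$ into the \emph{open} interval $(f^{\rm min}_c,f^{\rm max}_c)$: at either endpoint the edge attaining the extremum in (\ref{eqn:def-Fmaxmin}) is saturated and has $\cos=0$. By the strict monotonicity just established, the admissible $f_c$ correspond bijectively to the integers lying strictly between $g(f^{\rm min}_c)$ and $g(f^{\rm max}_c)$. Since the number of integers in an open interval $(a,b)$ equals $\ceil{b}-\floor{a}-1$, substituting $a=g(f^{\rm min}_c)$ and $b=g(f^{\rm max}_c)$ reproduces the stated formula for $\N$.

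The routine parts — monotonicity of each arcsin term and the resulting bijectivity of $g$ — are immediate. The step I expect to require the most care is the boundary bookkeeping: justifying the open interval so that the saturated endpoint fixed points (which violate strict normal operation) are excluded, and checking that this is precisely what produces the $\ceil{\cdot}-\floor{\cdot}-1$ combination rather than a closed-interval count. One must also confirm that distinct admissible integers cannot map to the same fixed point, which follows from the strict monotonicity of $g$ together with the injectivity of $\vec{f_c}$ from Lemma \ref{lem:cf-fp-corr}.
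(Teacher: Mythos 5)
Your proposal is correct and follows essentially the same route as the paper's proof: parametrize all flow solutions by the scalar cycle flow $f_c$, restrict to the open interval $(f^{\rm min}_c, f^{\rm max}_c)$ forced by strict normal operation, use monotonicity of the arcsin-sum winding function together with the uniqueness of winding numbers (Lemma \ref{lem:wind-fp-corr-planar}) to put fixed points in bijection with the integers in the open interval $\left(\varpi(f^{\rm min}_c),\varpi(f^{\rm max}_c)\right)$, and count them as $\ceil{\cdot}-\floor{\cdot}-1$. Your explicit treatment of the bijectivity of $g$ and of the boundary exclusion is if anything slightly more careful than the paper's, but it is the same argument.
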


\begin{proof}
Suppose we have one fixed point $\vec{\theta_0}$
with the flows $F_{ij}^{(0)}$ and analyze (as per throrem \ref{lem:wind-fp-corr-planar})
which cycle flow values $f_c$ lead to different valid fixed points. 
First, the cycle flow is bounded both above and below
since the flow $F_{j,j+1}$ along each edge cannot exceed in 
absolute value the capacity $K_{j,j+1}$:
\begin{align}
  \label{eq:fp_limits}
  f^{\rm min}_c & < f_c < f^{\rm max}_c\\
  f^{\rm max}_c&=\min_{j} \left(K_{j,j+1}-F^{(0)}_{j,j+1} \right)\\
  f^{\rm min}_c&=\max_{j} \left(-K_{j,j+1}+F^{(0)}_{j,j+1} \right).
\end{align}
We emphasize that $f_c$ cannot be equal to $ f^{\rm max}_c$
or  $f^{\rm min}_c$, because otherwise one edge would be fully loaded with 
$\cos(\theta_i - \theta_j) = 0$, contradicting our assumption.

Second, all fixed points have to satisfy the geometric condition (cf. theorem \ref{th:fp-dyn-geo})
\be
\varpi(f_c) \in \mathbb{Z}.
\ee
Since we restrict ourselves to normal operation, the winding number for a single cycle reads 
\be
   \varpi(f_c) = \frac{1}{2\pi} \sum_j\arcsin\left(\frac{F_{j+1,j}^{(0)}+f_c}{K_{j+1,j}}\right).
\ee
Using the bound for the cycle flow strength (\ref{eq:fp_limits}), and the fact that the arcsin is a monotonically increasing function, we find that the winding number is also bounded by 
\begin{align}
   \varpi(f^{\rm min}_c) \le \varpi \le \varpi(f^{\rm max}_c)
  \label{eq:lim-omega}
\end{align}
As the winding numbers are unique (see lemma \ref{lem:wind-fp-corr-planar}), the distinct fixed points correspond to the following values of the winding number: 
\be
    \varpi_{\rm fixed point} = \floor{ \varpi(f^{\rm min}_c) } + 1,
           \floor{ \varpi(f^{\rm min}_c)  } + 2, \ldots,
           \ceil{ \varpi(f^{\rm max}_c)  } -1 .
\ee
Counting these values and inserting the values for $f^{\rm min}_c$
and $f^{\rm max}_c$ then yields the number of fixed points $\N$.
\end{proof}

For practical applications it is desirable to determine the number of fixed points
from the properties of the network alone, without referring to a particular solution
$F^{(0)}$. To obtain suitable bounds for the number of fixed points $\N$, we 
first define some properties which characterize the network.

\begin{figure}[!htp]
\begin{center}
\includegraphics[width=0.6\columnwidth]{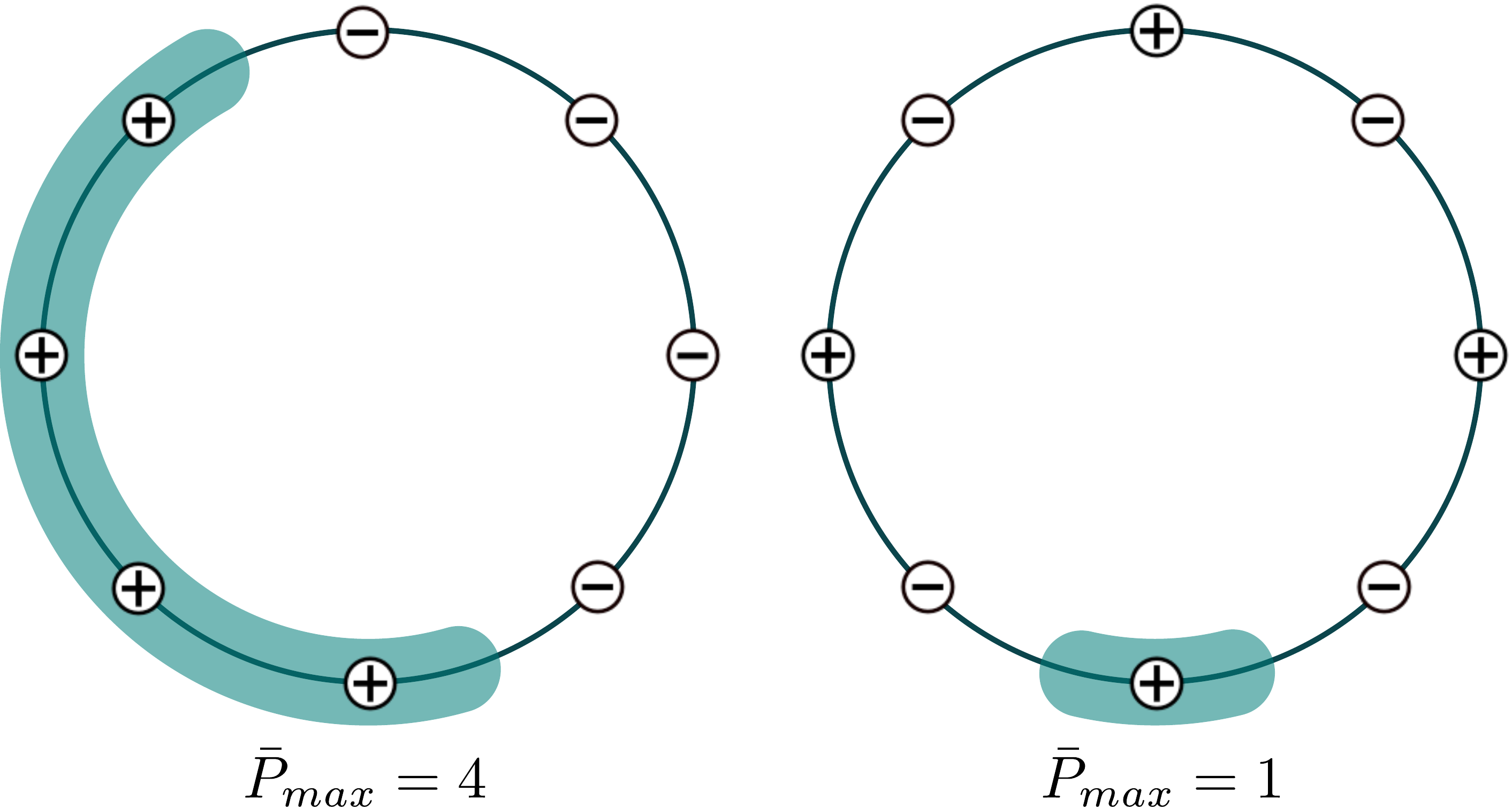}
\end{center}
\caption{
\label{fig:pmax}
  The maximum partial net power $\bar{P}_{\max}$ in different ring networks.
}
\end{figure}

\begin{defn} 
For a ring network $\R_N$ with $N\in \mathbb{N}$ nodes indexed by $1,2,\ldots,N$
along the cycle, a \textbf{fragment} $\F_{i,j}$ is defined as the path starting at node $i$ 
and ending at node $j$. For any fragment $\F_{i,j}$, the \textbf{partial net power} 
$\bar{P}_{ij}$ is defined as
\be
   \bar{P}_{ij}=\sum_{k=i}^j P_k.  
\ee
and the \textbf{maximal partial net power} is defined as
\begin{align}
  \label{def:mpnp}
   \bar{P}_{\max}=\max_{i,j} \bar{P}_{i,j}.
\end{align}
This concept is illustrated in Fig.~\ref{fig:pmax}
Furthermore we define the maximum and minimum transmission capacities
\be
    K_{\rm max} = \max_{j} K _{j+1,j} 
     \quad \text{and} \quad
    K_{\rm min} = \min_{j} K _{j+1,j}.
\ee
\end{defn}

\begin{lemma}
\label{lem:flow-power}
For any ring fragment $\F_{i,j}$, the partial net power is equal to the net outwards flow:
\begin{align}
  \label{eq:glow-power}
  \bar{P}_{ij}=F_{j+1,j}-F_{i-1,i}
\end{align}
and
\begin{align}
    \label{eq:pmax}
   \bar{P}_{\max} = \max_j F_{j+1,j} - \min_{i} F_{i-1,i} \, .
\end{align} 
\end{lemma}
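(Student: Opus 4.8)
The plan is to read off a per-node flow-balance relation for the ring directly from the dynamic condition \eqref{eqn:dc3a}, collapse it with a telescoping sum to obtain \eqref{eq:glow-power}, and then derive \eqref{eq:pmax} by decoupling the maximization over fragments. The whole statement is, in essence, a discrete divergence identity on a cycle, so I would keep the argument purely combinatorial.

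First I would specialize the nodal conservation law \eqref{eqn:dc3a} to the ring. Each node $k$ is incident to exactly two edges: the one joining $k-1$ to $k$ and the one joining $k$ to $k+1$. With the orientation fixed above so that $F_e>0$ denotes counter-clockwise flow, node $k$ is the head of the edge carrying $F_{k,k-1}$ (flow entering $k$ from $k-1$) and the tail of the edge carrying $F_{k+1,k}$ (flow leaving $k$ toward $k+1$), so the relevant incidence coefficients are $+1$ and $-1$. Substituting into \eqref{eqn:dc3a} yields
\[
   P_k = F_{k+1,k} - F_{k,k-1},
\]
i.e. the injected power at $k$ equals the net counter-clockwise flow leaving $k$.

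Next I would sum this relation over the nodes of the fragment, $k=i,i+1,\ldots,j$. Since $\bar P_{ij}=\sum_{k=i}^{j}P_k$ by definition, the interior flow contributions cancel in consecutive pairs and only the two boundary terms survive:
\[
   \bar P_{ij} = \sum_{k=i}^{j}\left(F_{k+1,k}-F_{k,k-1}\right) = F_{j+1,j} - F_{i,i-1}.
\]
The right-hand side is exactly the net outward flow of the fragment (flow leaving at node $j$ minus flow entering at node $i$ across the edge between $i-1$ and $i$), which is \eqref{eq:glow-power}; the degenerate fragments such as $i=j$ or the full ring are covered by the same telescoping.

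Finally, for \eqref{eq:pmax} I would maximize the expression just obtained over all fragments. Because $F_{j+1,j}$ depends only on the terminal node $j$ while $F_{i,i-1}$ depends only on the initial node $i$, and because every ordered pair $(i,j)$ indexes a valid fragment, the two endpoints may be optimized independently; together with the fact that $\{F_{k+1,k}\}_k$ and $\{F_{i,i-1}\}_i$ are the same set of edge flows, this gives
\[
   \bar P_{\max}=\max_{i,j}\bar P_{ij}=\max_j F_{j+1,j}-\min_i F_{i,i-1},
\]
which is \eqref{eq:pmax}. I do not anticipate a genuine obstacle: the only points that require care are the sign and head/tail bookkeeping in the conservation law, so that the telescoping produces the boundary flows with the correct signs, and the explicit observation that the two fragment endpoints decouple when passing from \eqref{eq:glow-power} to \eqref{eq:pmax}.
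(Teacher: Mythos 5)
Your proof is correct---and the paper in fact states Lemma \ref{lem:flow-power} without any proof at all, so your per-node conservation law $P_k = F_{k+1,k}-F_{k,k-1}$, the telescoping sum over the fragment, and the decoupled optimization over the two endpoints supply exactly the routine argument the paper leaves implicit. The only point worth flagging is notational: you obtain the boundary term $F_{i,i-1}$ whereas the paper writes $F_{i-1,i}$; these coincide under this section's convention that flows are indexed by edges and taken positive in the counter-clockwise direction (the same reading needed to make the paper's own later use of \eqref{eq:pmax} in the proof of Corollary \ref{corr:hom-ring} correct, where $\bar P_{\max}$ is treated as $\max_j F_{j+1,j}-\min_j F_{j+1,j}$), whereas under the strict antisymmetric convention $F_{a,b}=-F_{b,a}$ they would differ by a sign---your version, which states that the injected power exits through the two boundary edges, is the physically correct one.
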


\begin{corr}
\label{corr:hom-ring}
For a ring network $\R_N$, the number of normal operation fixed points (denoted by $\N$)
is bounded from above and below by
\begin{align}
   \label{eq:Nring1}
   0  \leq \N \leq 2 \floor{ \frac{N}{4} }  +1 
\end{align}
and by
\begin{align}
\ceil{ \frac{N}{4} \, \frac{2 K_{\rm max}- \bar 
   P_{\rm max}}{K_{\rm min}}} & \geq \N \geq \ceil{ \frac{N}{2\pi} 
           \frac{2 K_{\rm min} - \bar P_{\rm max}}{K_{\rm max}} } - 1
     \label{eqn:bounds:homring2}      
\end{align}
\end{corr}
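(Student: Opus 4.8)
The plan is to start from the exact count in Theorem~\ref{lemma-cycle} and relax it in two stages. Writing $\varpi(f_c)=\frac{1}{2\pi}\sum_j\arcsin\big((F^{(0)}_{j+1,j}+f_c)/K_{j+1,j}\big)$ for the single-cycle winding number, the proof of Theorem~\ref{lemma-cycle} identifies $\N$ with the number of integers in the open interval $\big(\varpi(f^{\rm min}_c),\varpi(f^{\rm max}_c)\big)$, so that $\N=\ceil{\varpi(f^{\rm max}_c)}-\floor{\varpi(f^{\rm min}_c)}-1$ (the count being meaningful because winding numbers on a ring are unique, Lemma~\ref{lem:wind-fp-corr-planar}). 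Every bound will then follow from controlling the length $\Delta\varpi:=\varpi(f^{\rm max}_c)-\varpi(f^{\rm min}_c)$ of this interval, together with the elementary facts that an open interval of length $\ell$ contains at least $\ell-1$ and at most $\ceil{\ell}$ integers, so $\Delta\varpi-1\le\N\le\ceil{\Delta\varpi}$.

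For the coarse bounds \eqref{eq:Nring1} I would argue purely from the normal-operation assumption. Since every edge satisfies $\cos(\theta^*_i-\theta^*_j)>0$, each phase difference lies in $(-\pi/2,\pi/2)$, hence each summand $\arcsin(\cdot)\in(-\pi/2,\pi/2)$ and $\varpi(f_c)\in(-N/4,N/4)$ for every admissible $f_c$. The admissible winding numbers are therefore integers in $(-N/4,N/4)$, of which there are at most $2\floor{N/4}+1$; together with the trivial $\N\ge0$ this gives \eqref{eq:Nring1}.

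The quantitative bounds \eqref{eqn:bounds:homring2} require turning network data into a bound on $\Delta\varpi$, and I would do this in two substeps. First I control the cycle-flow window: using the definitions \eqref{eqn:def-Fmaxmin}, $f^{\rm max}_c-f^{\rm min}_c$ is a sum of two extremal combinations of the $K_{j+1,j}$ and $F^{(0)}_{j+1,j}$; bounding each coupling by $K_{\rm min}$ (resp.\ $K_{\rm max}$) and recognizing the spread of the reference flows as $\bar{P}_{\rm max}$ via Lemma~\ref{lem:flow-power} (cf.\ \eqref{eq:pmax}) yields
\[
   2K_{\rm min}-\bar{P}_{\rm max}\ \le\ f^{\rm max}_c-f^{\rm min}_c\ \le\ 2K_{\rm max}-\bar{P}_{\rm max}.
\]
Second, I pass from flows to winding numbers edge by edge. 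For the lower bound, $\arcsin$ has derivative $\ge1$, so each summand grows by at least $(f^{\rm max}_c-f^{\rm min}_c)/K_{j+1,j}\ge(f^{\rm max}_c-f^{\rm min}_c)/K_{\rm max}$, giving $\Delta\varpi\ge\frac{N}{2\pi}\frac{2K_{\rm min}-\bar{P}_{\rm max}}{K_{\rm max}}$; combined with $\N\ge\Delta\varpi-1$, integrality, and $\ceil{x-1}=\ceil{x}-1$ this is exactly the lower bound in \eqref{eqn:bounds:homring2}. For the upper bound I would invoke the chord (convexity) estimate $\arcsin(u)\le\frac{\pi}{2}u$ to bound each per-edge increment by $\frac{\pi}{2}(f^{\rm max}_c-f^{\rm min}_c)/K_{\rm min}$, obtaining $\Delta\varpi\le\frac{N}{4}\frac{2K_{\rm max}-\bar{P}_{\rm max}}{K_{\rm min}}$ and hence $\N\le\ceil{\Delta\varpi}\le\ceil{\frac{N}{4}\frac{2K_{\rm max}-\bar{P}_{\rm max}}{K_{\rm min}}}$ by monotonicity of the ceiling.

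I expect the upper-bound conversion to be the main obstacle, because $\arcsin$ has unbounded slope near $\pm1$. At the extreme cycle flows exactly one edge saturates ($\arcsin\to\pm\pi/2$), so the naive per-edge chord estimate $\arcsin(a_j)-\arcsin(b_j)\le\frac{\pi}{2}(a_j-b_j)$ can fail precisely for the binding edge, whose argument interval hugs the boundary. The delicate point is therefore to show these boundary contributions do not spoil the summed bound --- e.g.\ by exploiting that at most one edge saturates at each endpoint while the window $f^{\rm max}_c-f^{\rm min}_c$ simultaneously shrinks, or by a careful global estimate of $\sum_j\int_{b_j}^{a_j}(1-t^2)^{-1/2}\,dt$ against $\tfrac{\pi}{2}\sum_j(a_j-b_j)$. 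The remaining bookkeeping (the floor/ceiling rounding) is routine by comparison.
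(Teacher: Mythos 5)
Your proposal retraces the paper's own proof almost line for line: start from the exact count in Theorem~\ref{lemma-cycle}, get \eqref{eq:Nring1} from $\arcsin(\cdot)\in[-\pi/2,\pi/2]$, bound the cycle-flow window by $2K_{\rm min}-\bar P_{\rm max}\le f^{\rm max}_c-f^{\rm min}_c\le 2K_{\rm max}-\bar P_{\rm max}$ via Lemma~\ref{lem:flow-power}, and convert this into bounds on $\varpi(f^{\rm max}_c)-\varpi(f^{\rm min}_c)$ by per-edge estimates on increments of $\arcsin$, finishing with floor/ceiling bookkeeping. Your coarse bound, your lower bound (slope of $\arcsin$ at least $1$), and the rounding argument all match the paper's proof and are sound.

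The step you single out as the ``main obstacle'' is precisely the step where the paper itself is unsound. The paper disposes of it by asserting \eqref{eqn:xy-arcsin}, namely $\arcsin(x)-\arcsin(y)\le\frac{\pi}{2}(x-y)$ for all $-1\le y\le x\le 1$, and summing over edges to get the upper half of \eqref{eq:q-bounds}. That inequality is false in general: for $y=0.8$, $x=1$ the left side is $\approx 0.644$ while the right side is $\approx 0.314$. It fails exactly in the regime you describe --- when the upper argument is pinned at $1$ (the binding edge defining $f^{\rm max}_c$) while the lower argument is positive --- and the failure survives summation over the ring: on a $3$-ring with $K_j\equiv 1$ and reference flows $F^{(0)}=(0.9,-0.9,0)$ one finds $\varpi(f^{\rm max}_c)-\varpi(f^{\rm min}_c)\approx 0.237$, whereas the right side of \eqref{eq:q-bounds} equals $0.15$. (In this instance the final bound \eqref{eqn:bounds:homring2} still holds because both quantities round up to $1$; whether the corollary itself can fail is a separate question.) So your worry is not a deficiency of your proposal relative to the paper: it identifies a genuine gap in the paper's proof of the upper bound in \eqref{eqn:bounds:homring2}. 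Closing it requires exactly the kind of argument you sketch --- treating the at most two saturating edges separately, or a corrected global comparison of $\sum_j\int(1-t^2)^{-1/2}\,dt$ with $\frac{\pi}{2}\sum_j(a_j-b_j)$ --- which neither you nor the paper actually carries out; until that is done, only \eqref{eq:Nring1} and the lower bound in \eqref{eqn:bounds:homring2} can be regarded as proved.
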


\begin{proof}
According to lemma \ref{lemma-cycle} the number of fixed points $\N$ 
is given by
\begin{align}
    \N = \ceil{\varpi(f^{\rm cycle}_{\max\phantom{i}})} -\floor{\varpi(f^{\rm min}_c)} -1  .
\end{align}  
We make use of the fact that the arcsin function is bounded, 
$\arcsin{(x)} \in [-\pi/2,+\pi/2]$, such that   
\begin{align}
    \label{eqn:omega_extremes}
    \varpi(f^{\rm max}_c) &= 
    \frac{1}{2\pi}  \sum_{j=1}^{N} \arcsin \left(\frac{F_{j+1,j}^{(0)}+f^{\rm max}_c}{K_{j+1,j}}\right)
     < \frac{N}{4}   \nn \\
     \varpi (f^{\rm min}_c) &= 
      \frac{1}{2\pi}  \sum_{j=1}^{N} \arcsin \left(\frac{F_{j+1,j}^{(0)}+f^{\rm min}_c}{K_{j+1,j}}\right)
     > - \frac{N}{4}
\end{align}
This proves the first part \eqref{eq:Nring1} of the corollary. To proof the second part, we start from 
\begin{align}
   \label{eq:nf-bounds}
   \ceil{\varpi(f^{\rm max}_c) - \varpi(f^{\rm min}_c)} - 1
      &\leq \N \leq 
    \ceil{\varpi(f^{\rm max}_c) - \varpi(f^{\rm min}_c)} 
\end{align}
Now one can obtain upper and lower bounds for all terms in the
sum using the trigonometric relation 
\begin{align}
   \label{eqn:xy-arcsin}
   x-y\leq \arcsin(x)-\arcsin(y)\leq \frac{\pi}{2}(x-y)
\end{align}
which holds for all $-1\leq y \leq x \leq 1$. This yields 
\begin{align}
\label{eq:q-bounds}
   \frac{1}{2\pi} \sum_j\frac{\Delta f_c}{K_{j+1,j}} &\leq 
    \varpi(f^{\rm max}_c) - \varpi(f^{\rm min}_c) \leq 
   \frac{1}{4} \sum_j\frac{\Delta f_c}{K_{j+1,j}},
\end{align}
where we define $\Delta f_c = f^{\rm max}_c - f^{\rm min}_c$.
Furthermore, this quantity can be bounded as
\begin{align}
  \Delta f_c &= \min_j \left( K _{j+1,j} - F_{j+1,j}^{(0)} \right) -
                     \max_j  \left( - K _{j+1,j} - F_{j+1,j}^{(0)} \right) \nn \\
               & \geq 2 K_{\rm min} +  \min_j \left( - F_{j+1,j}^{(0)} \right)
                      - \max_j  \left( - F_{j+1,j}^{(0)} \right) \nn \\
               & = 2 K_{\rm min} +  \max_j \left( F_{j,j+1}^{(0)} \right)
                      - \max_j  \left( - F_{j+1,j}^{(0)} \right) \nn \\       
                      & = 2 K_{\rm min}- \bar P_{\rm max}  \quad\text{(using \eqref{eq:Nring1})},
\end{align}
such that the fraction in equation (\ref{eq:q-bounds}) becomes
\begin{align}
     \frac{\Delta f_c}{K_{j+1,j}} \geq  \frac{2 K_{\rm min}- \bar P_{\rm max}}{K_{\rm max}} .
\end{align}
In a similar way we find  
\begin{align}
\Delta f_c \leq 2 K_{\rm max}- \bar P_{\rm max}.
\end{align}
Substituting these bounds into equation (\ref{eq:q-bounds}) yields
\begin{align}
  \varpi(f^{\rm max}_c) - \varpi(f^{\rm min}_c)
      & \geq \frac{N}{2\pi} \, \frac{2 K_{\rm min}- \bar P_{\rm max}}{K_{\rm max}} \nn \\
  \varpi(f^{\rm max}_c) - \varpi(f^{\rm min}_c)
       & \leq \frac{N}{4} \, \frac{2 K_{\rm max}- \bar P_{\rm max}}{K_{\rm min}} ,
\end{align}
which combined with \eqref{eq:nf-bounds} completes the proof.

\end{proof}

\begin{corr}
\label{cor:high-k-nf}
\label{corr:nomult-3}
For homogeneous rings $\R_N$, i.e.~$K_{i,i+1}=K$, equation (\ref{eqn:bounds:homring2})      
simplifies to
\begin{align}
   \ceil{ \frac{N}{\pi}-\frac{N\bar{P}_{\max}}{2K\pi} } -1  \leq \N \leq 
      \ceil{ \frac{N}{2} - \frac{N\bar{P}_{\max}}{4K} }.
\end{align} 

In particular, ring networks $\R_N$ with $N \le 4$ do not have multiple stable fixed points.
Ring network $\R_{N}$ with $N \ge 7$ nodes 
will have multiple stable fixed points $(\N \ge 2)$ if 
\begin{align}
   \label{crit:mult-fp}
   \bar{P}_{\max} < 2 K_{\rm min} - \frac{4 \pi }{N} K_{\rm max} \, .
\end{align}
\end{corr}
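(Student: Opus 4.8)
The plan is to specialize the general ring bounds of Corollary~\ref{corr:hom-ring} to the homogeneous case and then extract the two qualitative consequences. First I would set $K_{\rm min}=K_{\rm max}=K$ in the two-sided estimate \eqref{eqn:bounds:homring2}. The upper factor becomes $\frac{N}{4}\frac{2K-\bar{P}_{\max}}{K}=\frac{N}{2}-\frac{N\bar{P}_{\max}}{4K}$ and the lower factor becomes $\frac{N}{2\pi}\frac{2K-\bar{P}_{\max}}{K}=\frac{N}{\pi}-\frac{N\bar{P}_{\max}}{2K\pi}$, so that \eqref{eqn:bounds:homring2} collapses to the first displayed double inequality. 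This step is purely algebraic and carries no obstacle.

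For the claim that rings with $N\le 4$ cannot be multistable, the coarse bound $\N\le 2\floor{N/4}+1$ from \eqref{eq:Nring1} is not sharp enough at $N=4$, since it only yields $\N\le 3$. Instead I would return to the exact count $\N=\ceil{\varpi(f^{\rm max}_c)}-\floor{\varpi(f^{\rm min}_c)}-1$ of Theorem~\ref{lemma-cycle}, combined with the \emph{strict} estimates $\varpi(f^{\rm max}_c)<N/4$ and $\varpi(f^{\rm min}_c)>-N/4$ established in \eqref{eqn:omega_extremes}. For $N\le 4$ these give $\varpi(f^{\rm max}_c)<1$ and $\varpi(f^{\rm min}_c)>-1$, hence $\ceil{\varpi(f^{\rm max}_c)}\le 1$ and $\floor{\varpi(f^{\rm min}_c)}\ge -1$, so $\N\le 1$ and no second stable fixed point exists. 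I expect the main obstacle to lie exactly here: the strictness of the arcsin bound (from $\arcsin\in[-\pi/2,\pi/2]$ together with the impossibility of a fully loaded edge in normal operation) is essential, because a non-strict inequality would leave the borderline case $N=4$ undecided.

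For the multistability criterion at $N\ge 7$ I would retain only the lower bound of the simplified inequality. Requiring $\N\ge 2$ is equivalent to $\ceil{\frac{N}{2\pi}\frac{2K_{\rm min}-\bar{P}_{\max}}{K_{\rm max}}}\ge 3$, and since $\ceil{x}\ge 3$ holds iff $x>2$, this is in turn equivalent to $\frac{N}{2\pi}\frac{2K_{\rm min}-\bar{P}_{\max}}{K_{\rm max}}>2$, which is precisely the stated criterion \eqref{crit:mult-fp}. The last point to verify is that $N\ge 7$ is exactly the threshold rendering this criterion non-vacuous: because $\bar{P}_{\max}\ge 0$ always (the full-cycle fragment sums to $\sum_j P_j=0$, so the maximum over fragments is nonnegative), the right-hand side $2K_{\rm min}-\frac{4\pi}{N}K_{\rm max}$ must be positive, which for $K_{\rm min}=K_{\rm max}=K$ reduces to $2-4\pi/N>0$, i.e.\ $N>2\pi$, equivalently $N\ge 7$. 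Assembling these three pieces completes the corollary.
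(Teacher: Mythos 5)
Your proposal is correct. For the algebraic simplification and for the $N\ge 7$ criterion it coincides with the paper's proof, which consists of the single remark that the relations follow ``by simply evaluating the bounds in Corollary~\ref{corr:hom-ring}''; your computation (set $K_{\min}=K_{\max}=K$; note that \eqref{crit:mult-fp} is equivalent to $\frac{N}{2\pi}\frac{2K_{\min}-\bar P_{\max}}{K_{\max}}>2$, hence to the lower bound in \eqref{eqn:bounds:homring2} being $\ge 2$) is exactly that evaluation, made explicit. Where you genuinely depart from the paper is the $N\le 4$ claim, and your departure is in fact \emph{necessary}: the bounds as literally stated in Corollary~\ref{corr:hom-ring} do not settle $N=4$, since \eqref{eq:Nring1} only gives $\N\le 2\floor{4/4}+1=3$ and the homogeneous upper bound only gives $\N\le\ceil{2-\bar P_{\max}/K}\le 2$. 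Your route back through the exact count of Theorem~\ref{lemma-cycle} combined with \eqref{eqn:omega_extremes}, yielding $\ceil{\varpi(f^{\rm max}_c)}\le 1$ and $\floor{\varpi(f^{\rm min}_c)}\ge -1$ and hence $\N\le 1$, is the correct way to close this gap; it amounts to a sharpened form of \eqref{eq:Nring1} that keeps the ceiling/floor structure instead of discarding it. Two small quibbles, neither affecting correctness: (i) strictness of the inequalities in \eqref{eqn:omega_extremes} is not actually essential here, contrary to your remark, because the non-strict versions $\varpi(f^{\rm max}_c)\le 1$ and $\varpi(f^{\rm min}_c)\ge -1$ already give $\ceil{\varpi(f^{\rm max}_c)}\le 1$ and $\floor{\varpi(f^{\rm min}_c)}\ge -1$, so the borderline case would not be left undecided; (ii) your phrase ``$\N\ge 2$ is equivalent to $\ceil{\cdot}\ge 3$'' should read ``is guaranteed by'', since $\N$ may exceed the lower bound of \eqref{eqn:bounds:homring2} --- but the implication you actually use is the correct (sufficient) direction.
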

These relations can be proven by simply evaluating the bounds in corollary  
\ref{corr:hom-ring}.

\begin{corr}
\label{corr:fp-norm}
As $K$ is decreased, the fixed points with higher magnitude of the \emph{winding number} 
\eqref{def:wvec} will vanish first. From another point of view, the fixed points with 
largest infinity norm of the flows
\[
||\vec{F}||_{\infty} := \max_j\left|F_{j,j+1}\right|
\]
will be the first ones to vanish.  
\end{corr}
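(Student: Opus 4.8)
\emph{Proof plan.} The plan is to exploit the explicit parametrization of the normal-operation fixed points of a ring established in Theorem \ref{lemma-cycle}. Fix once and for all a particular solution $F^{(0)}$ of the flow-conservation equation (\ref{eqn:dc3a}); since (\ref{eqn:dc3a}) involves only the injections $P_j$ and the topology, $F^{(0)}$ may be chosen \emph{independent of the capacities}, and every normal-operation fixed point then has flows $F_{j+1,j} = F^{(0)}_{j+1,j} + f_c$ with $f_c \in (f^{\rm min}_c, f^{\rm max}_c)$. A given $f_c$ yields a fixed point exactly when $\varpi(f_c) = \frac{1}{2\pi}\sum_j \arcsin\!\left(\frac{F^{(0)}_{j+1,j}+f_c}{K_{j+1,j}}\right)$ is an integer. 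Because each summand is strictly increasing in $f_c$, the map $f_c \mapsto \varpi(f_c)$ is a strictly increasing continuous bijection onto $(\varpi(f^{\rm min}_c),\varpi(f^{\rm max}_c))$; hence the admissible integer winding numbers form a contiguous block, and the fixed points of largest $|\varpi|$ are precisely those at the two ends of this block, i.e.\ with $f_c$ nearest $f^{\rm max}_c$ or $f^{\rm min}_c$.

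First I would show that lowering the capacities shrinks the interval $(\varpi(f^{\rm min}_c),\varpi(f^{\rm max}_c))$ from both ends. For a homogeneous ring ($K_{j+1,j}=K$) this is an explicit computation: with $a_j := \max_k F^{(0)}_{k+1,k} - F^{(0)}_{j+1,j} \ge 0$ one has $f^{\rm max}_c = K - \max_k F^{(0)}_{k+1,k}$ and $\varpi(f^{\rm max}_c) = \frac{1}{2\pi}\sum_j \arcsin(1 - a_j/K)$, which is manifestly decreasing as $K$ decreases, and symmetrically $\varpi(f^{\rm min}_c)$ increases. The mechanism is that at $f_c=f^{\rm max}_c$ exactly one (binding) edge is saturated and contributes the constant $\pi/2$, while every other summand drops toward a smaller value as the capacities shrink. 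Since the admissible block loses its outermost integers first, the fixed points of largest $|\varpi|$ disappear before all others, establishing the first assertion.

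For the equivalent flow-norm statement I would relate extreme winding numbers to extreme flows. By the strict monotonicity of $\varpi$ in $f_c$, the largest-$|\varpi|$ fixed points are exactly those with $f_c$ nearest the endpoints $f^{\rm max}_c, f^{\rm min}_c$, where some edge attains $|F_{j+1,j}|=K_{j+1,j}$. The map $f_c \mapsto \|\vec F\|_\infty = \max_j |F^{(0)}_{j+1,j}+f_c|$ is convex and piecewise linear with asymptotic slopes $\pm 1$ and a single interior minimum, so it is largest precisely for $f_c$ near the two ends of the admissible interval. Thus the extreme-winding-number fixed points coincide with the largest-$\|\vec F\|_\infty$ fixed points, and since the former vanish first as the coupling is reduced, the latter do too.

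The hard part will be the rigorous monotonicity of $K \mapsto \varpi(f^{\rm max}_c)$ (and $\varpi(f^{\rm min}_c)$) for an \emph{inhomogeneous} ring under uniform scaling $K_{j+1,j}\mapsto \lambda K_{j+1,j}$: as $\lambda$ decreases, the edge realizing the minimum in $f^{\rm max}_c = \min_j(K_{j+1,j}-F^{(0)}_{j+1,j})$ may switch, introducing kinks into $\lambda \mapsto \varpi(f^{\rm max}_c)$, and the simultaneous dependence through the denominators $K_{j+1,j}$ prevents the one-line homogeneous argument. I would handle this by noting that $\varpi(f^{\rm max}_c)$ is continuous and piecewise smooth in $\lambda$, verifying that its derivative is negative on each smooth piece by the same saturated-edge-plus-decreasing-remainder reasoning used above, and then invoking continuity across the finitely many switch points to conclude global monotonicity.
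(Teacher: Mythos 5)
Your core argument coincides with the paper's: the normal-operation fixed points of the ring are exactly the integers in the winding-number window $\left(\varpi(f^{\rm min}_c),\varpi(f^{\rm max}_c)\right)$, and one shows that this window shrinks from both ends as the coupling is lowered, so the outermost integers --- the largest-$|\varpi|$ fixed points, equivalently (by your convexity argument) the largest-$\|\vec F\|_\infty$ ones --- are lost first. Your write-up is in fact more careful than the paper's two-line proof: the paper argues only that $\varpi(f^{\rm max}_c)$ is increasing in $f^{\rm max}_c$ and that $f^{\rm max}_c$ decreases with $K$, tacitly ignoring that the function $\varpi$ itself depends on $K$ through the denominators in \eqref{eq:def-winding-normal}; your explicit homogeneous-ring computation $\varpi(f^{\rm max}_c)=\frac{1}{2\pi}\sum_j\arcsin\left(1-a_j/K\right)$ accounts for this joint dependence, and your piecewise-linear convexity argument supplies the flow-norm half of the claim, which the paper's proof never addresses at all.

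The genuine gap is in your plan for the inhomogeneous ``hard part,'' and it cannot be repaired, because the asserted piecewise monotonicity is false. Under uniform scaling $K_{j+1,j}=\lambda\kappa_{j+1,j}$, on a smooth piece where edge $m$ realizes the minimum in $f^{\rm max}_c$, the $j$-th arcsin argument equals $\frac{F^{(0)}_j-F^{(0)}_m}{\lambda\kappa_j}+\frac{\kappa_m}{\kappa_j}$, whose derivative in $\lambda$ is $-\bigl(F^{(0)}_j-F^{(0)}_m\bigr)/(\lambda^2\kappa_j)$; hence every edge carrying \emph{more} base flow than the binding edge has its term \emph{grow} as $\lambda$ decreases. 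Homogeneity is exactly what rules this out: with equal capacities the binding edge is the maximal-flow edge, so $F^{(0)}_j-F^{(0)}_m\le 0$ for all $j$, which is why your first computation works. With unequal capacities the conclusion itself fails: for a $3$-cycle with $(\kappa_1,\kappa_2,\kappa_3)=(1,10,10)$ and $F^{(0)}=(0,8,8)$ (realizable with $\vec P=(-8,8,0)$), at $\lambda=1$ one finds $f^{\rm max}_c=1$ and $\varpi(f^{\rm max}_c)=\frac{1}{2\pi}\bigl(\tfrac{\pi}{2}+2\arcsin 0.9\bigr)\approx 0.61$, while at $\lambda=0.9$ one finds $f^{\rm max}_c=0.9$ and $\varpi(f^{\rm max}_c)=\frac{1}{2\pi}\bigl(\tfrac{\pi}{2}+2\arcsin(8.9/9)\bigr)\approx 0.70$: the upper end of the window moves \emph{up} as the capacities shrink, so no continuity-patching across the switch points can give global monotonicity. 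The statement should therefore be read, as the surrounding text implicitly does (Corollary \ref{cor:high-k-nf}, Fig.~\ref{fig:scaling-nf}), for homogeneous rings, where your first computation already completes the proof; note that the paper's own proof shares this defect, since it never distinguishes the two cases.
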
 

\begin{proof}
We can see from \eqref{eqn:omega_extremes} that 
$\varpi(f^{\rm max}_c)$ and $\varpi(f^{\rm min}_c)$ are 
both monotonically increasing functions of $f^{\rm max}_c$ and 
$f^{\rm min}_c$, respectively.  
According to \eqref{eq:fp_limits}, when $K$ is decreased, 
$f^{\rm max}_c$ decreases and $f^{\rm min}_c$ increases.  
The corollary follows.  

\end{proof}

\begin{figure}[htb]
\begin{center}
\includegraphics[width=\columnwidth]{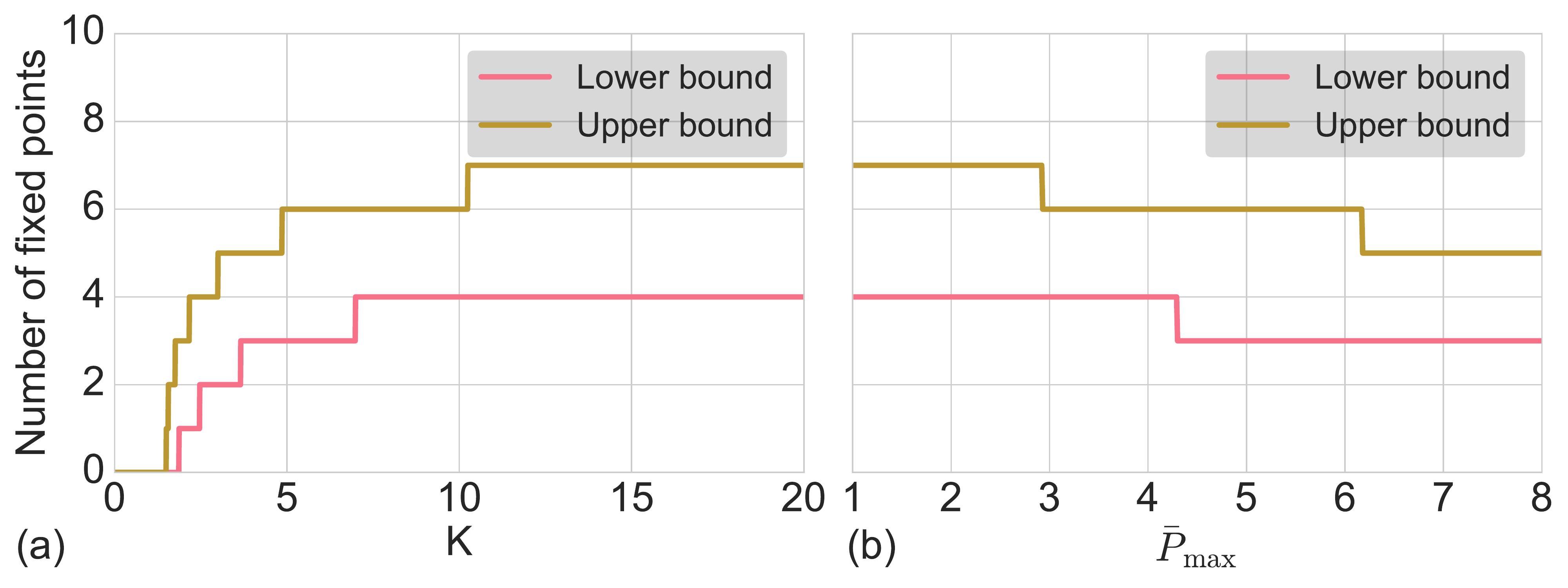}
\caption{
Upper and lower bounds for the number of fixed points $\N$ 
for a sample 16 element ring as a function of (a) $K=K_{j,j+1}$ for all 
$1\leq j \leq 16$ at $\bar{P}_{\max}=3$, and (b) $\bar{P}_{\max}$ at $K=10$.  }
\label{fig:scaling-nf}
\end{center}
\end{figure}

We illustrate how the bounds scale with the connectivity $K$  and $\bar{P}_{\max}$ for a sample ring 
of size $N=16$ in  Fig \ref{fig:scaling-nf}. We see in Fig \ref{fig:scaling-nf} 
(a) that increasing $K$ results in more stable fixed points.  Whereas Fig 
\ref{fig:scaling-nf} (b)  demonstrates that if the power generators 
$(P_j\ge  0)$ are clustered together, then the system has less fixed points, 
as opposed to the case where they are more distributed.

\subsection{Complex networks}
\label{subsec:complexnet}
Obtaining bounds for the number of fixed points is hard in general, as we 
cannot simply decompose a network into single cycles, unless no two cycles of 
a network share an edge.  The diffculty arises 
because cycle flows in two faces sharing one or more edges can cancel or 
enhance each other.  That is why one cannot simply multiply the bounds for 
number of fixed points for each cycle to obtain a bound for the total number 
of fixed points for a network.  We demonstrate this using two examples.

\subsubsection{Two cycle flows destroying each other}
First, we show that even if all single 
cycles support (multiple) fixed points in case there are isolated, the full 
network may not have a single fixed point at all. This is illustrated in Figure 
\ref{fig:lowerbound-nosol} for a network consisting of just two cycles. The 
network motifs shown in panels (a) and (b) have 3 and 1 stable fixed point, 
respectively, whereas the full network shown in panel (c) does not have a 
stable fixed point.
The isolated cycle 2, i.e. the network shown in Figure \ref{fig:lowerbound-nosol} (b), 
has a stable fixed point but two edges are heavily loaded such that there is nearly no 
security margin and no available capacity for cycle flows. Fusing the two cycles as in 
Figure \ref{fig:lowerbound-nosol} (c) disturbs the geometric condition for both cycles.
To restore the geometric condition $\vec \varpi \in \mathbb{Z}^2$ we would have to add 
some cycle flows. But this is impossible in the cycle 2 such that there is no stable fixed 
point in the full network.

\begin{figure}[tb]
\centering
\includegraphics[height=0.15\textheight]{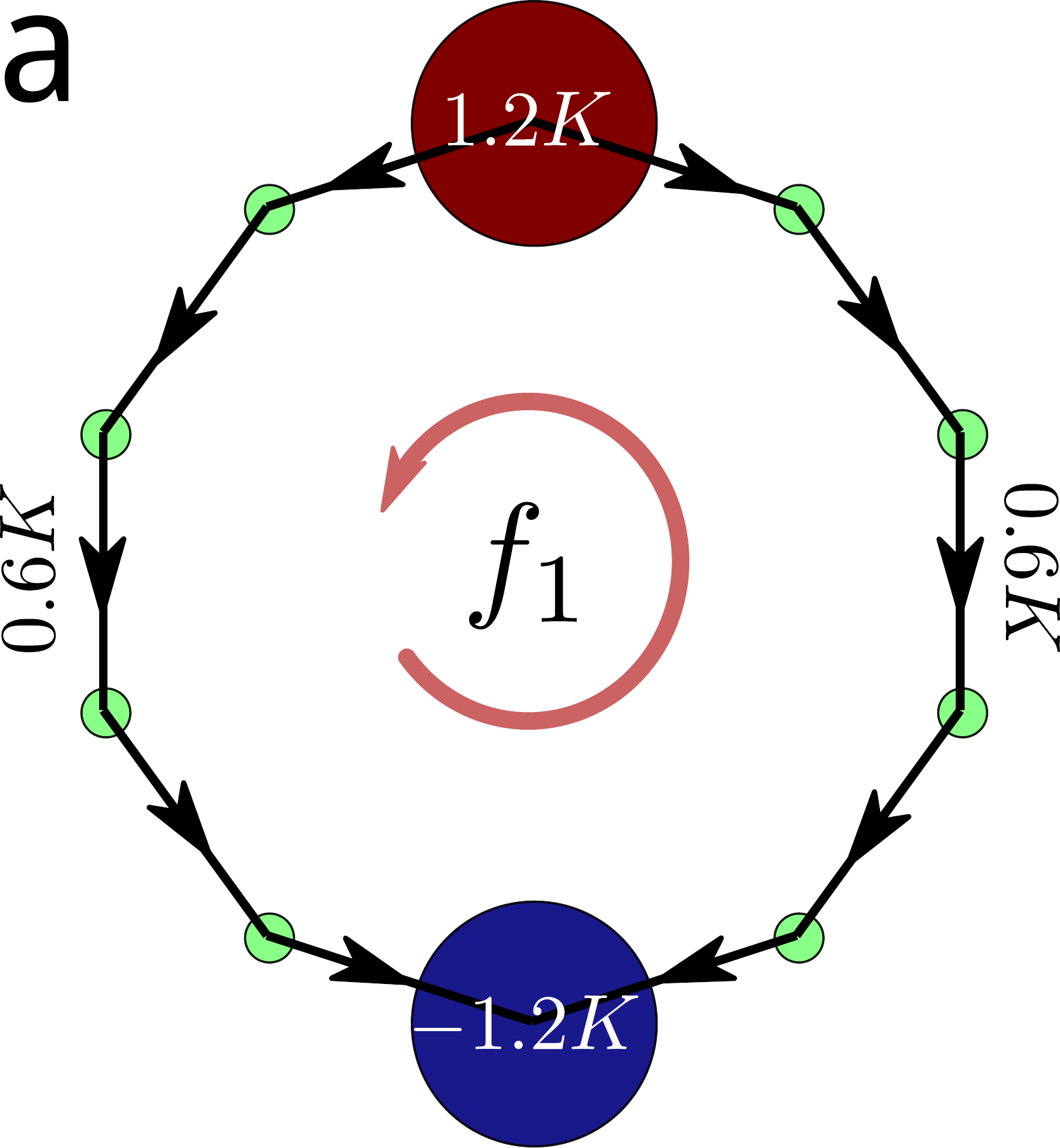}
\hspace{5mm}
\includegraphics[height=0.15\textheight]{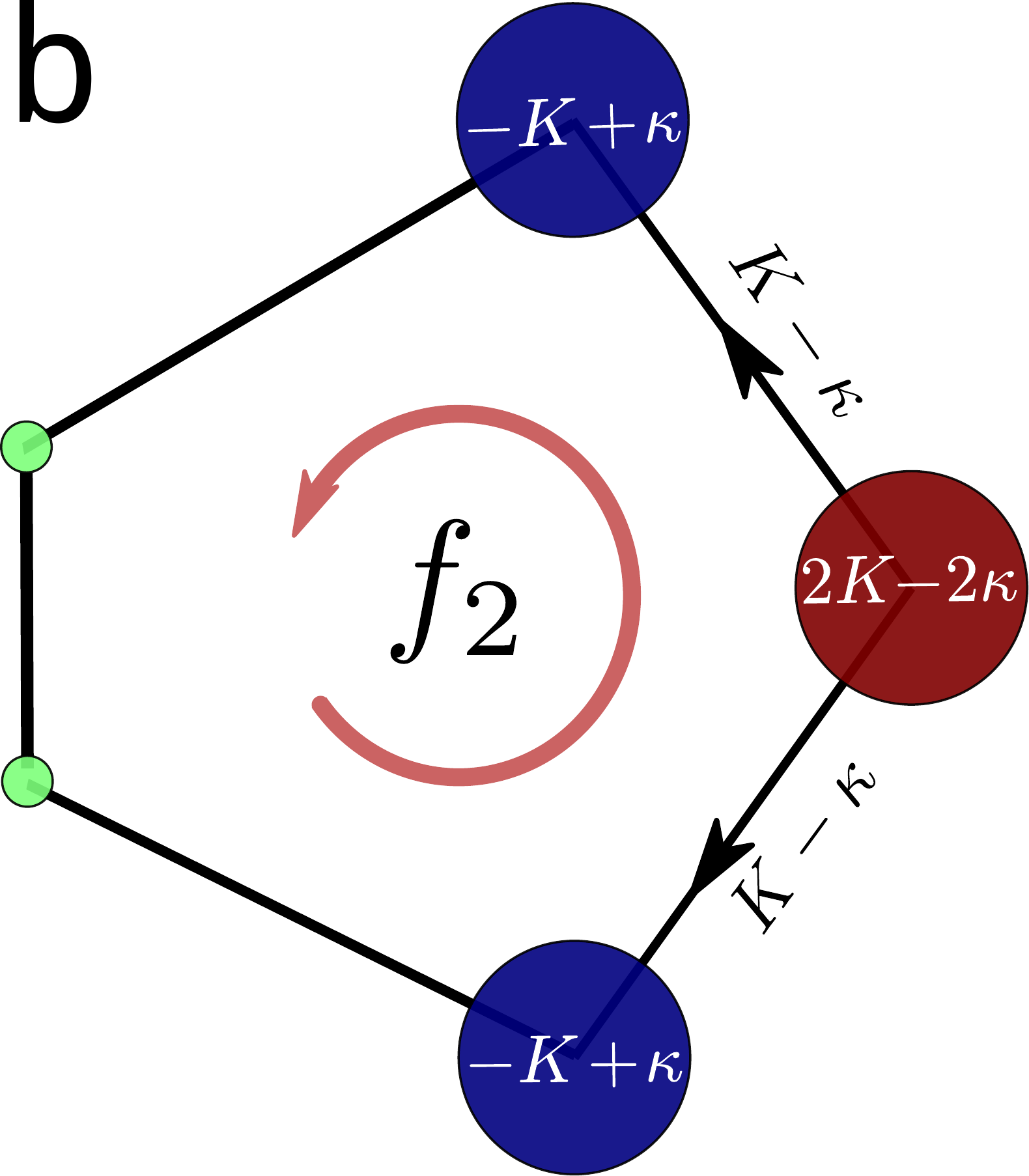}
\hspace{5mm}
\includegraphics[height=0.15\textheight]{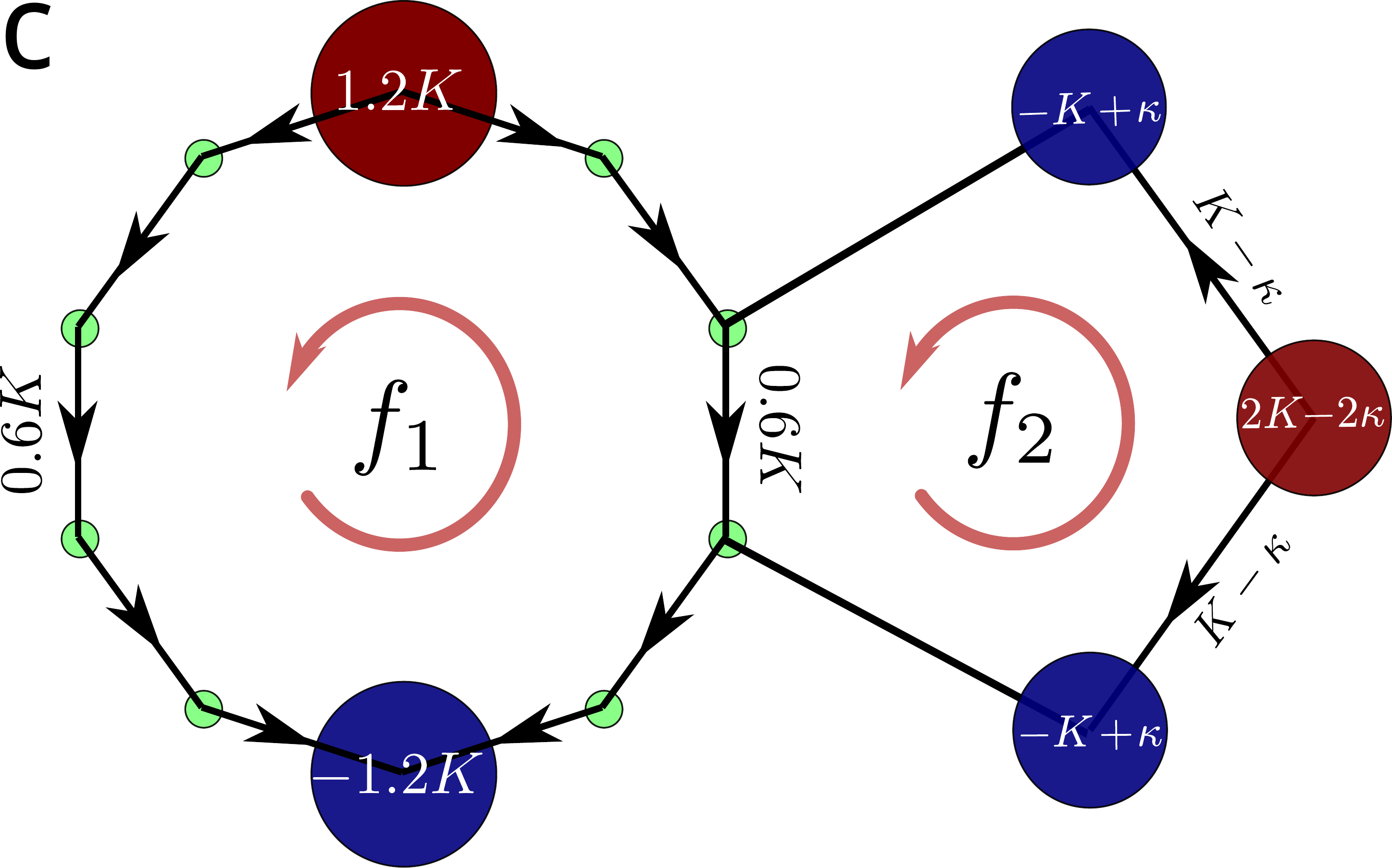}
\caption{
\label{fig:lowerbound-nosol}
Difficulties in finding bound for the number of stable fixed points in complex network.
The network motifs shown in (a) and (b) have 3 and 1 stable fixed point, respectively, whereas 
the fused network shown in (c) has no stable fixed point at all. The power 
injections $P_j$ are given in the nodes.  All edges have
transmission capacity $K$.  
}
\end{figure}

\subsubsection{Two cycle flows getting created}

\begin{figure}[!htb]
\centering
\includegraphics[width=0.45\textwidth]{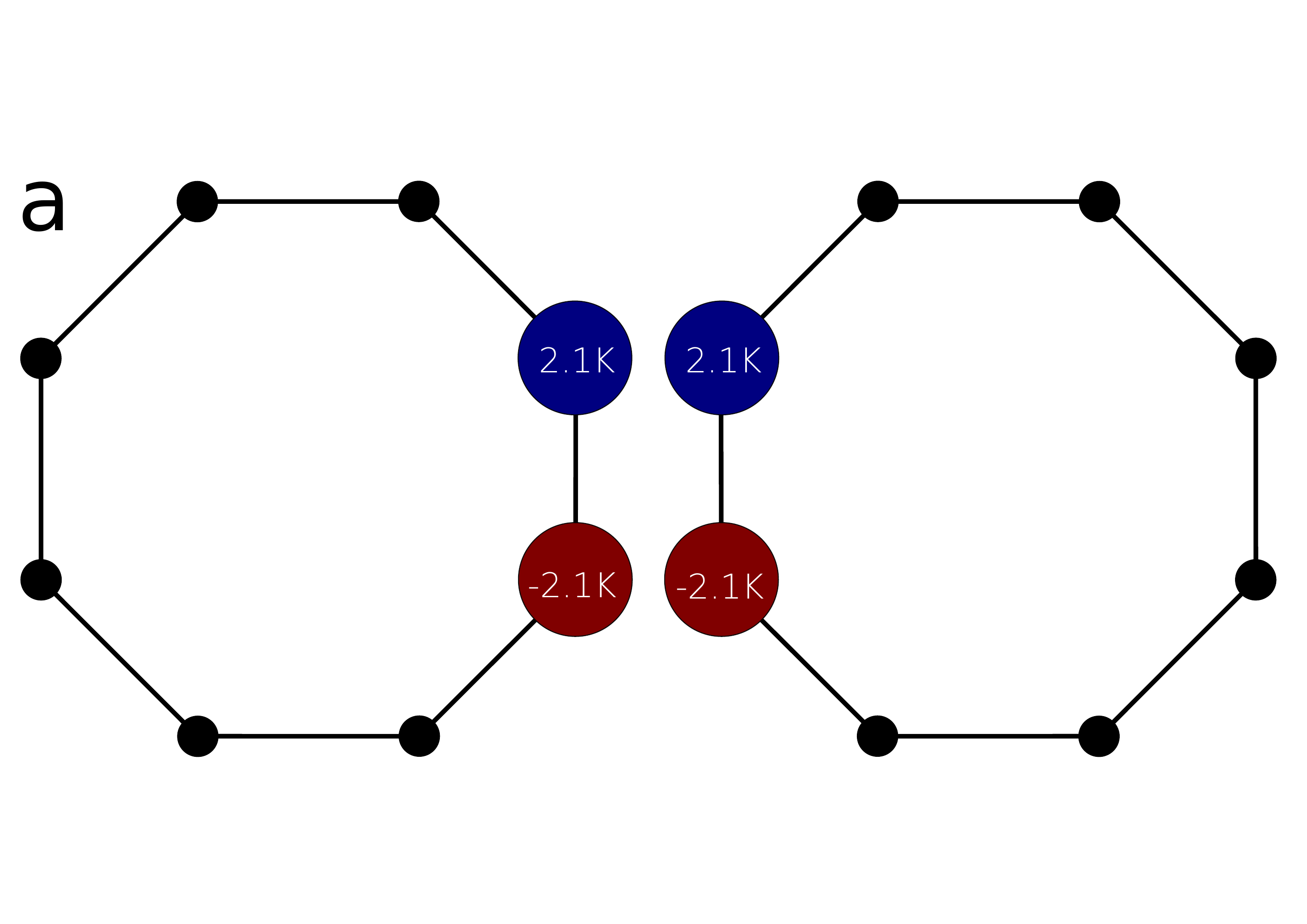}
\hspace{5mm}
\includegraphics[width=0.45\textwidth]{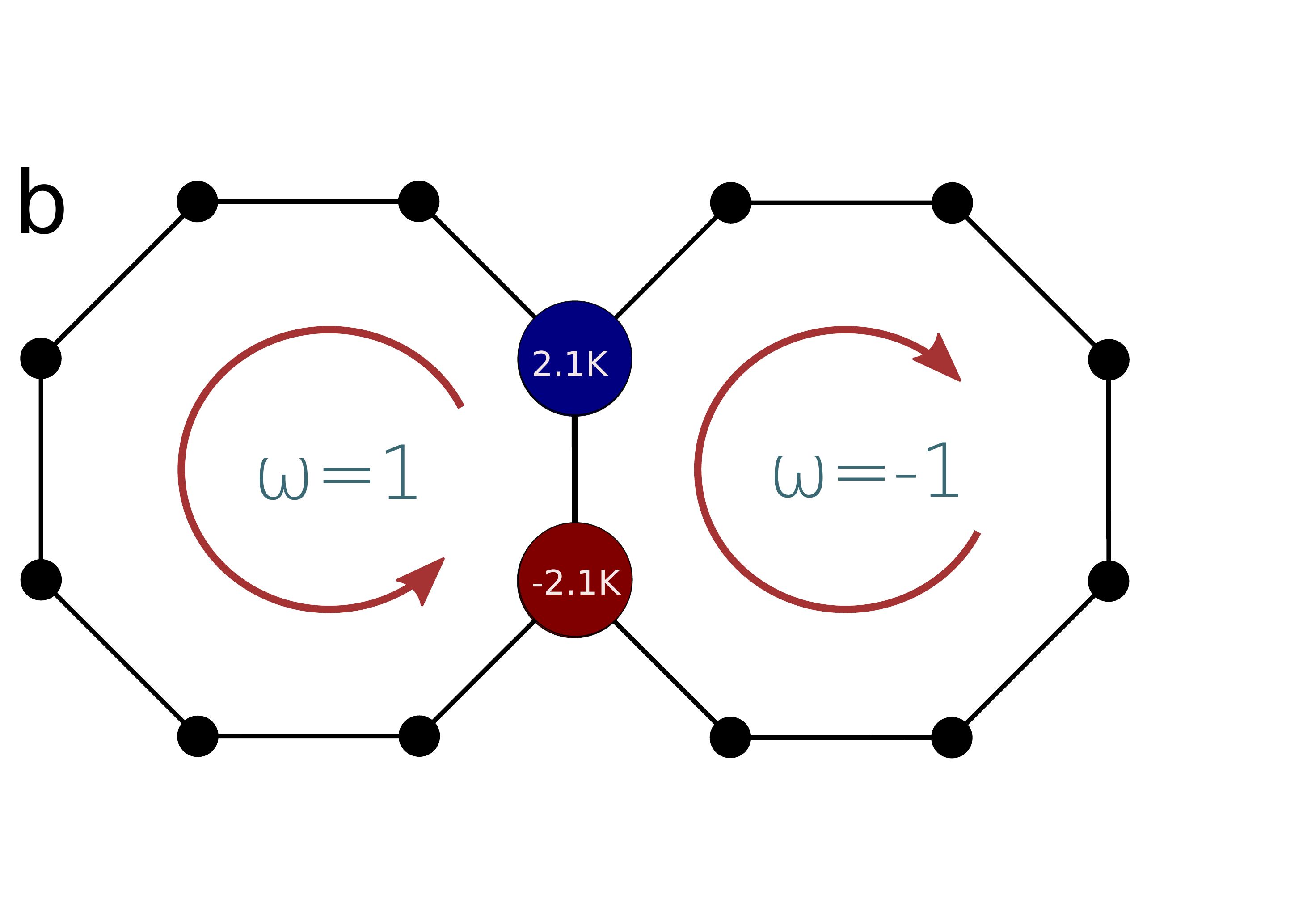}
\caption{\label{fig:cf_fusion_create}
Two ring networks, each with no fixed point, when merged by an edge, gains a 
fixed point.
}
\end{figure}

So we have seen that getting a \emph{lower bound} for number of fixed points 
of a general network is hard, as multiplying lower bounds for each cycle in a 
cycle basis does not yield a valid lower bound.  Next we will show why 
obtaining a good \emph{upper bound} is also hard.  

Consider any of the two identical single loop networks in Figure \ref{fig:cf_fusion_create}. It consists of one 
generator and one consumer, generating and consuming $2.1K$ power 
respectively.  Each edge has capacity $K$.   None of the two single loop networks have any fixed point: simply because the network does 
not have enough capacity to transport the $2.1K$ 
amount of power from the generator to teh consumer.  However, when those two are 
fused together, two cycle flows emerge and a stable fixed point with winding vector 
$\vec{\omega}=(1,-1)$ comes into existence.  This should not come as a 
surprise: fusing two cycles in this case ended up with an alternate pathway 
for the powerflow being created. 

\subsection{Planar networks}
Although obtaining estimates for number of fixed points for general topologies is 
quite difficult, we now show that for planar topologies, it is possible to 
derive some analytical insights.
\subsubsection{Upper bound}
\begin{thm}
Consider a finite planar network. Choose the faces of the graph as the cycle 
basis $B_C$.   Then the number of normal operation fixed points is bounded 
from above by
\be
   \label{eq:plane_upperbound}
   \N < \prod_{c=1}^{L-N+1}  2  \floor{\frac{N_c}{4}} + 1 ,
\ee
where $N_c$ is the number of nodes in the cycle $c$.
\end{thm}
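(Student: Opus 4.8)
The plan is to turn the count of normal-operation fixed points into a count of admissible integer winding vectors, and then bound each coordinate separately. By Lemma~\ref{lem:wind-fp-corr-planar}, once the faces are fixed as the cycle basis $B_C$, the map $\vec{\theta}^* \mapsto \vec\varpi$ is injective on the normal-operation fixed points. Hence $\N$ is at most the number of distinct vectors $\vec\varpi \in \mathbb{Z}^{L-N+1}$ realized by some fixed point, and it suffices to bound, for each face $c$, the number of integers its coordinate $\varpi_c$ can take.

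First I would bound a single coordinate exactly as in the single-cycle analysis of Section~\ref{sec:cycles}. For a face $c$ bounded by $N_c$ nodes, and hence $N_c$ edges, the winding number in normal operation is
\be
   \varpi_c = \frac{1}{2\pi} \sum_{e \in c} z^c_e \arcsin\!\left(\frac{F_e}{K_e}\right).
\ee
In normal operation $\cos(\theta_i^*-\theta_j^*)>0$ on every edge, so no edge is fully loaded and $\left|\arcsin(F_e/K_e)\right| < \pi/2$ strictly. Since $|z_e^c|=1$ on the $N_c$ edges of the face and vanishes elsewhere, this yields $|\varpi_c| < N_c/4$. A short case distinction on $N_c \bmod 4$ shows that the open interval $(-N_c/4,\,N_c/4)$ contains at most $2\floor{N_c/4}+1$ integers, recovering the per-face bound of Corollary~\ref{corr:hom-ring}.

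Combining the coordinates, each component of an admissible winding vector lies in a set of at most $2\floor{N_c/4}+1$ integers, so the number of admissible vectors, and therefore $\N$, is at most $\prod_{c=1}^{L-N+1}\bigl(2\floor{N_c/4}+1\bigr)$. This is the clean part of the argument and already delivers the product as a ceiling for $\N$.

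The main obstacle is the strictness of the inequality. The product over-counts because the coordinates are coupled: by Mac Lane's criterion every interior edge is shared by exactly two faces, so a cycle flow injected into one face also alters the flow, and hence the winding number, of its neighbour, precisely the interaction seen in Figs~\ref{fig:lowerbound-nosol} and~\ref{fig:cf_fusion_create}. Thus not every tuple in the Cartesian product of the per-face ranges is realizable by a flow obeying the dynamic condition~\eqref{eqn:dc3}, and I would make the bound strict by exhibiting at least one forbidden tuple, for example the corner where all $\varpi_c$ are simultaneously extremal, which the shared-edge cancellations rule out. The difficulty is doing this uniformly for every planar network; indeed for a single face the coupling is vacuous and the ring bound of Corollary~\ref{corr:hom-ring} is attained, so genuine strictness is a multi-face phenomenon that must be argued from the interaction between adjacent cycle flows.
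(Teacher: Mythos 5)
Your proposal follows essentially the same route as the paper's own proof: injectivity of the winding vector on normal-operation fixed points via Lemma~\ref{lem:wind-fp-corr-planar}, the per-face estimate $|\varpi_c| < N_c/4$ coming from $|\Delta_e| < \pi/2$ on each of the $N_c$ edges, and then a count of admissible integer tuples. In fact your counting step is more careful than the paper's: the paper asserts $-\floor{N_c/4} < \varpi_c < +\floor{N_c/4}$, which is false whenever $N_c \not\equiv 0 \pmod 4$ (a homogeneous $5$-ring with $P_j \equiv 0$ has normal-operation fixed points with $\varpi_c = \pm 1 = \pm\floor{5/4}$), whereas your case analysis correctly yields at most $2\floor{N_c/4}+1$ integers in the open interval $(-N_c/4,\,N_c/4)$, hence $\N \le \prod_{c}\left(2\floor{N_c/4}+1\right)$.

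On the strictness issue you flag in your final paragraph: your instinct is right, and you should not spend effort hunting for a forbidden extremal tuple, because the strict inequality is not provable --- it is false as stated. The single-cycle cases you worry about are genuine counterexamples, not merely obstructions to a proof technique: a triangle with $P_j \equiv 0$ has exactly $\N = 1$ normal-operation fixed point while the right-hand side equals $2\floor{3/4}+1 = 1$, and the homogeneous $5$-ring with $P_j \equiv 0$ has $\N = 3 = 2\floor{5/4}+1$. The strict form in the paper is an artifact of its erroneous intermediate inequality (which would give the even smaller count $2\floor{N_c/4}-1$ per face). The statement your argument actually establishes, and the statement that is true, is the non-strict bound $\N \le \prod_{c=1}^{L-N+1}\left(2\floor{N_c/4}+1\right)$; your proof is complete for that version, and the correct disposition of your last paragraph is a remark that equality can be attained (e.g.\ in single rings), not a pending step.
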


\begin{proof}
In a planar network no two different fixed points can have the same winding vector $\vec \varpi$ (see lemma \ref{lem:wind-fp-corr-planar}) such that we can just count the different allowed winding vectors.
For each fundamental cycle $c \in B_C$ we have
\be
    - \floor{N_c/4} < 
     \varpi_c = \frac{1}{2 \pi} \sum_{e \in \, {\rm cycle} \, c} \Delta_e 
     < + \floor{N_c/4} 
\ee
because  $-\pi/2 < \Delta_e < +\pi/2$ in normal operation. Counting the number of different possible values of the winding numbers $\varpi_1,\ldots,\varpi_{L-N+1}$ respecting these upper and lower bounds yields the result.
\end{proof}

\subsubsection{Asymptotic behaviour}
We have shown in subsection \ref{subsec:complexnet} that it is not straightforward 
to obtain bounds for the number of fixed points $\N$ in complex networks, unlike simple 
cycles. However, in the limit of $N \gg 1$, we can nevertheless derive the 
scaling behaviour for $\N$.   

\paragraph{Two-cycle network}

\begin{figure}
\begin{center}
        \includegraphics[width=0.5\textwidth]{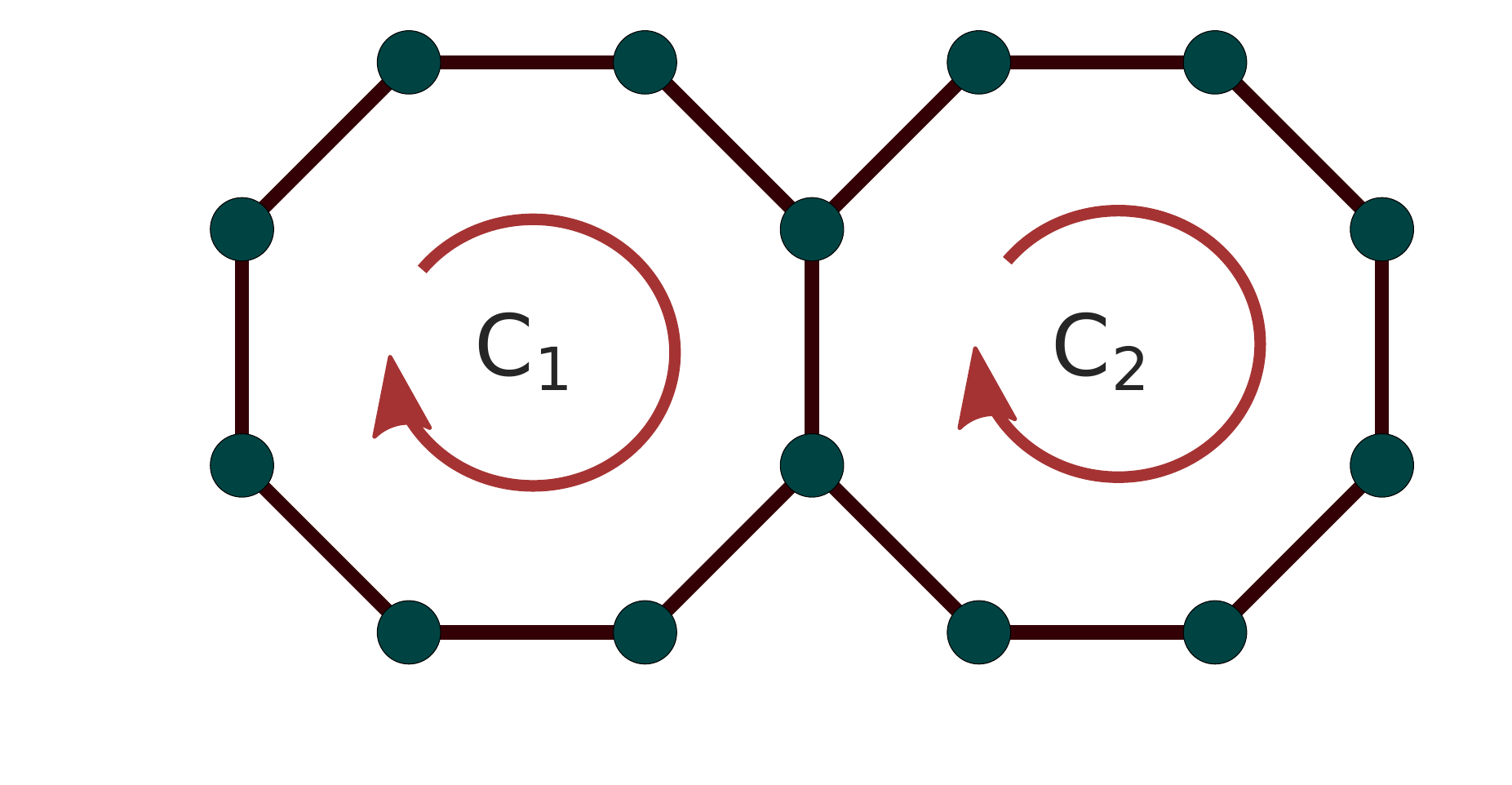}
        \caption{A 2-cycle network. We use the convention that cycles are 
        counter-clockwise.  Therefore we assign positive magnitude to 
        counter-clockwise cycle flows
        and negative magnitude to clockwise cycle flows.}
        \label{fig:2cycle}
\end{center}
\end{figure}

For simplicity, we first consider a network with homogeneous transmission 
capacities consisting of two cycles $C_1$ and 
$C_2$, as illustrated in Figure
\ref{fig:2cycle}. Suppose there are $n_1$ edges belonging only to cycle 1,
$n_2$ edges belonging only to cycle 2 and $n_{12}$ edges belonging to both. 
Let one fixed point be $\vec{\theta^*}$ with flows in each cycle and the intersection be bounded by
\begin{align}
    F_{1}^{\min} & \leq \min_{e\in C_1} F_{e}  \leq F_1^{\max}\\
    F_{2}^{\min} & \leq \min_{e\in C_2} F_{e}  \leq F_2^{\max}\\
    F_{12}^{\min} & \leq \min_{e\in C_2\cap C_1} F_{e}  \leq F_{12}^{\max}.
\end{align}

Then the possible cycle flows in each cycle are bounded inside a convex polygon $\cfdom$
described by 
\begin{align}
    \label{eq:diamond2d}
    -K + F_1^{\rm min} & \leq  f_1  \leq   K - F_1^{\rm max}\\
    -K + F_2^{\rm min} & \leq  f_1  \leq   K - F_2^{\rm max}\\
    -K + F_{12}^{\min} & \leq  f_1 - f_2  \leq K - F_{12}^{\max}.
\end{align}

Then, for $K \gg 1, n_1 \gg 1, n_2 \gg 1$, the number of fixed points 
converges to the area of the image set of $\cfdom$ under the mapping $\vec{\varpi}$.  
\begin{align}
    \label{}
    \N & \approx \int_{\rm \vec{\varpi}(\cfdom)} d\varpi_1 d\varpi_2 \\
& = \int_{\cfdom} df_1 df_2 \det J(\vec{\varpi}),
\end{align}
where the Jacobian $J(\vec{\varpi})$ for the change of variable can be computed from the 
expression for $\vec{\varpi}$ in \eqref{eq:def-winding-normal}, which yields
\begin{align}
    \det J(\vec \varpi)& = {\tiny \frac{1}{4K^2\pi^2}\det \begin{pmatrix}
    \displaystyle \sum_{e \in C_1}\frac{1}{\sqrt{1-\left(\frac{F_e+f_1}{K}\right)^2}} 
    + \displaystyle \sum_{e \in C_1\cap C_2}\frac{1}{\sqrt{1-\left(\frac{F_e+f_1-f_2}{K}\right)^2}} & 
    -\displaystyle \sum_{e \in C_1\cap C_2}\frac{1}{\sqrt{1-\left(\frac{F_e+f_1-f_2}{K}\right)^2}} \\
    -\displaystyle \sum_{e \in C_1\cap C_2}\frac{1}{\sqrt{1-\left(\frac{F_e+f_1-f_2}{K}\right)^2}} &
     \displaystyle \sum_{e \in C_2}\frac{1}{\sqrt{1-\left(\frac{F_e+f_2}{K}\right)^2}} 
    + \displaystyle \sum_{e \in C_1\cap C_2}\frac{1}{\sqrt{1-\left(\frac{F_e+f_1-f_2}{K}\right)^2}}.
    \end{pmatrix}
}
\end{align}

Taking the limits
\begin{align}
    \lim_{K\to\infty} \frac{F_e+f_1}{K} = \frac{f_1}{K}\nn \\
    \lim_{K\to\infty} \frac{F_e+f_2}{K} = \frac{f_2}{K},
\end{align}
leads to
\begin{align}
\label{eq:N_afterlimit}
\N & \approx \frac{1}{4K^2\pi^2}\int_{\cfdomlim} df_1 df_2 \det \begin{pmatrix}
    \frac{n_1}{\sqrt{1-\left(\frac{f_1}{K}\right)^2}} 
    + \frac{n_{12}}{\sqrt{1-\left(\frac{f_1-f_2}{K}\right)^2}} & 
    - \frac{n_{12}}{\sqrt{1-\left(\frac{f_1-f_2}{K}\right)^2}} \\
    - \frac{n_{12}}{\sqrt{1-\left(\frac{f_1-f_2}{K}\right)^2}} &
    \frac{n_2}{\sqrt{1-\left(\frac{f_2}{K}\right)^2}} 
    + \frac{n_{12}}{\sqrt{1-\left(\frac{f_1-f_2}{K}\right)^2}} 
    \end{pmatrix}\\
    \cfdomlim & :=\{(f_1,f_2): (f_1, f_2)\in\mathbb{R}^2, |f_1|\le K, |f_2|\le K, |f_1-f_2|\le K\}.\nn
\end{align}
Redefining $f_1 \to f_1/K, f_2 \to f_2/K$, we obtain
\begin{align}
\N & \approx \frac{1}{4\pi^2}\int_{\cfdomlim} df_1 df_2 \det \begin{pmatrix}
    \frac{n_1}{\sqrt{1-f_1^2}} 
    + \frac{n_{12}}{\sqrt{1-\left(f_1-f_2\right)^2}} & 
    - \frac{n_{12}}{\sqrt{1-\left(f_1-f_2\right)^2}} \\
    - \frac{n_{12}}{\sqrt{1-\left(f_1-f_2\right)^2}} &
    \frac{n_2}{\sqrt{1-f_2^2}} 
    + \frac{n_{12}}{\sqrt{1-\left(f_1-f_2\right)^2}} 
    \end{pmatrix} \\
& = \frac{1}{4\pi^2}\left( n_1n_2\int_{\cfdomlim} \frac{1}{\sqrt{1-f_1^2}}\frac{1}{\sqrt{1-f_2^2}} df_1 df_2  
    + n_1n_{12}\int_{\cfdomlim} \frac{1}{\sqrt{1-f_1^2}}\frac{1}{\sqrt{1-(f_1-f_2)^2}}df_1 df_2 \right.\\ \nn
    & \left.\qquad \qquad + n_2n_{12}\int_{\cfdomlim} \frac{1}{\sqrt{1-f_2^2}}\frac{1}{\sqrt{1-(f_1-f_2)^2}}df_1 df_2  \right)\\
& = \frac{1}{4\pi^2}\left( n_1n_2\int_{\cfdomlim} \frac{1}{\sqrt{1-f_1^2}}\frac{1}{\sqrt{1-f_2^2}} df_1 df_2  
    + (n_1+n_2)n_{12}\int_{\cfdomlim} \frac{1}{\sqrt{1-f_1^2}}\frac{1}{\sqrt{1-(f_1-f_2)^2}}df_1 df_2 \right). 
\end{align}

In the last line we use the symmetry in $f_1, f_2$, both in the integrand, and the domain
of integration. We can simplify even further, by using the following 
change of variables in the second integral:
\begin{align*}
(f_1, f_2) \mapsto (f_2 - f_1, f_2).
\end{align*}
We note that the domain remains the same  after this change of variable, and 
the determinant of the Jacobian $\det(J) = -1$. 
This allows the simplification
\begin{align*}
\N & \approx \left( n_1n_2 + (n_1+n_2)n_{12}\right)\underbrace{\frac{1}{4\pi^2}\int_{\cfdomlim}
\frac{1}{\sqrt{1-f_1^2}}\frac{1}{\sqrt{1-f_2^2}} df_1 df_2}_{\gamma},
\end{align*}
with
\begin{align*}
\gamma & = \frac{1}{4\pi^2}\left\{\int_{-1}^0\frac{df_1}{\sqrt{1-f_1^2}}\int_{-1}^{f_1+1}\frac{df_2}{\sqrt{1-f_2^2}}
+ \int_{0}^1\frac{df_1}{\sqrt{1-f_1^2}}\int_{f_1-1}^{1}\frac{df_2}{\sqrt{1-f_2^2}}\right\}\\
& = \frac{1}{4\pi^2}\left\{\int_{-1}^0\frac{\arcsin(f_1+1)+\frac{\pi}{2}}{\sqrt{1-f_1^2}}df_1
+ \int_0^1\frac{\frac{\pi}{2}-\arcsin(f_1-1)}{\sqrt{1-f_1^2}}df_1 \right\}\\
& = \frac{1}{4\pi^2}\left\{\frac{\pi^2}{4} +\int_{-1}^0\frac{\arcsin(f_1+1)}{\sqrt{1-f_1^2}}df_1
+ \frac{\pi^2}{4} +\int_{0}^1\frac{\arcsin(f_1-1)}{\sqrt{1-f_1^2}}df_1\right\} \\
& = \frac{1}{8} + \frac{1}{2\pi^2} \int_{-1}^0\frac{\arcsin(f_1+1)}{\sqrt{1-f_1^2}}df_1\\
& \approx 0.1576,
\end{align*}
to finally yield this scaling result
\begin{align}
    \label{eq:scaling_2cycle}
    \lim_{n_1,n_2\to\infty}\N =& \gamma \left( n_1n_2 + (n_1+n_2)n_{12}\right)\\
    \gamma &\approx 0.1576.\nn
\end{align}

\begin{figure}[!htp]
\begin{center}
    \includegraphics[height=0.27\textheight]{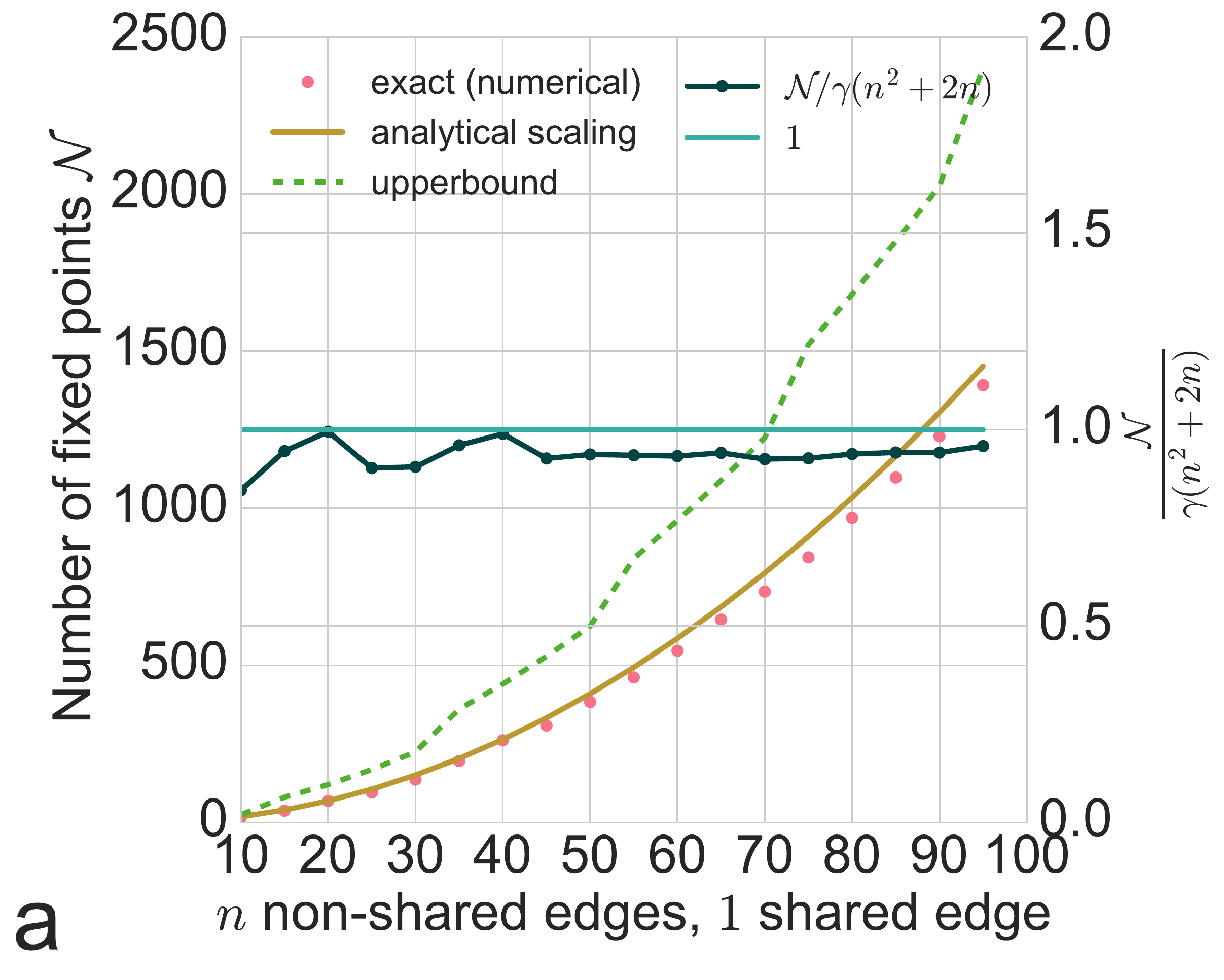}
    \includegraphics[height=0.27\textheight]{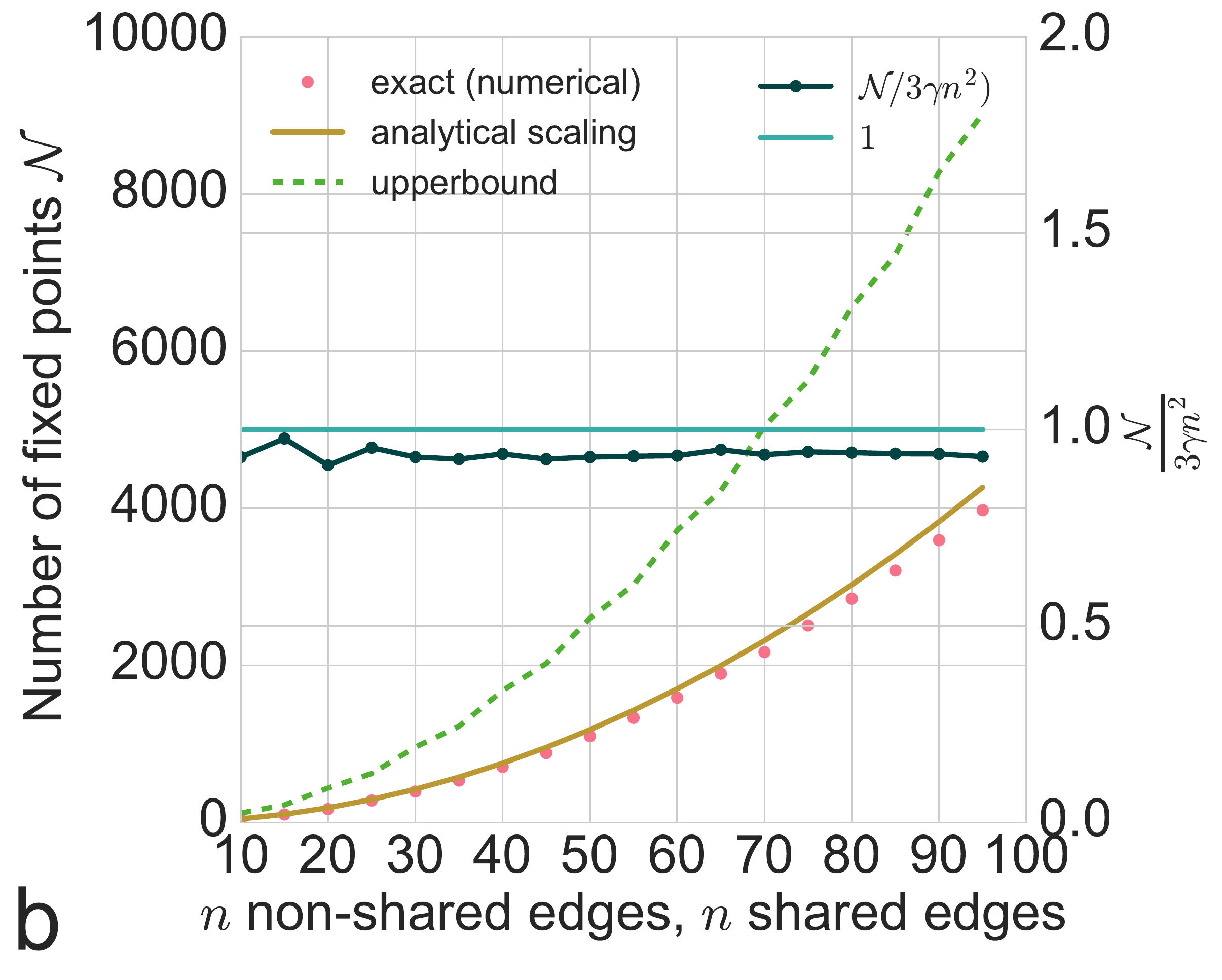}
    \caption{(a) Scaling of number of fixed points for two-cycle networks. Each cycle
        has $n+1$ edges and they share one edge between them.   (Left 
y-axis)   The 
dots show the exact number of fixed points computed numerically.  The solid 
line shows the predicted number as per scaling relation \eqref{eq:scaling_2cycle}. The dashed 
line shows the upperbound \eqref{eq:plane_upperbound}. (Right y-axis) The dot-dashed line shows 
the number of fixed points divided by $n^2+2n$ converging to a constant value, 
that is close to the analytically predicted value $\gamma = 0.1576$, as per 
equation \eqref{eq:scaling_2cycle}. (b) Same as (a), but for networks where each cycle
        has $2n$ edges and they share $n$ edges between them. }
\label{fig:scaling_2cycle}
\end{center}
\end{figure}

To evaluate the accuracy of \eqref{eq:scaling_2cycle}, we apply it 
to two special cases.  First, we consider networks with $n = n_1 = n_2, n_{12} = 1$, i.e. two identical cycles 
sharing only one single edge. In this case \eqref{eq:scaling_2cycle} becomes
\begin{align}
    \label{eq:scaling_nn1}
    \N(n, n, 1) & \approx (n^2 + 2n)\gamma.  
\end{align}
Second, we consider networks with $n = n_1 = n_2, n_{12} = n$, i.e. two identical cycles 
sharing half of their edges. In this case \eqref{eq:scaling_2cycle} becomes
\begin{align}
    \label{eq:scaling_nn1}
    \N(n, n, n) & \approx 3 \gamma n^2.  
\end{align}

We see in Figure 
\ref{fig:scaling_2cycle} that in both these two cases, the scaling relations are quite accurate even for not very large 
network sizes, such as $n=50$.

\subsubsection{General planar graphs}
The scaling results for two-cycle networks can be extended to general planar graphs, to this end we define a 
few quantities. 

\begin{figure}
\begin{center}
    \includegraphics[width=0.4\textwidth]{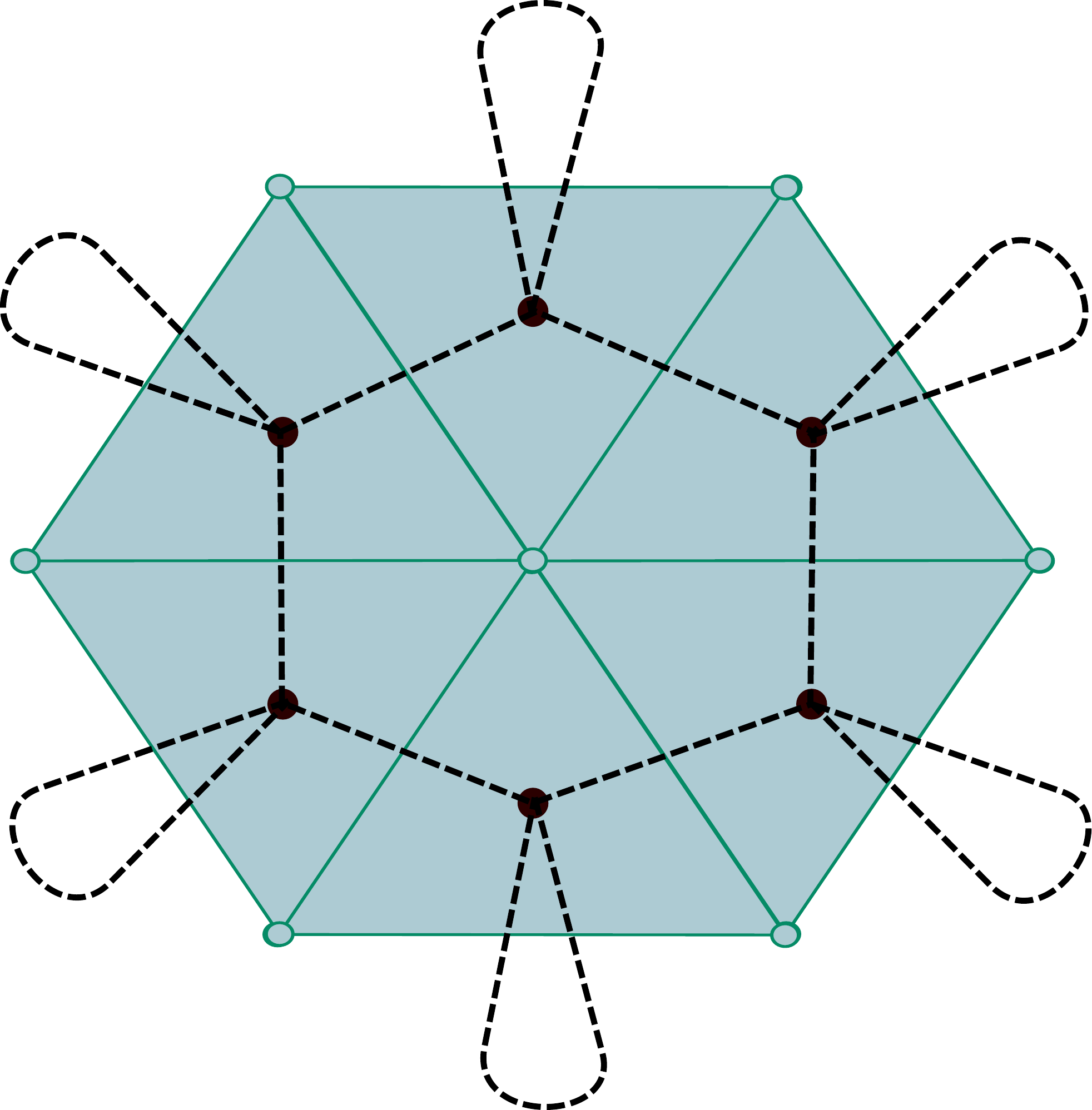}
    \caption{A planar graph (solid edges, unfilled circular nodes) and its loopy 
    dual (dashed edges, filled circular nodes) corresponding to this specific 
    embedding.  }
    \label{fig:dualgraph}
\end{center}
\end{figure}

\begin{defn}[Loopy dual graph]
    Given a planar graph $G(V, E)$ and an embedding, we choose a cycle basis
$B_C$ consisting of the faces of the embedding. The loopy dual graph 
$\DualGraph{G}$ is a undirected multigraph whose vertex set is equal to $B_C$. 
Its edge set $E'$ is as follows.  For each edge $e\in E$, if it is 
shared between two cycles $c_1$ and $c_2$, then an edge between $c$ and $c'$ 
is added to $E'$. If $e$ is at the boundary and belongs to only one 
cycle $c$, then a self loop is added at node $c$.  We illustrate this definition in Figure 
\ref{fig:dualgraph}. 
\end{defn}

Now, consider a planar graph and an arbitrarily chosen fixed point with
flows $F_e$.  Let's denote by $\LoopyLapl$ the loopy laplacian of its meta 
graph, as defined in Definition \ref{def:mata-graph}.

Then equation \eqref{eq:N_afterlimit} generalizes to 
\begin{align}
    \N & \approx \frac{1}{(2K\pi)^{L-N+1}}\int_{\cfdomlim} df_1 df_2\cdots df_{L-N+1} \det{\LoopyLapl}\\
    \cfdomlim & :=\{(f_1,f_2,\cdots f_{L-N+1}): |f_i|\le K, |f_i-f_j|\le K 
\text{ if cycles  }i,j  \text{ share an edge}\}.\nn
\end{align}

\section{Unstable fixed points}
\label{sec:unstable}

In principle, we can generalize the cycle flow approach to find fixed points which do not satify the normal operation condition, too. These fixed points are typically linearly unstable (cf.~the discussion in \cite{epjst14}). 
However, most of the results on the number of fixed points cannot be generalized to this case. As an instructive example, consider again a homgeneous ring as in section \ref{eqn:sec-cycle}. We label the nodes as $1,2,\ldots,N$ along the cycle and assume that $N$ is an integer multiple of $4$. All nodes have a vanishing power injection $P_j \equiv 0$ and all links have an equal strength $K$ as before. Then it is easy to see that 
\be
   \vec \theta^* = (0,\delta,\pi,\pi+\delta,2\pi, 2\pi+\delta,3\pi,\ldots)^T
\ee
is a fixed point of the dynamics for each value of $\delta \in [0,\pi)$. This class of fixed points represents a pure cycle flow,
\begin{align}
   F_{j,j+1} &= K \sin(\theta_{j+1} - \theta_j) = K \sin(\delta) 
\end{align}
for all edges $(j,j+1)$. The winding number is $\varpi = N/4$ independent of the value of $\delta$ and the edges belong alternately to $E_+$ and $E_-$:
\begin{align}
   E_+ &= \big\{ (1,2); (3,4); (5,6); \ldots \big\}  \nn \\
   E_- &= \big\{ (2,3); (4,5); (6,7); \ldots \big\} .
\end{align}

This simple example shows that two main assumptions made for the normal operation fixed points 
(where $E_- = \emptyset$) do not longer hold:  First, the set of fixed points is no longer discrete. 
Instead we find a continuum of solutions parametrized by the real number $\delta$.
Second, different fixed points yield the same winding number. Thus we cannot obtain the 
number of fixed points by counting winding numbers in general.

\section{Calculating fixed points}

The cycle flow approach yields a convenient method to calculate multiple fixed points for oscillator networks. Generally, it is hard to make sure that a numerical algorithm yields all solutions for a nonlinear algebraic equation. However, we have shown that the winding numbers are unique at least for normal operation fixed points in planar networks. Thus we can scan the allowed values of the winding numbers and try to find a corresponding solution. This can be done by starting from an arbitrary solution of the dynamical condition and adding cycle flows until we obtain the desired winding numbers.

\begin{figure}[!tb]
\centering
\includegraphics[width=0.8\columnwidth]{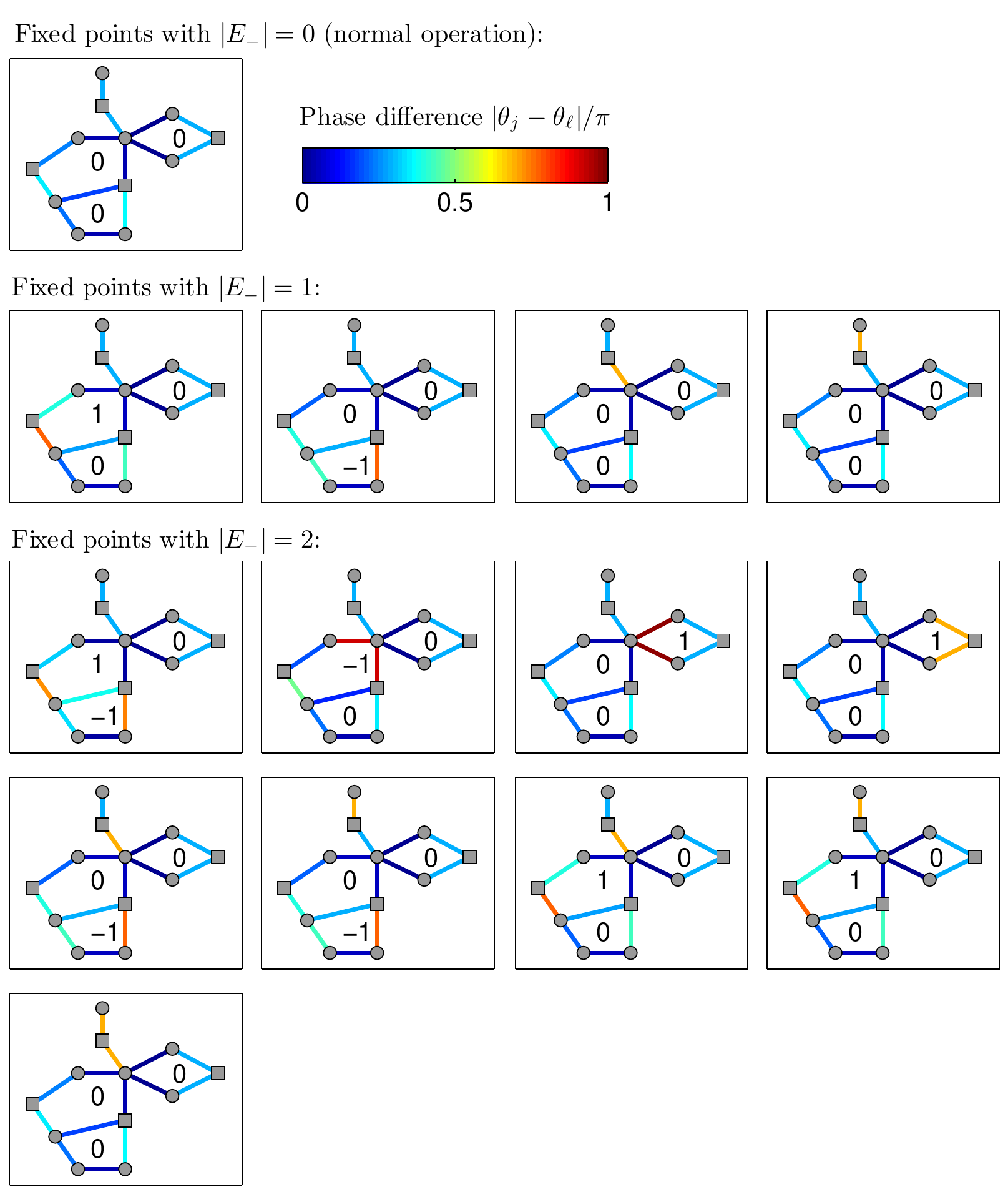}
\caption{\label{fig:moritznet}
All fixed points with $|E_-|\le 2$ in a network with three cycles calculated using the algorithm described in the main text. The winding number of each cycle is displayed.
Squares represent generators with $P = +2P_0$, circles consumers with $P=-P_0$. All links have a coupling strength of $K = 24/19 \times P_0$.
}
\end{figure}

In particular, we can calculate all fixed points in normal operation for a planar network using 
the following algorithm:
\begin{enumerate}
\item Find a solution $F^{(0)}$ of the dyanmic condition.
\item Fix a plane embedding and a cycle basis.
\item Vary the number $z_c$ in the interval $[-\frac{N_c}{4}, \frac{N_c}{4}]$, for all 
cycles $c=1,\ldots,L-N+1$. 
\item Try to solve the set of equation 
\be
    \varpi_c(\vec f) = z_c  \qquad \text{ for all } \; c = 1,\ldots,L-N+1 
\ee    
where the winding numbers are given by equation (\ref{eq:def-winding}).
\end{enumerate}

Dropping the assumption of a normal operation, we loose the the guarantee of 
uniqueness as discussed in section \ref{sec:unstable}. Nevertheless the method can be 
readily adapted to find \emph{most} of the unstable fixed points, at least if 
the number $|E_{-}|$ is small. This can 
be very useful, as a systematic calculation of such fixed points is 
generally not straightforward. The results can be applied, among other 
things, to assess the global stability of a stable fixed point by analyzing the 
stability boundary \cite{menck2013basin,chiang1987foundations} or the stability in the presence of stochastic 
fluctuations \cite{moritz16escape}.
In particular, we must add another step to the algorithm to loop over all possible sets $E_-$:
\begin{enumerate}[label=4.\alph*.]
\item Vary $k=0,\ldots,L$. Then sample all $k$-tuples from the edge set $E$ to define the set $E_-$.
\item Vary the number $z_c$ in the interval $[-\frac{N_c}{4}, \frac{N_c}{4}]$, for all 
cycles $c=1,\ldots,L-N+1$. 
\end{enumerate}

The output of this algorithm is shown in Figure \ref{fig:moritznet} for a small test network 
and $|E_-| \le 2$. For this small network we have only $L-N+1=3$ fundamental 
cycles of which one is decoupled. Hence we can graphically check that we 
hvae obtained \emph{all} fixed points.

\section{Discussion}

Oscillator networks are ubiquitous in nature and technology.
A lot of research in statistical physics starting from Kuramoto's seminal work \cite{Kura75} 
has been devoted to the onset of partial synchronization in large networks. However, in
some applications global synchronization is required. In particular in electrical power grids
all generators have to run with exactly the same frequency and have to be strictly 
phase-locked to enable stable power flows to the customers. A desynchronization
generally has catastrophic consequences. An example is provided by the European
power blackout in November 2006. Following a shutdown of one transmission line
and unsuccessful attempts to restore stable operation, the European grid fragmented
in three mutually asynchronous clusters \cite{UCTE07}. In the end more than 10 million customers 
were cut from the power supply.

In this article we have analyzed the existence of stable fixed points in finite oscillator 
networks. The main methodological advancement is to split the calculation into two parts:
First, we calculate the flows which satisfy the continuity equation at all nodes. Then we
single out the the specific solution which leads to consistent phases of the oscillators. 
We thus move the focus of the calculation from the nodes (phases) to the edges
(flows) and cycles. An immediate consequence is that several fixed points can 
co-exist which differ by cycle flows. Thus oscillator networks are in general multistable. 

For networks containing a single cycle we have obtained upper and lower 
bounds for the number of fixed points in terms of three structural quantities:
the maximal partial net power $\bar P_{\rm max}$, which measures the homogeneity 
of the power injections or natural frequencies, respectively, and the maximum and 
minimum edge strength along the cycle. We find that generally the number of stable 
fixed point is particularly large if (a) the cycle is long, (b) the edge strength are 
large and (c) the power sources are distributed homogeneously. However, the example
discussed in section \ref{sec:braess}
 shows that extreme care has to be taken for special network topologies.
Increasing the strength of the wrong edge can also decrease the number of
fixed points.   
Finding bounds for the number of stable fixed points in general network topologies
is much more involved. Results have been obtained for planar networks, but the bounds
are much weaker as for networks with single cycles.
Interestingly,  both  tree networks and fully connected networks have at most one stable
fixed points. However, networks with intermediate sparsity, which is most 
realistic for electrical power grids, may exhibit multistability.  

Several aspects of multistability have been previously discussed in the literature. 
Multistability in isolated rings was discussed in \cite{Ocha09}. The limits 
(\ref{eq:Nring1}) were derived and the basins of attraction of the different fixed points
was studied numerically.
The case of a densely connected graph was analyzed by Taylor in \cite{Tayl12}.
He was able to show that there is at most one stable fixed point if the node
degree is at least $0.9395 \times (N-1)$.  
Mehta et al. investigate multistability in complex networks numerically using a 
similar approach as the present paper \cite{Meht14}. They argue that the number of fixed
points scales with the number of cycles as each cycle can accommodate cycle flows.
While this is valid for many graphs, there are counterexamples (Figure 
\ref{fig:lowerbound-nosol}). Delabays et al \cite{delabays15multistab} have recently reported their 
treatment of multistability using cycle flows. They have extended the upper bounds for fixed points in single 
rings by \cite{Ocha09} to include also those stable fixed points with phase differences
along edges $>\pi/2$.  They have also derived  upper bounds \cite{delabays16} for number of 
fixed points for planar graphs in case of uniform power injections at all 
nodes.  Xi et al. \cite{xi2016nonlinstab} have numerically shown that spatical heterogeneity
of power injections $P_j$ reduces the number of fixed points, which fits with our analytical
result in Corollary \ref{corr:hom-ring}.   Intriguingly, they have also found that
in heterogeneous ring topologies, nonlinear stability of fixed points decrease 
with the ring size $N$. 

In this work, we have obtained a \emph{lower bound} for the number of 
fixed 
points, and thereby provided a \emph{sufficient condition} for existence of 
multistability.  Furthermore, we have shown that the 
length of the cycles $N_c$ and the homogeneity $\bar P_{\rm max}$ are equally 
important for multistability and thereby arrived at tighter bounds for the number 
of fixed points than Ochab and Gora \cite{Ocha09} and Delabays et al 
\cite{delabays15multistab}. Morever, we have derived scaling laws at the limit of
infinite transmission strengths that are much tighter than the upper bound 
results previously reported.  We have shown the derived scaling behaviour to 
match numerically computed exact results for moderately sized networks.  

Interestingly, Our results show that a previous highly recognized result presented by 
Jadbabaie et al in \cite{Jadb04} is incorrect. The authors claim that
for any network of Kuramoto oscillators with different natural frequencies, 
there exists a $K_{u}$ such that for $K>K_u$ there is only one stable fixed point.
This claim is disproven by the examples presented in section (\ref{eqn:sec-cycle}) 
as well as by the rigorous results on the existence of multiple fixed points in 
corollary \ref{corr:hom-ring}.
The error in the proof of \cite{Jadb04} is rather technical. The authors
define a function $\vec L$ such that the defining equation of a fixed point 
(\ref{eqn:def-steady-state}) can be rewritten in the form
\be
\vec{\theta}^*  = \vec{L}(\vec{\theta}^*) .
    \label{eq:banach}
\ee
Jadbabaie et al. then claim that $\vec{L}$ is a contraction on the subset 
of $\vec{\theta}$ such that $|\theta_i - \theta_j|<\pi/2$ for all edges $(i,j)$, which we called
normal operation. Banach's contraction theorem then yields that the 
algebraic equation (\ref{eq:banach}) has a unique fixed point. The problem 
is that the range of $\vec L(\theta)$ is generally \emph{not} a subset of the subspace
of normal operation, even if the domain is. After applying $\vec L$, some phase differences can get 
out of the interval $[-\pi/2, \pi/2]$. Thus Banach's contraction theorem cannot be applied, which
spoils the proof.

\section{Conclusion}
In summary, taking cycle flows as a basis of flow patterns we analyzed existence
and stability of phase locked states in networks of Kuramoto oscillators and second
order phase oscillators modeling the phase dynamics of electric power grids.
We demonstrated that such systems exhibit multistability. Intriguingly, multistability
prevails even under conditions where unique stable operating points were believed to
exist in both a power engineering text book and a major complex network reference on
Kuramoto oscillators \cite{Jadb04, Mach08}. For classes of network topologies, we have established
necessary and sufficient conditions for multistability, derived lower and upper bounds
for the number of fixed points. We explained why generalizing those bounds for arbitrary
topologies is hard. Nevertheless, we have derived asymptotic scaling laws at large loop limit that has 
been found to match closely numerically obtained exact results.  

\section{Acknowledgements}
We gratefully acknowledge support from the Federal Ministry of Education and 
Research (BMBF grant no.~03SF0472A-E), the Helmholtz 
Association (via the joint initiative `Energy System 2050 -- A Contribution
of the Research Field Energy' and the grant no.~VH-NG-1025 to D.W.) and the
Max Planck Society to M.T. The works of D.M. is supported by the IMPRS
Physics of Biological and Complex Systems, G\"ottingen.


\bibliography{frustration}

\end{document}